\newif\ifanon\anonfalse
\newcommand{\optappendix}[1]{#1}
\newcommand\blfootnote[1]{%
  \begingroup
  \renewcommand\thefootnote{}\footnote{#1}%
  \addtocounter{footnote}{-1}%
  \endgroup
}
\newcommand{\tikzwrapper}[3]{%
\includegraphics{tikzplots/#1}}
\algnewcommand{\LineComment}[1]{\State \(\triangleright\) #1}
\newcommand\doubleplus{+\kern-1.3ex+\kern0.8ex}
\newcommand{\Holds}{\mathit{Holds}}
\newcommand{\HoldsSink}{\mathit{HoldsSink}}
\newcommand{\labelGraph}{\mathit{labGr}}
\newcommand{\labelNode}{\mathit{labelNode}}
\newcommand{\labeling}{\ensuremath{\lambda}}
\newcommand{\ecl}{\mathit{ecl}}
\newcommand{\succs}{\mathit{succ}}
\newcommand{\AP}{\mathit{AP}}
\newcommand{\traces}{\mathit{traces}}
\newcommand{\follows}{\mathit{follows}}
\newcommand{\HoldsS}{\mathit{Holds_0}}
\newcommand{\relabel}{\mathit{relbl}}
\newcommand{\ancestors}{\mathit{ancestors}}
\newcommand{\comment}[1]{}
\newcommand{\swUpdate}{\ensuremath{\mathit{swUpdate}}}
\newcommand{\NetPolicy}{\ensuremath{\mathit{N}}}
\newcommand{\wait}{\ensuremath{\mathit{wait}}}
\newcommand{\flush}{\ensuremath{\mathit{flush}}}
\newcommand{\epoch}{\ensuremath{\mathit{incr}}}
\newcommand{\Switch}{\ensuremath{s}}
\newcommand{\Epoch}{\ensuremath{\mathit{ep}}}
\newcommand{\Step}[3]{\ensuremath{{#1\overset{#2}{\longrightarrow}#3}}}
\newcommand{\modelCheck}{\ensuremath{\mathit{modelCheck}}}
\newcommand{\incrModelCheck}{\ensuremath{\mathit{incrModelCheck}}}
\newcommand{\checkInitStates}{\ensuremath{\mathit{checkInitStates}}}
\newcommand{\coloneq}{\ensuremath{\mathord{::=}}}
\newcommand{\loc}{\ensuremath{\mathit{loc}}}
\newcommand{\sw}{\ensuremath{\mathit{sw}}}
\newcommand{\pt}{\ensuremath{\mathit{pt}}}
\newcommand{\host}{\ensuremath{\mathit{h}}}
\newcommand{\pkts}{\ensuremath{\mathit{pkts}}}
\newcommand{\pkt}{\ensuremath{\mathit{pkt}}}
\newcommand{\pair}{\ensuremath{\mathit{pr}}}
\newcommand{\pairs}{\ensuremath{\mathit{prs}}}
\newcommand{\cmd}{\ensuremath{\mathit{cmd}}}
\newcommand{\cmds}{\ensuremath{\mathit{cmds}}}
\newcommand{\steps}[1]{\ensuremath{\xrightarrow{#1}}}
\newcommand{\listplus}{\ensuremath{\mathord{@}}}
\newcommand{\onelist}[1]{\ensuremath{[#1]}}
\newcommand{\bagplus}{\ensuremath{\uplus}}
\newcommand{\onebag}[1]{\ensuremath{\{#1\}}}
\newcommand{\denot}[1]{\ensuremath{[\![#1]\!]}}
\newcommand{\pri}{\ensuremath{\mathit{pri}}}
\newcommand{\pat}{\ensuremath{\mathit{pat}}}
\newcommand{\acts}{\ensuremath{\mathit{acts}}}
\newcommand{\act}{\ensuremath{\mathit{act}}}
\newcommand{\rul}{\ensuremath{\mathit{rul}}}
\newcommand{\fld}{\ensuremath{\mathit{f}}}
\newcommand{\ruls}{\ensuremath{\mathit{ruls}}}
\newcommand{\tbl}{\ensuremath{\mathit{tbl}}}
\newcommand{\elts}{\ensuremath{\mathit{Es}}}
\newcommand{\fwd}{\ensuremath{\mathit{fwd}}}
\newcommand{\set}[2]{\ensuremath{#1 \mathord{:=} #2}}
\newcommand{\andalso}{\quad\;}
\newcommand{\Traces}[1]{\ensuremath{\mathcal{T}(#1)}}
\newcommand{\TracesAll}[1]{\ensuremath{\bar{\mathcal{T}}(#1)}}
\newcommand{\Trace}{\ensuremath{\mathcal{T}}}
\newcommand{\Kripke}[1]{\ensuremath{\mathcal{K}(#1)}}
\newcommand{\cons}{\ensuremath{\mathord{::}}}
\definecolor{lightred}{RGB}{255,128,128}
\definecolor{lightblue}{RGB}{128,128,255}
\definecolor{lightgreen}{RGB}{128,255,128}
\definecolor{lightorange}{RGB}{255,204,128}
\definecolor{darkorange}{RGB}{255,132,0}
\definecolor{darkgreen}{RGB}{0,102,0}
\newcommand{\optnewpage}{}
\newcommand{\budget}[1]{}
\newtheorem{theorem}{Theorem}
\newtheorem{lemma}{Lemma}
\newtheorem{corollary}{Corollary}
\newtheorem{definition}{Definition}
\algnewcommand{\Process}[1]{\textbf{Process}\space {\textsc{#1}}}
\algnewcommand{\Assume}[1]{\textbf{assume}\space $(${#1}$)$}
\algnewcommand{\Assert}[1]{\textbf{assert}\space $(${#1}$)$}
\algnewcommand{\LineIf}[2]{\State \algorithmicif\ {#1}\ \algorithmicthen\ {#2}}
\algnewcommand{\StatexIndent}[1][3]{%
  \setlength\@tempdima{\algorithmicindent}%
  \Statex\hskip\dimexpr#1\@tempdima\relax}
\tikzset{
  >=stealth,
  every loop/.style={min distance=10mm,in=225,out=315,looseness=5},
  every picture/.style={
    draw=black!25,
    very thick,
  },
  switch/.style={
    circle,
    fill=black!40,
    draw=black!50,
    minimum size=10pt,
    inner sep=1pt},
  kripke/.style={
    circle,
    fill=black!40,
    draw=black!50,
    minimum size=12pt,
    inner sep=1pt},
  myblack/.style={
    draw=black!75,
    fill=black!50},
  myempty/.style={
    draw=black!80,
    fill=green!15},
  myempty2/.style={
    draw=black!80,
    fill=blue!15},
  dottedblack/.style={
    dotted,draw=black!75,
    fill=black!50},
  myblue/.style={
    fill=blue!5,
    draw=blue!20 },
  myred/.style={
    fill=black!25!red!60,
    draw=black!50!red!80},
  mycloud/.style={
    draw=white,
  }
}
\algnewcommand\algorithmicparfor{\textbf{atomic}}
\algnewcommand\algorithmicendparfor{\textbf{atomic}}
\newcommand{\spacehack}[1]{}
\begin{document}

\setlength{\pdfpageheight}{\paperheight}
\setlength{\pdfpagewidth}{\paperwidth}

\conferenceinfo{PLDI'15}{June 13--17, 2015, Portland, OR, USA}
\CopyrightYear{2015}
\crdata{978-1-4503-3468-6/15/06}
\doi{2737924.2737980}

\makeatletter
\def\@ivtitleauthors#1#2#3#4{%
  \if \@andp{\@emptyargp{#2}}{\@emptyargp{#3}}%
    \noindent \@setauthor{40pc}{#1}{\@false}\par
  \else\if \@emptyargp{#3}%
    \noindent \@setauthor{17pc}{#1}{\@false}\hspace{3pc}%
              \@setauthor{17pc}{#2}{\@false}\par
  \else\if \@emptyargp{#4}%
    \noindent \@setauthor{17pc}{#1}{\@false}\hspace{3pc}%
              \@setauthor{17pc}{#3}{\@false}\par
  \else
    \noindent \@setauthor{9.3333pc}{#1}{\@false}\hspace{1.5pc}%
              \@setauthor{9.3333pc}{#2}{\@false}\hspace{1.5pc}%
              \@setauthor{9.3333pc}{#3}{\@false}\hspace{1.5pc}%
              \@setauthor{9.3333pc}{#4}{\@true}\par
    \relax
  \fi\fi\fi
  \vspace{10pt}} 

\def \@maketitle {%
  \begin{center}
  \@settitlebanner
  \let \thanks = \titlenote
  {\leftskip = 0pt plus 0.25\linewidth
   \rightskip = 0pt plus 0.25 \linewidth
   \parfillskip = 0pt
   \spaceskip = .7em
   \noindent \LARGE \bfseries \@titletext \par}
  \vskip 6pt
  \noindent \Large \@subtitletext \par
  \vskip 22pt 
  \ifcase \@authorcount
    \@latex@error{No authors were specified for this paper}{}\or
    \@titleauthors{i}{}{}\or
    \@titleauthors{i}{ii}{}\or
    \@titleauthors{i}{ii}{iii}\or
    \@ivtitleauthors{i}{ii}{iii}{iv}\or
    \@titleauthors{i}{ii}{iii}\@titleauthors{iv}{v}{}\or
    \@titleauthors{i}{ii}{iii}\@titleauthors{iv}{v}{vi}\or
    \@titleauthors{i}{ii}{iii}\@titleauthors{iv}{v}{vi}%
                  \@titleauthors{vii}{}{}\or
    \@titleauthors{i}{ii}{iii}\@titleauthors{iv}{v}{vi}%
                  \@titleauthors{vii}{viii}{}\or
    \@titleauthors{i}{ii}{iii}\@titleauthors{iv}{v}{vi}%
                  \@titleauthors{vii}{viii}{ix}\or
    \@titleauthors{i}{ii}{iii}\@titleauthors{iv}{v}{vi}%
                  \@titleauthors{vii}{viii}{ix}\@titleauthors{x}{}{}\or
    \@titleauthors{i}{ii}{iii}\@titleauthors{iv}{v}{vi}%
                  \@titleauthors{vii}{viii}{ix}\@titleauthors{x}{xi}{}\or
    \@titleauthors{i}{ii}{iii}\@titleauthors{iv}{v}{vi}%
                  \@titleauthors{vii}{viii}{ix}\@titleauthors{x}{xi}{xii}%
  \else
    \@latex@error{Cannot handle more than 12 authors}{}%
  \fi
  \vspace{1.75pc}
  \end{center}}
\makeatother

\def\titletext{Efficient Synthesis of Network Updates}
\title{\titletext}
\preprintfooter{\titletext}


\authorinfo{Jedidiah McClurg}
           {CU Boulder}
           {\small jedidiah.mcclurg@colorado.edu}
\authorinfo{Hossein Hojjat}
           {Cornell University}
           {\small hojjat@cornell.edu}
\authorinfo{Pavol {\v C}ern\'y}
           {CU Boulder}
           {\small pavol.cerny@colorado.edu}
\authorinfo{Nate Foster}
           {Cornell University}
           {\small jnfoster@cs.cornell.edu}

\maketitle

\newcommand{\naive}{na\"{\i}ve}

\begin{abstract}
\noindent Software-defined networking (SDN) is revolutionizing 
the networking industry, but current SDN programming platforms do not
provide automated mechanisms for updating global configurations on the
fly. Implementing updates by hand is challenging for SDN programmers
because networks are distributed systems with hundreds or thousands of
interacting nodes. Even if initial and final configurations are
correct, \naive{}ly updating individual nodes can lead to incorrect
transient behaviors, including loops, black holes, and access control
violations. This paper presents an approach for
automatically synthesizing updates that are guaranteed to preserve
specified properties. We formalize network updates as a distributed
programming problem and develop a synthesis algorithm based on
counterexample-guided search and incremental model checking. 
We describe a prototype
implementation, and present results from experiments on real-world
topologies and properties demonstrating that our tool scales to
updates involving over one-thousand nodes.
\end{abstract}

\category{D.2.4}{Software Engineering}{Software/Program Verification}[Formal methods]
\category{D.2.4}{Software Engineering}{Software/Program Verification}[Model checking]
\category{F.3.1}{Logics and Meanings of Programs}{Specifying and Verifying and Reasoning about Programs}[Logics of programs]
\category{F.4.1}{Mathematical Logic and Formal Languages}{Mathematical Logic}[Temporal logic]
\category{C.2.3}{Computer-communication Networks}{Network Operations}[Network Management]


\keywords
synthesis, verification, model checking, LTL,
network updates, software-defined networking, SDN


\optnewpage
\section{Introduction \budget{2}}

\begin{figure*}[t]
\begin{minipage}{0.5\linewidth}
\centering
\tikzwrapper{example}{2in}{
\begin{tikzpicture}[scale=.8]

\coordinate (C-C1) at (2,0);
\coordinate (C-C2) at (6,0);

\coordinate (C-A1) at (1,-1);
\coordinate (C-A2) at (3,-1);
\coordinate (C-A3) at (5,-1);
\coordinate (C-A4) at (7,-1);

\coordinate (C-T5) at (1,-2);
\coordinate (C-T6) at (3,-2);
\coordinate (C-T7) at (5,-2);
\coordinate (C-T8) at (7,-2);

\coordinate (C-H1) at (1,-2.75);
\coordinate (C-H2) at (3,-2.75);
\coordinate (C-H3) at (5,-2.75);
\coordinate (C-H4) at (7,-2.75);

\coordinate (C-S3) at (1,-1);
\coordinate (C-S4) at (2,2);
\coordinate (C-S5) at (2,0);
\coordinate (C-S6) at (2,-2);

\coordinate (C-CW) at (-1.7,0);
\coordinate (C-CC) at (4,0);
\coordinate (C-D) at (3.4,-2);

\node[switch,myblue] (S-C1) at (C-C1) {{\textbf{\footnotesize C1}}};
\node[switch,myblue] (S-C2) at (C-C2) {{\textbf{\footnotesize C2}}};

\node[switch,myblue] (S-A1) at (C-A1) {{\textbf{\footnotesize A1}}};
\node[switch,myblue] (S-A2) at (C-A2) {{\textbf{\footnotesize A2}}};
\node[switch,myblue] (S-A3) at (C-A3) {{\textbf{\footnotesize A3}}};
\node[switch,myblue] (S-A4) at (C-A4) {{\textbf{\footnotesize A4}}};

\node[switch,myblue] (S-T5) at (C-T5) {{\textbf{\footnotesize T1}}};
\node[switch,myblue] (S-T6) at (C-T6) {{\textbf{\footnotesize T2}}};
\node[switch,myblue] (S-T7) at (C-T7) {{\textbf{\footnotesize T3}}};
\node[switch,myblue] (S-T8) at (C-T8) {{\textbf{\footnotesize T4}}};

\node[mycloud] (S-H1) at (C-H1) {{\textbf{\footnotesize H1}}};
\node[mycloud] (S-H2) at (C-H2) {{\textbf{\footnotesize H2}}};
\node[mycloud] (S-H3) at (C-H3) {{\textbf{\footnotesize H3}}};
\node[mycloud] (S-H4) at (C-H4) {{\textbf{\footnotesize H4}}};

\draw[black!15,loosely dotted] (S-A1) edge (S-C1);
\draw[black!15,loosely dotted] (S-A2) edge (S-C1);
\draw[black!15,loosely dotted] (S-A3) edge (S-C1);
\draw[black!15,loosely dotted] (S-A4) edge (S-C1);

\draw[black!15,loosely dotted] (S-A1) edge (S-C2);
\draw[black!15,loosely dotted] (S-A2) edge (S-C2);
\draw[black!15,loosely dotted] (S-A3) edge (S-C2);
\draw[black!15,loosely dotted] (S-A4) edge (S-C2);

\draw[black!15,loosely dotted] (S-T5) edge (S-A1);
\draw[black!15,loosely dotted] (S-T6) edge (S-A2);
\draw[black!15,loosely dotted] (S-T7) edge (S-A3);
\draw[black!15,loosely dotted] (S-T8) edge (S-A4);

\draw[black!15,loosely dotted] (S-T5) edge (S-A2);
\draw[black!15,loosely dotted] (S-T6) edge (S-A1);
\draw[black!15,loosely dotted] (S-T7) edge (S-A4);
\draw[black!15,loosely dotted] (S-T8) edge (S-A3);

\draw[black] (S-H1) edge (S-T5);
\draw[black] (S-H2) edge (S-T6);
\draw[black] (S-H3) edge (S-T7);
\draw[black] (S-H4) edge (S-T8);

\draw[black!30!red] (S-T5.70) edge[->] (S-A1.290);
\draw[black!30!red] (S-A1) edge[->] (S-C1);
\draw[black!30!red] (S-C1) edge[->] (S-A3);
\draw[black!30!red] (S-A3.290) edge[->] (S-T7.70);

\draw[black!30!green,dashed] (S-T5.110) edge[->] (S-A1.250);
\draw[black!30!green,dashed] (S-A1) edge[->] (S-C2);
\draw[black!30!green,dashed] (S-C2) edge[->] (S-A3);
\draw[black!30!green,dashed] (S-A3.250) edge[->] (S-T7.110);

\draw[black!30!blue,loosely dashdotted] (S-T5) edge[->] (S-A2);
\draw[black!30!blue,loosely dashdotted] (S-A2) edge[->] (S-C1);
\draw[black!30!blue,loosely dashdotted] (S-C1) edge[->] (S-A4);
\draw[black!30!blue,loosely dashdotted] (S-A4) edge[->] (S-T7);
\end{tikzpicture}}
\spacehack{-.5em}
\caption{Example topology.}
\spacehack{-.5em}
\label{fig:fattree}
\end{minipage}
\begin{minipage}{0.5\linewidth}
\centering
\tikzwrapper{mininet}{3.25in}{
\begin{tikzpicture}\scriptsize
\begin{axis}[
    name=mininet,
    xlabel style={align=center},
    ylabel style={align=center},
    xlabel={Time (s)\\{\bf(a)}},
    ylabel={Probes Received},
    xticklabels={X,0,2,4,6},
    yticklabels={x,0,33\%,66\%,100\%},
    legend pos = north west,
    width=1.6in, 
    height=1.3in,
    xmin=0,
    ymin=-1,
    ymax=75,
    xmax=75,
    legend style={fill=none,font=\tiny,line width=0.5pt,row sep=-1ex},
    legend cell align=left]
\addplot [mark=circle,blue,mark size=1pt] table[x index=0,y index=1, header=true,col sep=comma] {../../experiments/data/mininet_tpu.csv};
\addplot [mark=circle,red,dashed,mark size=1pt] table[x index=0,y index=1, header=true,col sep=comma] {../../experiments/data/mininet_syn.csv};
\end{axis}
\begin{axis}[
    at={($(mininet.south east)+(4.5em,0em)$)},
    ybar=.4ex,
    bar width=.75ex,
    name=mininet_overhead,
    xlabel style={align=center},
    ylabel style={align=center},
    ylabel={Rule Overhead},
    xlabel={{\bf(b)} Switch ID},
    xtick pos=left,
    ytick pos=left,
    xtick=data,
    xticklabels={T1,T2,T3,T4,A1,A2,A3,A4,C1,C2},
    yticklabels={0,0,1X,2X},
    legend pos = north west,
    width=2.1in,
    height=1.3in,
    xmin=0,
    xmax=10.5,
    ymin=0,
    ymax=2.5,
    line width=1pt,
    legend style={fill=none,font=\tiny,line width=0.5pt,row sep=-1ex},
    legend cell align=left]
\addplot [mark=circle,darkgreen,fill=darkgreen!90,mark size=1pt] table[x index=0,y index=1, header=true,col sep=comma] {../../experiments/data/mininet_overhead_tpu.csv};
\addplot [mark=circle,red,pattern=north west lines,pattern color=red,mark size=1pt] table[x index=0,y index=1, header=true,col sep=comma] {../../experiments/data/mininet_overhead_syn.csv};
\end{axis}
\end{tikzpicture}}
\spacehack{-2em}
\vspace{-1em}
\caption{Example \naive~(blue/solid-line), two-phase (green/solid-bar), and ordering (red/dashed) updates: (a) probes received; (b) per-switch rule overhead.}
\label{fig:mininet}
\spacehack{-.5em} 
\end{minipage}
\end{figure*}

Software-defined networking (SDN) is a new paradigm in which a
logically-centralized controller manages a collection
of programmable switches. The controller responds to events
such as topology changes, shifts in traffic load, or new connections
from hosts, by pushing forwarding rules to the switches, which
process packets efficiently using specialized hardware. Because the
controller has global visibility and full control over the entire
network, SDN makes it possible to implement a wide variety of network
applications ranging from basic routing to traffic
engineering, datacenter virtualization, fine-grained access control,
etc.~\cite{cacm-sdn-abstractions}. SDN has been used in
production enterprise, datacenter, and wide-area networks, and new
deployments are rapidly emerging.

Much of SDN's power stems from the controller's ability to change
the \emph{global} state of the network. Controllers can
set up end-to-end forwarding paths, provision bandwidth to optimize
utilization, or distribute access control rules to defend against
attacks. However, implementing these global changes in a running
network is not easy. Networks are complex systems with many
distributed switches, but the controller can only modify the
configuration of one switch at a time. Hence, to implement a global
change, an SDN programmer must explicitly transition the network
through a sequence of intermediate configurations to reach the
intended final configuration. The code needed to implement this
transition is tedious to write and prone to error---in general, the
intermediate configurations may exhibit new behaviors that would not
arise in the initial and final configurations.

Problems related to network updates are not unique to SDN. Traditional
distributed routing protocols also suffer from anomalies during periods of
reconvergence, including transient forwarding loops, blackholes, and
access control violations. For users, these anomalies manifest themselves as service outages, degraded performance, and broken
connections. The research community has developed techniques for
preserving certain invariants during
updates~\cite{francois2007avoidingTransient,raza2011graceful,vanbever2011seamless},
but none of them fully solves the problem, as they are limited to
specific protocols and properties. For example, \emph{consensus
routing} uses distributed snapshots to ensure connectivity, but only
applies to the Border Gateway Protocol (BGP)~\cite{john2008consensus}.

It might seem that SDN would exacerbate update-related problems by
making networks even more dynamic---in particular, most current
platforms lack mechanisms for implementing updates in a graceful
way. However, SDN offers opportunities to develop high-level
abstractions for implementing updates automatically while preserving
key invariants. The authors of B4---the controller managing Google's
world-wide inter-datacenter network---describe a vision where:
``multiple, sequenced manual operations [are] not involved [in]
virtually any management operation''~\cite{Google-B4}.

Previous work proposed the notion of a \emph{consistent
update}~\cite{reitblatt2012abstractions}, which ensures that every
packet is processed either using the initial configuration or the
final configuration but not a mixture of the two. Consistency is a
powerful guarantee preserving \emph{all} safety properties, but it is
expensive.  The only general consistent update mechanism is {\em
two-phase update}, which tags packets with versions and maintains
rules for the initial/final configurations simultaneously. This leads
to problems on switches with limited memory and can also make update
time slower due to the high degree of rule churn.

We propose an alternative. Instead of forcing SDN operators to
implement updates by hand (as is typically done today), or using
powerful but expensive mechanisms like two-phase update, we develop
an approach for synthesizing correct update programs efficiently and
automatically from formal specifications. Given initial and final
configurations and a Linear Temporal Logic (LTL) property capturing
desired invariants during the update, we either generate an SDN
program that implements the initial-to-final transition while ensuring
that the property is never violated, or fail if no such program
exists. Importantly, because the synthesized program is only required
to preserve the specified properties, it can leverage strategies that
would be ruled out in other approaches. For example, if the programmer
specifies a trivial property, the system can update switches in any
order. However, if she specifies a more complex property
(e.g. firewall traversal) then the space of possible updates is more
constrained. In practice, our synthesized programs require less memory
and communication than competing approaches.

Programming updates correctly is challenging due to the concurrency
inherent in networks---switches may interleave packet and control
message processing arbitrarily. Hence, programmers must carefully
consider all possible event orderings, inserting synchronization
primitives as needed. Our algorithm works by searching through the
space of possible sequences of individual switch updates, learning
from counterexamples and employing an incremental model checker to
re-use previously computed results.  Our model checker is {\em
incremental} in the sense that it exploits the loop-freedom of correct
network configurations to enable efficient re-checking of properties
when the model changes.
Because the synthesis algorithm
poses a series of closely-related model checking questions, the
incrementality yields enormous performance gains
on real-world update scenarios.

We have implemented the algorithm and heuristics to further
speed up synthesis and eliminate spurious synchronization. We have
interfaced the tool with Frenetic~\cite{frenetic}, synthesized updates
for OpenFlow switches, and used our system to process actual traffic
generated by end-hosts. We ran experiments on a suite of real-world
topologies, configurations, and properties---our results demonstrate
the effectiveness of synthesis, which scales to over one-thousand switches,
and incremental model checking, which outperforms a popular symbolic
model checker used in {\it batch} mode, and a state-of-the-art network
model checker used in incremental mode.

In summary, the main contributions of this paper are:
\setlength{\pltopsep}{0.25em}
\setlength{\plitemsep}{0.35em}
\begin{compactitem}
\item We investigate using synthesis to automatically generate
  network updates (\S \ref{overview}).
\item We develop a simple operational model of SDN and formalize 
  the network update problem precisely (\S \ref{model}).
\item We design a counterexample-guided search algorithm that solves  
  instances of the network update problem, and prove this algorithm to
  be correct (\S \ref{synthesis}).
\item We present an incremental LTL model checker for loop-free models
  (\S \ref{checking}).
\item We describe an OCaml implementation with backends to third-party model 
  checkers and conduct experiments on real-world networks and
  properties, demonstrating strong performance improvements
  (\S \ref{evaluation}).~$\dagger$
\end{compactitem}
Overall, our work takes a challenging network programming problem and
automates it, yielding a powerful tool for building dynamic SDN
applications that ensures correct, predictable, and efficient
network behavior during updates.

\section{Overview \budget{1}}
\label{overview}
\label{ex:redgreen}

\blfootnote{$\dagger$~The PLDI 2015 Artifact Evaluation Committee (AEC) found that our tool
``met or exceeded expectations." }%
To illustrate key challenges related to network updates, consider the
network in Figure~\ref{fig:fattree}. It represents a simplified
datacenter topology~\cite{alfares08} with core switches (C1 and C2),
aggregation switches (A1 to A4), top-of-rack switches (T1 to T4), and
hosts (H1 to H4). Initially, we configure switches to forward traffic
from H1 to H3 along the solid/red path: T1-A1-C1-A3-T3. Later, we
wish to shift traffic from the red path to the dashed/green path,
T1-A1-C2-A3-T3 (perhaps to take C1 down for maintenance). To implement
this update, the operator must modify forwarding rules on switches A1
and C2, but note that certain update sequences break
connectivity---e.g., updating A1 followed by C2 causes packets to be
forwarded to C2 before it is ready to handle
them. Figure~\ref{fig:mininet}(a) demonstrates this with a simple
experiment performed using our system. Using the Mininet
network simulator and OpenFlow switches, we continuously sent ICMP
(\texttt{ping}) probes during a ``\naive'' update (blue/solid line) and
the ordering update synthesized by our tool (red/dashed line). With
the \naive~update, 100\% of the probes are lost during an interval,
while the ordering update maintains connectivity.

\paragraph*{Consistency.}
Previous work \cite{reitblatt2012abstractions} introduced the notion
of a {\em consistent update} and also developed general mechanisms for
ensuring consistency. An update is said to be consistent if every
packet is processed entirely using the initial configuration or
entirely using the final configuration, but never a mixture of the
two. For example, updating A1 followed by C2 is not consistent because
packets from H1 to H3 might be dropped instead of following the red
path or the green path. One might wonder whether preserving
consistency during updates is important, as long as the network
eventually reaches the intended configuration, since most networks only
provide best-effort packet delivery. While it is true that errors can
be masked by protocols such as TCP when packets are lost, there is
growing interest in strong guarantees about network behavior. For
example, consider a business using a firewall to protect internal
servers, and suppose that they decide to migrate their infrastructure to a
virtualized environment like Amazon EC2. To ensure that this new
deployment is secure, the business would want to maintain the same
isolation properties enforced in their home office. However, a
best-effort migration strategy that only eventually reaches the target
configuration could step through arbitrary intermediate states, some
of which may violate this property.

\paragraph*{Two-Phase Updates.} 
Previous work introduced a general consistency-preserving technique
called \emph{two-phase update}~\cite{reitblatt2012abstractions}.  The
idea is to explicitly tag packets upon ingress and use these version
tags to determine which forwarding rules to use at each
hop. Unfortunately, this has a significant cost. During the
transition, switches must maintain forwarding rules for \emph{both}
configurations, effectively doubling the memory requirements needed to
complete the update. This is not always practical in networks where
the switches store forwarding rules using ternary content-addressable
memories (TCAM), which are expensive and
power-hungry. Figure~\ref{fig:mininet}(b) shows the results of
another simple experiment where we measured the total number of rules
on each switch: with two-phase updates, several switches have twice
the number of rules compared to the synthesized ordering update. Even
worse, it takes a non-trivial amount of time to modify forwarding
rules---sometimes on the order of 10ms per rule~\cite{dionysus14}!
Hence, because two-phase updates modify a large number of rules, they
can increase update latency. These overheads can make
two-phase updates a non-starter.

\paragraph*{Ordering Updates.}
Our approach is based on the observation that consistent (two-phase)
updates are overkill in many settings. Sometimes consistency can be
achieved by simply choosing a correct order of switch updates. We call
this type of update an {\em ordering update}. For example, to update
from the red path to the green path, we can update C2 followed by
A1. Moreover, even when we cannot achieve full consistency, we can
often still obtain sufficiently strong guarantees for a specific
application by carefully updating the switches in a particular
order. To illustrate, suppose that instead of shifting traffic to the
green path, we wish to use the blue (dashed-and-dotted) path:
T1-A2-C1-A4-T3. It is impossible to transition from the red path to
the blue path by ordering switch updates without breaking consistency:
we can update A2 and A4 first, as they are unreachable in the initial
configuration, but if we update T1 followed by C1, then packets can
traverse the path T1-A2-C1-A3-T3, while if we update C1 followed by
T1, then packets can traverse the path T1-A1-C1-A4-T3. Neither of
these alternatives is allowed in a consistent update. This failure to
find a consistent update hints at a solution: if we only care about
preserving connectivity between H1 and H3, then either path is
actually acceptable. Thus, either updating C1 before T1, or T1 before
C1 would work. Hence, if we relax strict consistency and instead
provide programmers with a way to specify properties that must be
preserved across an update, then ordering updates will exist in many
situations. Recent work
\cite{wattenhoferconsistent,dionysus14} has explored ordering updates,
but only for specific properties like loop-freedom, blackhole-freedom,
drop-freedom, etc. %
Rather than handling a fixed set of ``canned'' properties, we use a
specification language that is expressive enough to encode these
properties and others, as well as conjunctions/disjunctions of
properties---e.g.  enforcing loop-freedom {\em and} service-chaining
during an update.

\paragraph*{In-flight Packets and Waits.}
Sometimes an additional synchronization primitive is needed to
generate correct ordering updates (or correct two-phase updates, for
that matter). Suppose we want to again transition from the red path to
blue one, but in addition to preserving connectivity, we want every packet
to traverse either A2 or A3 (this scenario might arise if those
switches are actually middleboxes which scrub malicious packets before
forwarding).  Now consider an update that modifies the configurations
on A2, A4, T1, C1, in that order. Between the time that we update T1
and C1, there might be some packets that are forwarded by T1 before it
is updated, and are forwarded by C1 after it is updated. These packets
would not traverse A2 or A3, and so indicate a violation of the
specification. To fix this, we can simply pause after updating T1
until any packets it previously forwarded have left the network. We
thus need a command ``wait'' that pauses the controller for a
sufficient period of time to ensure that in-flight packets have exited
the network. Hence, the correct update sequence for this example would
be as above, with a ``wait'' between T1 and C1. Note that two-phase
updates also need to wait, once per update, since we must ensure that
all in-flight packets have left the network before deleting the old
version of the rules on switches.  Other approaches have traded off
control-plane waiting for stronger consistency, e.g.
\cite{LRFS14} performs updates in ``rounds" that are analogous to
``wait" commands, and Consensus Routing~\cite{john2008consensus}
relies on timers to obtain wait-like functionality.  Note that the
single-switch update time can be on the order of
seconds~\cite{dionysus14,tango}, whereas typical datacenter transit
time (the time for a packet to traverse the network) is much lower,
even on the order of microseconds~\cite{dctcp}. Hence, waiting for
in-flight packets has a negligible overall effect. In addition, our
reachability-based heuristic eliminates most waits in practice.

\paragraph*{Summary.}
This paper presents a sound and complete algorithm and 
implementation for synthesizing a large class of ordering updates
efficiently and automatically. The updates we generate initially modify each switch at
most once and ``wait'' between updates to switches, but a heuristic
removes an overwhelming majority of unnecessary waits in practice. For
example, in switching from the red path to the blue path (while
preserving connectivity from H1 to H3, and making sure that each
packet visits either A3 or A4), our tool produces the following
sequence: update A2, then A4, then T1, then wait, then update C1. The
resulting update can be executed using the Frenetic SDN platform and
used with OpenFlow switches---e.g., we generated
Figure~\ref{fig:mininet}~(a-b) using our tool.

\optnewpage
\section{Preliminaries and Network Model \budget{2}}
\label{model}

\begin{figure*}[t]
\footnotesize
\renewcommand{\dots}{\ensuremath{..}}
\fbox{~\begin{minipage}{0.975\textwidth}
\(
\begin{array}{@{~}l@{~~}|@{~~}l@{~~}|@{~~}l@{~}}
\begin{array}{llcl}
\textit{Switch} & \sw  & \in & \mathbb{N}\\
\textit{Port} & \pt  & \in & \mathbb{N}\\
\textit{Host} & \host  & \in & \mathbb{N}\\
\textit{Priority} & \pri & \in & \mathbb{N}\\
\textit{Epoch} & \Epoch & \in & \mathbb{N}\\
\textit{Field} & \fld & \coloneq & \mathit{src} \mid \mathit{dst} \mid \mathit{typ} \mid \dots\\
\end{array} & \begin{array}{llcl}
\textit{Packet} & \pkt & \coloneq & \{ \fld_1; \dots; \fld_k \}\\
\textit{Pair} & \pair & \coloneq & (\pkt,\pt)\\
\textit{Pattern} & \pat & \coloneq & \{ \pt?; \fld_1?; \dots; \fld_k? \}\\
\textit{Action} & \act & \coloneq & \fwd~\pt \mid \set{\fld}{n} \\
\textit{Rule} & \rul & \coloneq & \{ \pri; \pat; \acts \}\\
\textit{Table} & \tbl & \coloneq & \ruls\\
\end{array} & \begin{array}{llcl}
\textit{Location} & \loc & \coloneq & \host \mid (\sw,\pt) \\
\textit{Command} & \cmd & \coloneq & (\sw, \tbl) \mid \epoch \mid \flush \\
\textit{Switch} & S & \coloneq & \{ \sw; \tbl; \pairs \}\\
\textit{Link} & L & \coloneq & \{ \loc; \pkts; \loc' \}\\
\textit{Controller} & C & \coloneq & \{ \cmds; \Epoch \}\\
\textit{Element} & E & \coloneq & S \mid L \mid C\\
\end{array}
\end{array}
\)

\smallskip
\hrule
\smallskip
\textbf{Data Plane}
\vspace*{-.75em}
\begin{mathpar}
\hspace*{-1pc}
\inferrule*[Right=In]{ 
  L.\loc = \host \andalso L.\loc' = (\sw', \pt') \andalso L.\pkts
  = \pkts \andalso
  C.\Epoch = \Epoch 
}{ 
  C, \; L
  \steps{} 
  C, \; 
  \{ L~\text{with}~\pkts = \pkt^\Epoch \cons \pkts \}
}
\and
\inferrule*[Right=Out]{ 
  L.\loc = (\sw, \pt) \andalso
  L.\loc' = \host \andalso
  L.\pkts = (\pkt^\Epoch \cons \pkts)
}{ 
  L
  \steps{(\sw,\pt,\pkt)}
  \{ L~\text{with}~\pkts = \pkts \}
}
\vspace{-0.5em} \\
\inferrule*[Right=Process]{ 
  L.loc' = (\sw,\pt) \andalso
  L.\pkts = (\pkt^\Epoch \cons \pkts) \andalso
  S.\sw = \sw \andalso
  \denot{S.\tbl}(\pkt,\pt) = \{ (\pkt_1,\pt_1),\dots,(\pkt_n,\pt_n) \} 
}{ 
  L, S
  \steps{(\sw,\pt,\pkt)} 
  \{ L~\text{with}~\pkts = \pkts \},
  \{ S~\text{with}~\pairs = S.\pairs \bagplus \{(\pkt_1^\Epoch,\pt_1),\dots,(\pkt_n^\Epoch,\pt_n)\} \}
}
\vspace{-0.5em} \\
\inferrule*[Right=Forward]{ 
  S.\sw = \sw \andalso 
  S.\pairs = \onebag{(\pkt^\Epoch,\pt)} \bagplus \pairs \andalso
  L.\loc = (\sw,\pt) \andalso
}{
  S,\; L
  \steps{}
  \{ S~\text{with}~ \pairs = \pairs \},\;
  \{ L~\text{with}~ \pkts = L.\pkts @ [\pkt^\Epoch] \} 
}   
\end{mathpar}

\hrule
\smallskip
\textbf{Control Plane and Abstract Machine}
\vspace*{-1em}

\begin{mathpar}
\inferrule*[Right=Update]{ 
  C.\cmds = ((\sw,\tbl) \cons \cmds) \andalso
  S.\sw = \sw 
}{ 
  C,\; S
  \steps{}
  \{ C~\text{with}~ \cmds = \cmds \},\;
  \{ S~\text{with}~\tbl = \tbl \}
}
\and
\inferrule*[Right=Incr]{ 
  C.\cmds = (\epoch \cons \cmds)
}{ 
  C
  \steps{}
  \{ C~\text{with}~ \cmds = \cmds; \Epoch = C.\Epoch + 1 \} 
}
\vspace{-0.5em} \\
\hspace*{-4pc}
\inferrule*[Right=Flush]{ 
  C.\cmds = (\flush \cons \cmds) \andalso
  \Epoch(S_1,\dots,S_k,\;L_1,\dots,L_m) = C.\Epoch
}{ 
  S_1,\dots,S_k,\;
  L_1,\dots,L_m,\;
  C
  \steps{}
  S_1,\dots,S_k,\;
  L_1,\dots,L_m,\;
  \{ C~\text{with}~\cmds = \cmds \}
}
\and
\inferrule*[Right=Congruence]
{ \elts_1 \steps{o} \elts_1' }
{ \elts_1 \uplus \elts_2 \steps{o} \elts_1' \uplus \elts_2 }
\end{mathpar}

\end{minipage}~}
\caption{Network model.}
\label{fig:model}
\spacehack{-.75em}
\end{figure*}

To facilitate precise reasoning about networks during updates, we
develop a formal model in the style of Chemical Abstract
Machine~\cite{berry+:chemical-machine}. This model captures key
network features using a simple operational semantics. It is
similar to the one used by \cite{verified-pldi13}, but is streamlined
to model features most relevant to updates.

\subsection{Network Model}

\paragraph*{Basic structures.}
Each \emph{switch} $\sw$, \emph{port} $\pt$, or \emph{host} $\host$ is
identified by a natural number. A \emph{packet} $\pkt$ is a record of
fields containing header values such as source and destination
address, protocol type, and so on. We write $\{
f_1; \dots; f_k \}$ for the type of packets having fields $f_i$ and
use ``dot'' notation to project fields from records. The notation $\{
r~\text{with}~f = v \}$ denotes functional update of $r.f$.

\paragraph*{Forwarding Tables.}
A switch configuration is defined in terms of forwarding rules, where
each rule has a \emph{pattern} $\pat$ specified as a record of
optional packet header fields and a port, a list of \emph{actions}
$\act$ that either forward a packet out a given port ($\fwd~pt$) or
modify a header field ($\set{f}{n}$), and a priority that
disambiguates rules with overlapping patterns.  We write $\{\pt?;
f_1?; \dots; f_k? \}$ for the type of patterns, where the question
mark denotes an option type. A set of such rules $\ruls$ forms a
forwarding table $\tbl$. The semantic function $\denot{\tbl}$ maps
packet-port pairs to multisets of such pairs, finding the
highest-priority rule whose pattern matches the packet and applying
the corresponding actions. If there are multiple matching rules with
the same priority, the function is free to pick any of them, and if
there are no matching rules, it drops the packet. The forwarding
tables collectively define the network's \emph{data plane}.

\paragraph*{Commands.}
The \emph{control plane} modifies the data plane by issuing commands
that update forwarding tables. The command $(\sw,\tbl)$ replaces the
forwarding table on switch $\sw$ with $\tbl$ (we call this a {\it
switch-granularity} update).  We model this command as an atomic
operation (it can be implemented with OpenFlow {\it
bundles}~\cite{ONFxx}). Sometimes switch granularity is too coarse to
find an update sequence, in which case one can update
individual rules ({\it rule-granularity}). Our tool supports this
finer-grained mode of operation, but since it is not conceptually
different from switch granularity,
we frame most of our
discussion in terms of {\it switch-granularity}.

To synchronize updates involving multiple switches, we include a
$\wait$ command. In the model, the controller maintains a
natural-number counter known as the current epoch $\Epoch$. Each
packet is annotated with the epoch on ingress. The control command
$\epoch$ increments the epoch so that subsequent incoming packets are
annotated with the next epoch, and $\flush$ blocks the controller until all packets
annotated with the previous epoch have exited the network. We
introduce a command $\wait$ defined as $\epoch;\flush$.  The epochs
are included in our model solely to enable reasoning. They do not need
to be implemented in a real network---all that is needed is a
mechanism for blocking the controller to allow a flush of all packets currently
in the network. For example, given a topology, one could compute a
conservative delay based on the maximum hop count, and then implement
$\wait$ by sleeping, rather than synchronizing with each switch.
Note that we implicitly assume failure-freedom and packet-forwarding fairness
of switches and links, i.e. there is an upper bound on each element's packet-processing
time.

\paragraph*{Elements.}
The elements $E$ of the network model include switches $S_i$,
links $L_j$, and a single controller element $C$,
and a \emph{network} $N$ is a tuple containing these. 
Each switch
$S_i$ is encoded as a record comprising a unique identifier $\sw$, a
table $\tbl$ of prioritized forwarding rules, and a multiset $\pairs$
of pairs $(\pkt,\pt)$ of buffered packets and the ports they should be
forwarded to respectively.
Each link $L_j$ is represented by a record
consisting of two \emph{locations} $\loc$ and $\loc'$ and a list of
queued packets $\pkts$,
where a location is either a host
or a switch-port pair.
Finally, controller $C$ is represented by a record
containing a list of commands $\cmds$ and an epoch $\Epoch$.
We assume that commands are totally-ordered.
The controller can ensure this by using OpenFlow {\it barrier} messages.

\paragraph*{Operational semantics.}
Network behavior is defined by small-step
operational rules in Figure~\ref{fig:model}.
These define interactions between subsets of elements, based on 
OpenFlow semantics~\cite{mckeown2008openflow}. States of the
model are given by multisets of elements. We write $\onebag{x}$
to denote a singleton multiset, and $m_1 \bagplus m_2$ for the union of
multisets $m_1$ and $m_2$. We write $\onelist{x}$ for a
singleton list, and $l_1 \listplus l_2$ for concatenation of $l_1$
and $l_2$. Each transition $N \steps{o} N'$ is annotated,
with $o$ being either an empty annotation, or an {\em observation} $(\sw,\pt,\pkt)$
indicating the location and packet being processed.

The first rules describe date-plane behavior.
The \textsc{In} rule admits arbitrary packets into the network from a host,
stamping them with the current controller epoch.
The \textsc{Out} rule removes a packet buffered on a link adjacent to
a host. \textsc{Process} processes a
single packet on a switch, finding the highest priority rule with
matching pattern, applying the actions of that rule to
generate a multiset of packets, and adding those packets to the
output buffer. \textsc{Forward}
moves a packet from a switch to the adjacent
link. The final rules describe control-plane behavior.
\textsc{Update} replaces the table on a
single switch. \textsc{Incr} increments the epoch 
on the controller, and \textsc{Flush} blocks the controller
until all packets in the network are annotated with {\it at least}
the current epoch ($\Epoch(\mathit{Es})$
denotes the smallest annotation on any packet in $\mathit{Es}$).
Finally, \textsc{Congruence}, allows any sub-collection of network elements to interact.

\subsection{Network Update Problem}

In order to define the network update problem, we need to first define
{\it traces} of packets flowing through the network.

\paragraph*{Packet traces.}
Given a network $N$, our
operational rules can generate sequences of observations. However,
the network can process many packets concurrently, and we want
observations generated by a single packet.
We define a successor relation $\sqsubseteq$ for observations (Definition~\ref{def:succ},
Appendix~\ref{appnet}).
Intuitively $o \overset{\Epoch}{\sqsubseteq} o'$ if the
network can directly produce the packet in $o'$ by
processing $o$ in the epoch $\Epoch$.

\begin{definition}[Single-Packet Trace]
\label{defsingle}
Let $N$ be a network. A sequence $(o_1 \cdots o_l)$ is
  a \emph{single-packet trace of $N$} if \(
  N \steps{o_1'} \dots \steps{o_k'} N_k \) such that $(o_1 \cdots
  o_l)$ is a subsequence of $(o_1' \cdots o_k')$ for which
\setlength{\pltopsep}{0.1em}
\setlength{\plitemsep}{0.0em}
\begin{compactitem}
\item every
  observation is a successor of the preceding observation in
  monotonically increasing epochs, and
\item if $o_1=o_j'=(\sw,\pt,\pkt)$,
then $\exists o_i' \in \{o_1', \cdots, o_{j-1}'\}$ such that
the $o_i'$ transition is an {\sc In} moving $\pkt$ from host to $(\sw,\pt)$ and
none of $o_{i}', \cdots, o_{j-1}'$ is a predecessor of $o_1$, and
\item the $o_l$ transition is an {\sc Out} terminating at a host.
\end{compactitem}
\end{definition}

\noindent Intuitively, single-packet traces are end-to-end paths through the network.  
We write $\Traces{N}$ for the set of single-packet traces generated by
$N$. A trace $(o_1 \cdots o_k)$ is \emph{loop-free} if $o_i \neq o_j$
for all distinct $i$ and $j$ between $1$ and $k$. We
consider only loop-free traces, since a network that forwards packets around
a loop is generally considered to be misconfigured. In the worst case,
forwarding loops can cause a packet storm, wasting bandwidth and degrading
performance. Our tool automatically detects/rejects such configurations.

\paragraph*{LTL formulas.} 

Many important network properties can be understood by reasoning about
the traces that packets can take through the network. For example,
reachability requires that all packets starting at $src$ eventually
reach $dst$. Temporal logics are an expressive and well-studied
language for specifying such trace-based properties. Hence, we use
Linear Temporal Logic (LTL) to describe traces in our network
model. Let $\AP$ be atomic propositions that test the value of
a switch, port, or packet field: $f_i = n$. We call elements of the set
$2^{\AP}$ {\em traffic classes}. Intuitively, each
traffic class $T$ identifies a set of packets that agree on the values
of particular header fields. An LTL formula $\varphi$ in negation
normal form (NNF) is either $true$, $false$, atomic proposition $p$ in
$\AP$, negated proposition $\neg p$, disjunction
$\varphi_1 \vee \varphi_2$, conjunction $\varphi_2 \wedge \varphi_2$,
next $X \varphi$, until $\varphi_1 U \varphi_2$, or release $\varphi_1
R \varphi_2$, where $\varphi_1$ and $\varphi_2$ are LTL formulas in
NNF. The operators $F$ and $G$ can be defined using other
connectives. Since (finite) single-packet traces can be viewed as
infinite sequences of packet observations where the final observation
repeats indefinitely, the semantics of the LTL formulas can be defined
in a standard way over traces.  We write $t \models \varphi$ to
indicate that the single-packet trace $t$ satisfies the formula
$\varphi$ and $\Trace \models
\varphi$ to indicate that $t \models \varphi$ for each $t$ in
$\Trace$. Given a network $N$ and a formula $\varphi$, we write
$N \models \varphi$ if $\Traces{N} \models \varphi$.

\paragraph*{Problem Statement.} 

Recall that our network
model includes commands for updating a single switch, incrementing the
epoch, and waiting until all packets in the preceding epoch have been
flushed from the network. At a high-level, our goal is to identify a
sequence of commands to transition the network between configurations
without violating specified invariants.
First, we need a bit of notation. Given a network $N$, we write
$N[\sw \gets \tbl]$ for the \emph{switch update} obtained by updating
the forwarding table for switch $\sw$ to $\tbl$.
We call $N$ {\em static} if $C.cmds$ is empty.
If static networks $N_1, N_n$ have the same traces
$\Trace(N_1) = \Trace(N_n)$, then we say they are trace-equivalent,
$N_1 \simeq N_n$.

\begin{definition}[Network Update]
Let $N_1$ be a static network. A command sequence $\cmds$ induces a
sequence $N_1, \dots, N_n$ of static networks if $c_1 \cdots c_{n-1}$
are the update commands in $\cmds$, and
for each $c_i=(\sw,\tbl)$, we have $N_i[\sw \gets \tbl] \simeq N_{i+1}$.
\end{definition}

\noindent We write $\Step{N_1}{\cmds}{N_n}$ if there exists such
a sequence of static networks induced by $\cmds$ which ends with $N_n$.

We call $N$ {\em stable} if all packets in $N$ are annotated with
the same epoch.
Intuitively, a stable network is one with no in-progress update,
i.e. any preceding update command was finalized with a {\it wait}.
Consider the set of {\it unconstrained} single-packet traces
generated by removing the requirement 
that traces start at an ingress (see Definition~\ref{defsingle_all},
Appendix~\ref{appnet}). 
This includes $\Traces{N}$ as well as traces of packets
initially present in $N$. We call this $\TracesAll{N}$, and note
that for a {\em stable} network $N$, $\TracesAll{N}$ is equal to
$\Traces{N}$.

\begin{definition}[Update Correctness]
Let $N$ be a stable static network and let $\varphi$ be an LTL formula. The
command sequence $\cmds$ is {\em correct} with respect to $N$ and $\varphi$ if
$\hat{N} \models \phi$ where $\hat{N}$ is obtained from $N$ by setting
$C.\cmds = \cmds$.
\end{definition}

A {\it network configuration} is a static network which contains no packets.
\noindent We can now present the problem statement. 

\begin{definition}[Update Synthesis Problem] 
Given stable static network $N$, network configuration $N'$, and
LTL specification 
$\varphi$, construct a sequence of commands $\cmds$
such that (i) $\Step{N}{\cmds}{N''}$ where $N'' \simeq N'$, and (ii) $\cmds$ is correct
with respect to $\varphi$.
\end{definition}

\subsection{Efficiently Checking Network Properties}

To facilitate efficient checking of network properties via LTL model
checkers, we show how to model a network as a Kripke structure.

\paragraph*{Kripke structures.} 
A {\em Kripke structure} is a tuple $(Q,Q_0,\delta,\lambda)$, where
$Q$ is a finite set of states, $Q_0 \subseteq Q$ is a set of initial
states, $\delta \subseteq Q \times Q$ is a transition relation, and
$\lambda: Q \to 2^\AP$ labels each state with a set of atomic
propositions drawn from a fixed set $\AP$.
A Kripke structure is {\em complete} if every state has at least one
successor. A state $q \in Q$ is a {\em sink state} if for all states
$q'$, $\delta(q,q')$ implies that $q=q'$, and we call a Kripke
structure {\em DAG-like} if the only cycles are self-loops on
sink states. In this paper, we will consider complete and
DAG-like Kripke structures.
A {\em trace} $t$ is an infinite sequence of states, $t_0 t_1 \ldots
$ such that $\forall i \geq 0: \delta(t_i,t_{i+1})$.  Given a trace
$t$, we write $t^i$ for the suffix of $t$ starting at the $i$-th
position---i.e., $t^i = t_i t_{i+1} \ldots$. Given a set of traces
$\Trace$, we let $\Trace^i$ denote the set $\{t^i \mid t \in \Trace\}$.
Given a state $q$ of a Kripke structure $K$, let $\traces_K(q)$
be the set of traces of $K$ starting from $q$ and $\succs_K(q)$ be the
set of states defined by $q' \in \succs_K(q)$ if and only if
$\delta(q,q')$. 
We will omit the subscript $K$ when it is clear form the context. 
A Kripke structure $K=(Q,Q_0,\delta,\lambda)$ satisfies
an LTL formula $\varphi$ if for all states $q_0 \in Q_0$ we have that
$\traces(q_0) \models \varphi$.

\paragraph{Network Kripke structures.}
For every static $N$, we can generate a Kripke structure
$\Kripke{N}$ containing traces which correspond according
to an intuitive trace relation $\lesssim$
(Definition~\ref{def:netkripke},~\ref{def:traceequiv}, Appendix~\ref{appnet}). 
We currently do not reason about packet modification, 
so the Kripke structure has disjoint parts corresponding
to the traffic classes. It is straightforward to 
enable packet modification, by adding transitions
between the parts of the Kripke structure, 
but we leave this for future work.
We now show that the generated Kripke structure
faithfully encodes the network semantics.
\begin{lemma}[Network Kripke Structure Soundness]
\label{lem:kripke:sound}
Let $N$ be a static network and $K = \Kripke{N}$ a network Kripke
structure. For every single-packet trace $t$ in $\Traces{N}$ there
exists a trace $t'$ of $K$ from a start state such that $t \lesssim
t'$, and vice versa.
\end{lemma}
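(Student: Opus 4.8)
The plan is to prove both directions by establishing a step-by-step correspondence between the single-packet successor relation $\sqsubseteq$ on network observations and the transition relation $\delta$ of $\Kripke{N}$, and then lifting this correspondence to entire traces by induction on trace length. The first thing I would do is unfold the appendix definitions of the Kripke construction (Definition~\ref{def:netkripke}) and the trace relation $\lesssim$ (Definition~\ref{def:traceequiv}): since $N$ is static, its forwarding tables are fixed, so the states, initial states, and edges of $\Kripke{N}$ are determined once and for all by the topology and the semantic functions $\denot{S.\tbl}$. Within a single traffic class the states are switch-port configurations, $\delta$-edges record the forwarding behavior of exactly one packet through one hop, and $\lesssim$ aligns an observation $(\sw,\pt,\pkt)$ with the state recording that the packet sits at $(\sw,\pt)$.

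The technical core is a single-hop \emph{locality lemma}: for observations $o,o'$ belonging to the same packet, $o \sqsubseteq o'$ holds in $N$ if and only if there is a $\delta$-edge between the corresponding Kripke states. For the forward direction I would note that a successor step is witnessed by a \textsc{Process} step followed by a \textsc{Forward} step on the tracked packet, and that the multiset produced by \textsc{Process} is precisely $\denot{S.\tbl}(\pkt,\pt)$, which is exactly the information the Kripke construction uses to place its out-edges. For the converse, every $\delta$-edge was built from some rule firing, and that rule can be replayed in $N$ using \textsc{Process}/\textsc{Forward}, with \textsc{Congruence} isolating the relevant switch and link so the step is independent of whatever other packets and elements are present. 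Because $N$ is static there are no \textsc{Incr} steps, so every observation in a run carries the same epoch and the ``monotonically increasing epochs'' clause of Definition~\ref{defsingle} degenerates to a constant, removing any bookkeeping about epoch changes.

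With the locality lemma in hand, soundness follows by induction on the length of $t=(o_1\cdots o_l)\in\Traces{N}$: the initial \textsc{In} observation is matched to a start state $q_0\in Q_0$, each successor step is matched to a $\delta$-edge, and the terminal \textsc{Out} observation is matched to a sink state. Since single-packet traces are finite but Kripke traces are infinite, I would use the DAG-like property---cycles are only self-loops on sinks---to extend the finite path by repeating the sink, which is exactly the convention under which the finite trace $t$ is read as an infinite one, yielding $t\lesssim t'$. The converse direction is the symmetric construction: a Kripke trace from a start state decomposes (again by DAG-likeness) into a simple path to a sink followed by self-loops, and I would realize this path as a concrete network run by injecting one packet at the ingress with \textsc{In}, replaying each edge as a \textsc{Process}/\textsc{Forward} pair, and emitting the packet with \textsc{Out}; the induced observation subsequence is the required single-packet trace.

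I expect the locality lemma to be the main obstacle, since it is where the multiset-based, genuinely concurrent network semantics must be reconciled with the clean sequential transitions of the Kripke structure. The delicate points are justifying, via \textsc{Congruence}, that the tracked packet's hop is unaffected by concurrent activity, and verifying that the reconstructed run in the converse direction satisfies the side conditions of Definition~\ref{defsingle}---in particular that the chosen \textsc{In} step is a genuine predecessor of $o_1$ and that none of the intervening transitions is an earlier predecessor---so that the subsequence we extract is a bona fide single-packet trace rather than an accidental interleaving of several packets.
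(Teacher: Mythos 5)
Your proposal is correct and follows essentially the same route as the paper: an induction on the length of the (finite prefix of the) single-packet trace, whose inductive step is precisely your ``locality lemma''---the one-hop correspondence between the successor relation $\sqsubseteq$ (Definition~\ref{def:succ}) and the Kripke edges (Definition~\ref{def:netkripke}). The only difference is granularity: the paper gets that correspondence by directly matching the two declarative definitions (both are phrased in terms of $\denot{S.\tbl}$, links, and hosts) rather than descending to \textsc{Process}/\textsc{Forward}/\textsc{Congruence} steps as you do, and it silently elides the run-reconstruction side conditions of Definition~\ref{defsingle} in the converse direction that you rightly flag as the delicate point.
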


\noindent 
This means that
checking LTL over single-packet traces can be performed via
LTL model-checking of Kripke structures.

\paragraph*{Checking network configurations.} 
One key challenge arises because the network
is a distributed system. Packets can 
``see'' an inconsistent configuration (some switches updated, some
not), and reasoning about possible interleavings 
of commands becomes intractable in this context. We can simplify
the problem if we ensure that each packet traverses at most one switch
that was updated after the packet entered the network.
\begin{definition}[Careful Command Sequences]
A sequence of commands $(\cmd_1 \cdots \cmd_n)$ is \emph{careful} if
every pair of switch updates is separated by a $\wait$ command.
\end{definition}

\noindent In the rest of this paper, we consider careful
command sequences, and develop a sound and complete algorithm
that finds them efficiently. Section~\ref{synthesis} 
describes a technique for removing wait commands that works well
in practice, but we leave {\em optimal} wait
removal for future work. 
Recall that $\Traces{N}$ denotes the sequence of all traces that a packet could take through the network,
regardless of when the commands in $N.\cmds$ are executed. This is a
superset of the traces induced by each static $N_i$ in a solution
to the network update problem. However, if $\cmds$ is careful,
then each packet only encounters a single configuration, allowing the correctness of the sequence to be
reduced to the correctness of each $N_i$.

\begin{lemma}[Careful Correctness]
\label{lem:careful:corr}
Let $N$ be a stable network with $C.\cmds$ careful and let $\varphi$ be an LTL
formula. If $\cmds$ is careful
and $N_i
\models \phi$ for each static network in any sequence induced by
$\cmds$,
then $\cmds$ is correct with respect to $\varphi$.
\end{lemma}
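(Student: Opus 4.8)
The plan is to reduce the correctness of the dynamic network running $\cmds$ to the correctness of the individual static configurations it passes through. By the definition of Update Correctness and of $\models$, it suffices to show that every single-packet trace of $\hat{N}$ (the network $N$ with $C.\cmds = \cmds$) is also a single-packet trace of one of the induced static networks $N_1,\dots,N_n$; that is, $\Traces{\hat{N}} \subseteq \bigcup_i \Traces{N_i}$. Granting this, any $t \in \Traces{\hat{N}}$ lies in some $\Traces{N_i}$, and since $N_i \models \varphi$ by hypothesis we get $t \models \varphi$; as $t$ was arbitrary, $\hat{N} \models \varphi$, i.e.\ $\cmds$ is correct.

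The heart of the argument is an epoch-counting step showing that each packet witnesses at most one switch update during its lifetime. Because $\cmds$ is careful, every pair of switch updates is separated by a $\wait = \epoch;\flush$, so consecutive switch updates are separated by at least one $\epoch$; hence the epoch is strictly incremented between them, and each ``epoch window'' (a maximal run of commands between two $\epoch$'s) contains at most one switch update. Fix a trace $t \in \Traces{\hat{N}}$. By Definition~\ref{defsingle} its packet enters via an \textsc{In} transition, which stamps it with the epoch $a$ current at that moment. The $\flush$ that terminates window $a$ invokes the \textsc{Flush} rule, which blocks the controller until the smallest epoch annotation present in the network reaches the (already incremented) value $a+1$---that is, until every epoch-$a$ packet has left. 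Since the controller is blocked, no command ordered after this $\flush$ can fire until our packet exits; in particular, no switch update belonging to a later window executes during the packet's lifetime. Thus at most one switch update, namely the unique update $c_k$ of window $a$ (if any), can occur while the packet is in the network.

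It remains to turn ``at most one update during the lifetime'' into membership in a single $\Traces{N_i}$. Let $N_k$ and $N_{k+1}$ be the static configurations immediately before and after $c_k$; by the definition of an induced sequence they differ only at the single switch $\sw$ updated by $c_k$, and since no other update fires during the packet's lifetime, every other switch keeps one fixed table (equal in $N_k$ and $N_{k+1}$) throughout. Because switch updates are atomic and the trace is loop-free, the packet reads a single table at each switch it visits. If it crosses $\sw$ before $c_k$ fires, or never crosses $\sw$, every forwarding decision agrees with $N_k$, so $t \in \Traces{N_k}$; if it crosses $\sw$ after $c_k$, its decision at $\sw$ agrees with $N_{k+1}$ and all other decisions agree with $N_{k+1}$ as well (they coincide with $N_k$ off $\sw$), so $t \in \Traces{N_{k+1}}$. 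If no update fires during the lifetime, the packet sees a single static configuration and $t$ lies in the corresponding $\Traces{N_i}$ outright. In every case $t \in \bigcup_i \Traces{N_i}$, completing the reduction.

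I expect the main obstacle to be the epoch-counting step: carefully arguing from the \textsc{Flush} semantics (the controller blocks on the minimum in-network epoch) that a careful schedule forces each packet to overlap with at most one switch update, and doing so without assuming anything about the relative timing of the \textsc{In} transition and $c_k$ within window $a$. The subsequent case analysis is routine once atomicity of switch updates and the single-switch difference between $N_k$ and $N_{k+1}$ are in hand.
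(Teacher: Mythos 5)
Your proposal is correct and follows essentially the same route as the paper: the paper factors your middle two paragraphs into an auxiliary lemma (``Traces of a Careful Network''), showing via the $\wait = \epoch;\flush$ semantics that at most one \textsc{Update} transition can fire during a packet's lifetime, then doing the same before/after-$\sw$ case split to place each trace $t \in \Traces{\hat{N}}$ inside some $\Traces{N_i}$, and finally concluding exactly as you do. The only cosmetic difference is that the paper argues the one-update claim by contradiction (two updates in a trace force an intervening \textsc{Incr} and \textsc{Flush}, so the second cannot fire while the packet is in flight), whereas you argue it directly via epoch windows; the content is the same.
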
 

\noindent In Lemmas~\ref{lem:trace:eq} and \ref{lem:induce:seq} (Appendix~\ref{appnet}), we
show that checking the {\it unique sequence of network configurations}
induced by $\cmds$ is equivalent to the above. Next we will develop a
sound and complete algorithm that solves the update synthesis problem
for careful sequences by checking configurations.

\begin{figure}[t!]
{\footnotesize
\begin{algorithmic}[1]
\Statex{\hspace*{-5.5mm}\textbf{Procedure}~\textsc{OrderUpdate}($\NetPolicy_i,\NetPolicy_f,\varphi$)} 
\Require~Initial static network $\NetPolicy_i$, final static
configuration $\NetPolicy_f$, formula $\varphi$.    
\Ensure~update sequence $L$, or error $\epsilon$ if no update
sequence exists
  \State $W \gets false$ \Comment{Formula encoding wrong configurations.}   
  \State $V \gets false$ \Comment{Formula encoding visited configurations.} 
  \State $(ok,L) \gets \textsc{DFSforOrder}(\NetPolicy_i,\Kripke{\NetPolicy_i},\bot,\varphi,\labeling_0)$ 
  \If {$ok$} \Return $L$
  \Else~\Return $\epsilon$ \Comment{Failure---no update exists.}
  \EndIf
\end{algorithmic}
\begin{algorithmic}[1]
\makeatletter\setcounter{ALG@line}{5}\makeatother
\Statex{\hspace*{-5.5mm}\textbf{Procedure}~\textsc{DFSforOrder}($\NetPolicy$,$K$,$s$,$\varphi$,$\labeling$)} 
\Require~Static network $\NetPolicy$ and Kripke structure $K$, next
  switch to update $s$, formula $\varphi$, 
and labeling $\labeling$.
\Ensure~Boolean ok if a correct update exists; correct update sequence $L$
\If {$\NetPolicy \models V \vee W$} \Return $(\mathit{false},[\,])$
\EndIf
\If {$s = \bot$} 
$(ok,cex,\labeling) \gets \mathit{modelCheck}(K,\varphi)$ \label{line:modelcheck} 
\Else
\State $(\NetPolicy,K,S) \gets \swUpdate(\NetPolicy,s)$
\State $(ok,cex,\labeling) \gets
\mathit{incrModelCheck}(K,\varphi,S,\labeling)$ \label{line:incrmodelcheck} 
\EndIf
\State $V \gets V \vee \mathit{makeFormula}(\NetPolicy)$
\If{$\neg ok$} 
   \State $W \gets W \vee \mathit{makeFormula}(cex)$ \label{line:ctrex}
   \State \Return $(\mathit{false},[\,])$
\EndIf
\If {$\NetPolicy = \NetPolicy_f$} \Return $(\mathit{true},[s])$
\EndIf
\For {$s' \in \mathit{possibleUpdates}(\NetPolicy)$ }
\State $(ok,L) \gets \textsc{DFSforOrder}(\NetPolicy,K,s',\varphi,\labeling)$
\If {$ok$} \label{line:addcommand} \Return $(\mathit{true},(\mathit{upd}\ s')::\mathit{wait}::L)$
\EndIf
\EndFor 
\State \Return $(\mathit{false},[\,])$
\end{algorithmic}
}
\caption{\textsc{OrderUpdate} Algorithm.}
\label{algo:order}
\spacehack{-.75em}
\end{figure}

\optnewpage
\section{Update Synthesis Algorithm \budget{2}}
\label{synthesis}

This section presents a synthesis algorithm that searches through
the space of possible solutions, using counterexamples to detect wrong
configurations and exploiting several optimizations.

\subsection{Algorithm Description}
\textsc{OrderUpdate} (Figure~\ref{algo:order})
returns a {\it simple} sequence of updates
(one in which each switch appears at most once),
or fails if no such sequence exists.
Note that we could broaden our {\it simple} definition,
e.g. {\it k-simple}, where each switch appears at most $k$ times%
, but we have found the above
restriction to work well in practice.
The core procedure is
\textsc{DFSforOrder}, which manages the search and invokes the model
checker (we use DFS because we
expect common properties/configurations to admit many update sequences).
It attempts to add a switch $s$ to the
current update sequence, yielding a new network configuration. We
maintain two formulas, $V$ and $W$, tracking the set of
configurations that have been visited so far, and the set of
configurations excluded by counterexamples. 

To check whether all packet traces in this configuration satisfy the
LTL property $\varphi$, we use our (incremental) model checking
algorithm (discussed in Section~\ref{checking}). 
First, we call
a full check of the model (line~\ref{line:modelcheck}). 
The model checker labels the Kripke structure nodes with
information about what formulas hold for paths
starting at that state. 
The labeling (stored in \labeling) is then re-used in the subsequent
model checking 
calls for related Kripke structures (line~\ref{line:incrmodelcheck}). 
The parameters passed in the incremental model checking call are:
updated Kripke structure $K$, specification $\varphi$,
set of nodes $S$ in $K$ whose transition function has changed
by the update of the switch $s$, and correct
labeling $\labeling$ of the Kripke structure before the update. 
Note that before the initial model checking, we convert the
network configuration $\NetPolicy$ to a Kripke structure $K$. 
The update of $K$ is performed by a
function $\swUpdate$ that returns a triple
$(\NetPolicy',S,K')$, 
where 
$\NetPolicy'$
is the new static network, $K'$ is the updated Kripke structure
obtained as $\Kripke{\NetPolicy'}$, and $S$ is the set of nodes that
have different outgoing transitions in $K'$.

If the model checker returns {\em true}, then $\NetPolicy$ is safe and
the search proceeds recursively, after adding
$(\mathit{upd}\ s')$ to the current sequence of commands. 
If the model checker returns {\em false}, 
the search backtracks, using the counterexample-learning approach below.

\subsection{Optimizations}

We now present optimizations improving synthesis
({\em pruning with counterexamples},  {\em early search
termination}), and improving efficiency of
synthesized updates ({\em wait removal}).

\paragraph*{A. Counterexamples.}
Counterexample-based pruning learns network configurations that do not
satisfy the specification to avoid making future model checking calls
that are certain to fail. The function $\mathit{makeFormula(cex)}$
(Line~\ref{line:ctrex}) returns a formula representing the set of
switches that occurred in the counterexample trace $cex$, with flags
indicating whether each switch was updated. This allows
equivalent future configurations to be eliminated without
invoking the model checker. Recall the red-green example in
Section~\ref{ex:redgreen} and suppose that we update A1 and then
C2. At the intermediate configuration obtained by updating just A1,
packets will be dropped at C2, and the specification
(H1-H3 connectivity) will not be satisfied. The formula for the unsafe
set of configurations that have A1 updated and C2 not updated will be
added to $W$.  In practice, many counterexamples are small compared to
network size, and this greatly prunes the search space. 

\paragraph*{B. Early search termination.}
The early search termination optimization speeds up 
termination of the search when no (switch-granularity) update sequence is possible. 
Recall how we use counterexamples to prune 
{\it configurations}. With similar reasoning, we can use counterexamples
for pruning possible {\it sequences of updates}. Consider a counterexample
trace which involves three nodes $A,B,C$, with $A$ 
updated, $B$ 
updated, and $C$ not updated. This can be seen as requiring
that $C$ must be updated before $A$, or $C$ must be updated
before $B$. Early search termination involves collecting such
constraints on possible updates, and terminating if these constraints
taken together form a contradiction. In our tool, this is done
efficiently using an (incremental) SAT solver. If the solver determines
that no update sequence is possible, the search terminates. 
For simplicity, early search termination is not shown in Figure \ref{algo:order}.

\paragraph*{C. Wait removal.}
This heuristic eliminates waits that are unnecessary for
correctness. 
Consider an update sequence $L = \cmd_0 \cmd_1 \cdots \cmd_n$, and consider some
switch update $\cmd_k = (upd\ s)$. In the configuration resulting from
executing the sequence $\cmd_0 \cmd_1 \cdots \cmd_{k-1}$,
if the switch $s$ cannot possibly receive a packet which passed through
some switch $s_0$ before an update $\cmd_j{=}(upd\ s_0)$ where $j < k$, then we can update $s$
without waiting. Thus, we can remove some
unnecessary waits if we can 
maintain reachability-between-switches information during the update. Wait removal is
not shown in Figure \ref{algo:order}, but in our tool, it
operates as a post-processing pass once an update sequence is found. In practice, this 
removes a majority of unnecessary waits (see \S~\ref{evaluation}).   

\subsection{Formal Properties}
The following two theorems show that our algorithm is sound for
careful updates, and complete if we limit our search to 
{\it simple} update sequences (see Appendix \ref{app:dfs} for proofs).

\begin{theorem}[Soundness]
\label{thm:sound}
Given initial network $\NetPolicy_i$, final configuration
$\NetPolicy_f$, and LTL formula $\varphi$, if \textsc{OrderUpdate}
returns a command sequence $\cmds$, then
$\Step{\NetPolicy_i}{\cmds}{\NetPolicy'}$ s.t. $\NetPolicy' \simeq \NetPolicy_f$, and $\cmds$ is correct
with respect to $\varphi$ and $\NetPolicy_i$.
\end{theorem}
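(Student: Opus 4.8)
The plan is to prove soundness by induction on the recursive call structure of \textsc{DFSforOrder}, carrying an invariant that every network reached along a successful branch has been model-checked safe and is reachable from $\NetPolicy_i$ by a careful update sequence, and then to discharge the correctness conjunct using Lemma~\ref{lem:careful:corr}. Note that \textsc{OrderUpdate} returns a non-error value $\cmds$ exactly when the top-level \textsc{DFSforOrder} call returns $(\mathit{true},\cmds)$, so it suffices to reason about successful returns of \textsc{DFSforOrder}. I would first fix the meaning of a call: \textsc{DFSforOrder}($\NetPolicy,K,s,\varphi,\labeling$) operates on the network $\NetPolicy^\ast$ obtained by $\swUpdate(\NetPolicy,s)$ when $s\neq\bot$ (and $\NetPolicy^\ast=\NetPolicy$ in the initial $s=\bot$ case), where by construction $\swUpdate(\NetPolicy,s)=\NetPolicy[s\gets\tbl]$ installs $s$'s table from $\NetPolicy_f$ together with its Kripke structure $\Kripke{\NetPolicy^\ast}$.

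The invariant I would establish is: if the call returns $(\mathit{true},L)$, then (a) $L$ is a careful command sequence whose $\mathit{upd}$ commands, applied in order starting from $\NetPolicy^\ast$, induce a sequence of static networks ending in $\NetPolicy_f$; and (b) every static network in that induced sequence, $\NetPolicy^\ast$ included, satisfies $\varphi$. The base case is the branch $\NetPolicy^\ast=\NetPolicy_f$: the induced sequence is the single network $\NetPolicy_f$, which trivially ends at $\NetPolicy_f$, and the model-check performed at this level already established $\NetPolicy^\ast\models\varphi$. For the inductive step, a successful return arises when the loop finds some $s'$ for which the recursive call on $\Kripke{\NetPolicy^\ast}$ with next-switch $s'$ returns $(\mathit{true},L')$; the assembled sequence prepends $(\mathit{upd}\ s')$ and a $\wait$ to $L'$. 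Prepending an update and a $\wait$ preserves carefulness given that $L'$ is careful, and the induced sequence is $\NetPolicy^\ast$ followed by the induced sequence of $L'$ from $\swUpdate(\NetPolicy^\ast,s')$, which by the induction hypothesis ends in $\NetPolicy_f$ with every member satisfying $\varphi$; the additional member $\NetPolicy^\ast$ satisfies $\varphi$ by the model-check at this level.

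The two external facts the argument rests on are the soundness of the model checker of Section~\ref{checking}---namely that $\mathit{modelCheck}$ and $\mathit{incrModelCheck}$ report $ok=\mathit{true}$ only when $K\models\varphi$---together with Lemma~\ref{lem:kripke:sound}, which transfers $\Kripke{\NetPolicy^\ast}\models\varphi$ back to $\NetPolicy^\ast\models\varphi$ over single-packet traces. With the invariant in hand, the top-level call ($s=\bot$, $\NetPolicy^\ast=\NetPolicy_i$) returning $(\mathit{true},\cmds)$ yields (a) a careful sequence inducing static networks from $\NetPolicy_i$ to $\NetPolicy_f$, hence $\Step{\NetPolicy_i}{\cmds}{\NetPolicy'}$ with $\NetPolicy'=\NetPolicy_f\simeq\NetPolicy_f$, and (b) $N_i\models\varphi$ for every induced static network $N_i$. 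Correctness of $\cmds$ with respect to $\varphi$ and $\NetPolicy_i$ then follows immediately from Lemma~\ref{lem:careful:corr}.

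I expect the main obstacle to be the bookkeeping linking the recursion to the definitions of \emph{induces} and $\Step{\cdot}{\cdot}{\cdot}$: one must verify that each $\swUpdate$ installs the final-configuration table for the selected switch (so each step matches $N_i[\sw\gets\tbl]\simeq N_{i+1}$), that the $\wait$ commands inserted by the algorithm are exactly the separators demanded by the \emph{careful} definition, that the assembled list is precisely the sequence of $\mathit{upd}$ commands---separated by $\wait$s---for the switches updated along the successful branch (care is needed at the leaf where the final update is recorded), and that the chain of $\swUpdate$-computed networks coincides up to $\simeq$ with the unique induced configuration sequence of $\cmds$, which is the content of Lemmas~\ref{lem:trace:eq} and \ref{lem:induce:seq}. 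Since the model checker's soundness is a forward reference to Section~\ref{checking}, I would isolate it as the single explicitly assumed property and keep the remainder of the argument self-contained.
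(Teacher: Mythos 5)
Your proposal is correct and follows essentially the same route as the paper's proof: establish that the returned sequence induces a chain of static networks from $\NetPolicy_i$ to one trace-equivalent to $\NetPolicy_f$, observe that every induced configuration satisfies $\varphi$ because the algorithm only recurses past configurations the model checker has certified, and conclude with Lemma~\ref{lem:careful:corr}. The only difference is one of rigor, not of approach: where the paper writes ``it is easy to show'' and leaves the model checker's soundness and the Kripke-to-network transfer (Lemma~\ref{lem:kripke:sound}, Corollary~\ref{cor:modcheck}) implicit, you make the induction on the recursion explicit and isolate those dependencies---a faithful elaboration of the same argument.
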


\begin{theorem}[Completeness]
\label{prop:complete}
Given initial network $\NetPolicy_i$, final configuration
$\NetPolicy_f$, and specification $\varphi$, if there exists a
simple, careful sequence $\cmds$ with
$\Step{\NetPolicy_i}{\cmds}{\NetPolicy'}$ s.t. $\NetPolicy' \simeq \NetPolicy_f$, then
\textsc{OrderUpdate} returns one such sequence.
\end{theorem}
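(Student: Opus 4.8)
The plan is to reduce completeness to reachability in a finite acyclic graph of configurations and then show that neither form of pruning in \textsc{DFSforOrder} can hide a solution. Because every command the algorithm appends is a single switch update followed by a \wait{} (line~\ref{line:addcommand}) and \textsc{possibleUpdates} never re-proposes an already-updated switch, the configurations the search visits are exactly the monotone ``subset'' configurations: each is determined by the set of switches updated so far, with an edge updating one further switch. This makes the reachable space a finite DAG rooted at $\NetPolicy_i$, and guarantees the output is always simple and careful. I read the hypothesis as asserting that a simple, careful \emph{solution} exists (both conditions of the Update Synthesis Problem), matching the guarantee of Theorem~\ref{thm:sound}. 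Using the correctness characterization (Lemma~\ref{lem:careful:corr} together with Lemmas~\ref{lem:trace:eq},~\ref{lem:induce:seq}, under which a careful sequence is correct iff each induced configuration satisfies $\varphi$), the configurations $\NetPolicy_i = P_0, P_1, \dots, P_m$ induced by the hypothesized solution, with $P_m \simeq \NetPolicy_f$, all satisfy $\varphi$. So a correct path exists in the DAG, and the goal becomes showing that \textsc{DFSforOrder}, started at $P_0$, reaches a configuration trace-equivalent to $\NetPolicy_f$.

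Next I would isolate the two pruning-soundness invariants the search relies on. (I1) \emph{$W$ never excludes a correct configuration}: whenever $\NetPolicy \models W$ we have $\NetPolicy \not\models \varphi$. The formula added at line~\ref{line:ctrex} records only the update-status of the switches appearing in the counterexample trace $cex$; the crux is a locality property of single-packet traces, namely that the observations along a trace depend only on the forwarding tables of the switches the trace traverses (the \textsc{Process} rule consults only the current switch's table). Hence any configuration agreeing with $\NetPolicy$ on those switches reproduces $cex$, and by Lemma~\ref{lem:kripke:sound} this $cex$ is a genuine witness to $\neg\varphi$ there. (I2) \emph{$V$ never hides an unexplored solution}: whenever a call is cut at the $\NetPolicy \models V$ test, the subtree rooted at $\NetPolicy$ was already explored to completion and returned $\mathit{false}$. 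This uses acyclicity: since updates only add switches, no descendant can reach an ancestor, so an in-progress (on-stack) configuration is never re-encountered; thus every $V$-pruned configuration comes from a call that already finished, and by DFS ordering its entire subtree was searched.

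The two invariants then combine via induction on \emph{distance to goal}. Call a configuration $N$ \emph{good} if $N \models \varphi$ and some correct simple, careful path leads from $N$ to a configuration trace-equivalent to $\NetPolicy_f$; by the above, $\NetPolicy_i$ is good. I would prove: once a DFS call is made on a good $N$ and is not cut by the $V$ test, it returns $\mathit{true}$, by induction on the length of a shortest correct path from $N$ to the goal. The base case $N \simeq \NetPolicy_f$ returns $\mathit{true}$ at the $\NetPolicy = \NetPolicy_f$ test (identifying configurations up to $\simeq$). In the inductive step $N$ has a good successor $N'$ of strictly smaller distance; since $N \models \varphi$ the call survives the model check and reaches the \textsc{possibleUpdates} loop, where $N'$ is a candidate. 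The call on $N'$ cannot be $V$-pruned: by (I2) that would mean $N'$ was already fully processed, but the first non-pruned call on $N'$ is good of smaller distance, so by the induction hypothesis it already returned $\mathit{true}$ and the whole search would have terminated before reaching this loop. Hence the call on $N'$ runs and, by the induction hypothesis, returns $\mathit{true}$, so $N$ returns $\mathit{true}$. Applying this to the initial call on $\NetPolicy_i$ (not $V$-pruned, as $V$ is initially $\mathit{false}$) shows \textsc{OrderUpdate} returns some $L$, and Theorem~\ref{thm:sound} certifies that $L$ is a genuine simple, careful solution with $\Step{\NetPolicy_i}{L}{\NetPolicy''}$ and $\NetPolicy'' \simeq \NetPolicy_f$.

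I expect the main obstacle to be invariant (I1): making counterexample-generalization precise requires the locality lemma for single-packet traces and a careful use of the network-to-Kripke correspondence (Lemma~\ref{lem:kripke:sound}) to argue that a trace witnessing $\neg\varphi$ in one configuration witnesses it in every configuration agreeing on the traversed switches. The $V$-memoization argument (I2), though it needs the acyclicity observation and an induction ordered by DFS completion, is otherwise standard DFS-on-a-DAG reasoning; the only subtle point there is that the visited marker is set \emph{before} the children are explored, which is sound precisely because the monotone configuration graph has no back-edges.
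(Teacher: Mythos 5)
Your proposal is correct, and it is considerably more rigorous than what the paper itself provides: the paper's entire ``proof'' of Theorem~\ref{prop:complete} is the one-sentence observation (Appendix~\ref{app:dfs}) that \textsc{OrderUpdate} searches through all simple and careful sequences---the theorem is restated there with no proof body at all. What that observation glosses over is precisely what you identified as the real content: exhaustive search is complete only if the two pruning mechanisms never discard a viable solution. Your invariant (I1) (a configuration satisfying $W$ genuinely violates $\varphi$, argued via locality of single-packet traces and Lemma~\ref{lem:kripke:sound}) and invariant (I2) (a configuration satisfying $V$ heads a subtree already searched to completion, argued via acyclicity of the monotone configuration DAG) are exactly the obligations the paper leaves implicit, and your induction on distance-to-goal assembles them into a genuine completeness argument. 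Two further points in your favor: your charitable reading of the hypothesis (that the hypothesized sequence is also correct with respect to $\varphi$) is in fact \emph{necessary}---under the literal reading the hypothesis is always satisfiable (update all differing switches in any order, separated by \wait{} commands), which would make the theorem false whenever no correct update exists---and your handling of the on-stack subtlety in (I2) (the visited marker is added to $V$ \emph{before} the children are explored, which is sound only because the configuration graph has no back-edges) addresses a gap that naive ``the search is exhaustive'' reasoning never confronts. In short, your route buys a rigorous discharge of the pruning-soundness obligations at the cost of length; the paper's route buys brevity by assertion.
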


\optnewpage
\section{Incremental Model Checking \budget{4}}
\label{checking}

We now present an incremental algorithm for model checking Kripke
structures. This algorithm is central to our synthesis tool, which
invokes the model checker on many closely related structures.  The
algorithm makes use of the fact that the only cycles in the Kripke
structure are self-loops on sink nodes---something that is true of
structures encoding loop-free network configurations---and re-labels
the states of a previously-labeled Kripke structure
with the (possibly different) formulas that hold after an update.

\subsection{State Labeling}
We begin with an algorithm for labeling states of a
Kripke structure with sets of formulas, following the approach
of~\cite{WVS83} (WVS) and~\cite{VW86}. The WVS algorithm translates an
LTL formula $\varphi$ 
into a local automaton and an eventuality automaton. The local
automaton checks consistency between a state and its predecessor, and
handles labeling of all formulas except $\varphi_1\ U\ \varphi_2$,
which is checked by the eventuality automaton.  The two automata are
composed into a single B\"uchi automaton whose states
correspond to subsets of the set of subformulas of $\varphi$ and their
negations.
Hence, we label each Kripke state by a set $L$ of
sets of formulas such that if a state $q$ is labeled by $L$, then for
each set of formulas $S$ in $L$, there exists a trace $t$ starting
from $q$ satisfying 
all the formulas in $S$. 

We now describe state labeling precisely. Let  
$\varphi$ be an LTL formula in NNF. 
The \emph{extended closure} of 
$\varphi$, written $\ecl(\varphi)$, is the set of all subformulas of
$\varphi$ and their negations:
\setlength{\pltopsep}{0.25em}
\setlength{\plitemsep}{0.0em}
\begin{compactitem}
\item $true \in \ecl(\varphi)$ 
\item $\varphi \in \ecl(\varphi)$
\item If $\psi \in \ecl(\varphi)$, then $\neg \psi \in
  \ecl(\varphi)$ \\ (we identify $\psi$ with $\neg \neg \psi$, for
  all $\psi$). 
\item If $\varphi_1 \vee \varphi_2 \in \ecl(\varphi)$, then $\varphi_1
  \in \ecl(\varphi)$ and  
  $\varphi_2 \in \ecl(\varphi)$. 
\item If $\varphi_1 \wedge \varphi_2 \in \ecl(\varphi)$, then
  $\varphi_1 \in \ecl(\varphi)$ and $\varphi_2 \in \ecl(\varphi)$. 
\item If $X\, \varphi_1 \in \ecl(\varphi)$, then $\varphi_1 \in
  \ecl(\varphi)$. 
\item If $\varphi_1\, U\, \varphi_2 \in \ecl(\varphi)$, then $\varphi_1
  \in \ecl(\varphi)$ and $\varphi_2 \in \ecl(\varphi)$ 
\item If $\varphi_1\, R\, \varphi_2 \in \ecl(\varphi)$, then $\varphi_1
  \in \ecl(\varphi)$ and $\varphi_2 \in \ecl(\varphi)$. 
\end{compactitem}
A subset $M \subset \ecl(\varphi)$ of 
the extended closure is said to be \emph{maximally consistent} if it
contains $true$ and is simultaneously closed and consistent under
boolean operations:
\begin{compactitem}
\item $true \in M$
\item $\psi \in M$ iff $\neg \psi \not\in M$
  (we identify $\psi$ with $\neg \neg \psi$, for all $\psi$) 
\item $\varphi_1 \vee \varphi_2 \in M$ iff ($\varphi_1 \in M$ or 
  $\varphi_2 \in M$) 
\item $\varphi_1 \wedge \varphi_2 \in M$ iff 
  ($\varphi_1 \in M$ and $\varphi_2 \in M$)
\end{compactitem}
Likewise, the relation 
$\follows(M_1, M_2)$ captures the notion of successor induced by LTL's
temporal operators, lifted to maximally-consistent sets.
We say $\follows(M_1, M_2)$ holds if and only if
all of the following hold:
\begin{compactitem}
\item $X\ \varphi_1 \in M_1$ iff $\varphi_1 \in M_2$ 
\item $\varphi_1\, U\, \varphi_2 \in M_1$ iff \big($\varphi_2
  \in M_1$ $\lor$ ($\varphi_1 \in M_1$ $\land$ $\varphi_1\, U\, \varphi_2 \in
  M_2$)\big)
\item $\varphi_1\, R\, \varphi_2 \in M_1$ iff \big($\varphi_1
  \in M_1$ $\lor$ ($\varphi_2 \in M_1$ $\land$ $\varphi_1\, R\, \varphi_2 \in
  M_2$)\big)
\end{compactitem}
Given a trace $t$ and
a maximally-consistent set $M$, we write $t \models M$ if and only if
for all $\psi \in M$, we have $t \models \psi$.

For the rest of this section, we fix a Kripke structure $K =
(Q,Q_0,\delta,\lambda)$, a state $q$ in $Q$, an LTL formula $\varphi$
in NNF, and a maximally-consistent set $M \subset \ecl(\varphi)$.

To compute the label of a state $q$, there are two cases depending on
whether it is a sink state or a non-sink state. If $q$ is a sink
state, the function $\HoldsSink(q,M)$ computes a predicate that is
true if and only if, for all $\psi \in M$ and the unique trace $t$
starting from $q$, we have $t \models \psi$. More formally,
$\HoldsSink(q, M)$ is defined to be $(\forall \psi \in M:
\HoldsS(q,\psi))$, where $\HoldsS$ is defined as
in Figure \ref{holds_0}.
\begin{figure}[tb]
\begin{displaymath}
\begin{array}{rcl}
 \HoldsS(q,p) & = & q \models p \\
 \HoldsS(q,\lnot p) & = & q \not\models p \\
 \HoldsS(q,\phi_1 \land \phi_2) & = & \HoldsS(q,\phi_1) \land \HoldsS(q,\phi_2) \\
 \HoldsS(q,\phi_1 \lor \phi_2) & = & \HoldsS(q, \phi_1) \lor \HoldsS(q, \phi_2) \\
 \HoldsS(q,\textrm{X}\phi) & = & \HoldsS(q, \phi) \\
 \HoldsS(q,\phi_1 \textrm{ U } \phi_2) & = & \HoldsS(q, \phi_2) \\
 \HoldsS(q,\phi_1 \textrm{ R } \phi_2) & = & \HoldsS(q,\phi_1) \lor \HoldsS(q,\phi_2)
\end{array}
\end{displaymath}
\vspace{-1.0em}
\caption{The $\HoldsS$ function}
\label{holds_0}
\end{figure}
The function $\HoldsS$ computes a predicate that
is true if and only if $\psi$ 
holds at $q$. For example, $\HoldsS(q,\phi_1 \textrm{ U } \phi_2)$ is
defined as $\HoldsS(q, \phi_2)$ because the only transition from $q$
is a self-loop.

For the second case, suppose $q$ is a non-sink state. If we are given
a labeling for $\succs_K(q)$ (the successors of the node $q$), we can
extend it to a labeling 
for $q$. Let $V \subseteq Q$ be a set of vertices. A function
$\labelGraph_K$ is a \emph{correct labeling of $K$ with respect to
$\varphi$ and $V$} if for every $v \in V$, it returns a set $L$ of
maximally consistent sets such that (a) $M \in L$ if and only if
$M \subset \ecl(\varphi)$, and (b) there exists a trace $t$ in
$\traces(v)$ such that $t \models M$. Suppose that $\labelGraph_K$ is a correct
labeling of $K$ with respect to $\varphi$ and $\succs_K(q)$. The
function $\Holds_K(q,M,\labelGraph_K)$ computes a predicate that is
true if and only if there exists a trace $t$ in $\traces_K(q)$ with
$t \models M$. Formally, $\Holds_K(q, M,\labelGraph_K)$  
is defined as $(\lambda(q) = (\AP \cap M))~\wedge~\exists
q' \in \succs_K(q), M' \in \labelGraph_K(q'): \follows(M,M')$. 

The following captures the correctness of labeling:
\begin{lemma}\label{prop:label1state}
First, $\HoldsSink(q,M) \Leftrightarrow \exists t \in \traces(q):
t \models M$ for sink states $q$. Second, if $\labelGraph_K$ is a
correct labeling with respect to $\varphi$ and $\succs_K(q)$,
then \( \Holds_K(q,M,\labelGraph_K) \Longleftrightarrow \exists
t \in \traces_K(q): t \models M \). 
\end{lemma}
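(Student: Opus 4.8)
The plan is to prove both statements by structural induction on the formula $\psi$ ranging over $\ecl(\varphi)$, handling the sink and non-sink cases separately according to the two displayed definitions of $\HoldsSink$ and $\Holds_K$, and in each case establishing both directions of the equivalence.

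For the first (sink) case I would first observe that a sink state has a single trace, the self-loop $t = q\,q\,q\cdots$, so $\exists t \in \traces(q): t \models M$ collapses to $q\,q\cdots \models M$, i.e. $\forall \psi \in M: q\,q\cdots \models \psi$. Since $\HoldsSink(q,M)$ is defined as $(\forall \psi \in M: \HoldsS(q,\psi))$, it suffices to prove $\HoldsS(q,\psi) \Leftrightarrow q\,q\cdots \models \psi$ for every $\psi$, by induction following Figure~\ref{holds_0}. The atomic and boolean cases are immediate from $\lambda(q)$ and the inductive hypothesis; the temporal cases use the fact that every suffix of the constant trace equals the trace itself, so $X\phi$, $\phi_1\,U\,\phi_2$, and $\phi_1\,R\,\phi_2$ reduce respectively to conditions on $\phi$, $\phi_2$, and $\phi_1 \lor \phi_2$ at $q$---exactly the right-hand sides in Figure~\ref{holds_0}.

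For the second (non-sink) case the key preliminary observation is that since $M$ is maximally consistent and $t \models M$, the set $M$ is in fact exactly $\{\psi \in \ecl(\varphi): t \models \psi\}$ (if $\psi \notin M$ then $\neg\psi \in M$, forcing $t \not\models \psi$). For the ($\Leftarrow$) direction I take a witness $t \in \traces_K(q)$ with $t \models M$, set $q' = t_1 \in \succs_K(q)$, $t' = t^1 \in \traces(q')$, and $M' = \{\psi \in \ecl(\varphi): t' \models \psi\}$. I check that $M'$ is maximally consistent, so by correctness of $\labelGraph_K$ on $\succs_K(q)$ we get $M' \in \labelGraph_K(q')$; I then verify $\lambda(q) = \AP \cap M$ (since $t_0 = q$) and $\follows(M,M')$ by matching each clause of $\follows$ against the one-step LTL unfolding of $X$, $U$, and $R$ evaluated on $t$ and $t^1 = t'$, which together yield $\Holds_K(q,M,\labelGraph_K)$. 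For the ($\Rightarrow$) direction I take the $q',M'$ witnessing $\Holds_K$, use correctness of $\labelGraph_K$ to obtain an actual trace $t' \in \traces(q')$ with $t' \models M'$, form the single trace $t = q \cons t'$, and prove $t \models M$ by induction on $\psi \in M$: atomic and negated-atomic formulas from $\lambda(q) = \AP \cap M$ and maximal consistency, booleans from consistency and the inductive hypothesis, and the temporal operators from the corresponding $\follows$ clause together with $t' \models M'$ (directly for $X$, and by one-step unfolding for $U$ and $R$).

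The main obstacle will be the temporal operators $U$ and $R$: I must confirm that each $\follows$ clause is precisely the one-step unfolding of the LTL semantics---e.g. $t \models \phi_1\,U\,\phi_2$ iff $t \models \phi_2$, or $t \models \phi_1$ and $t^1 \models \phi_1\,U\,\phi_2$---so that it is faithfully mirrored by membership in $M$ and $M'$. The most delicate point is the liveness of $U$: local consistency alone does not guarantee that $\phi_2$ is eventually reached, so in the ($\Rightarrow$) direction I rely on the fact that the successor witness $t'$ \emph{genuinely} satisfies $\phi_1\,U\,\phi_2$ (guaranteed by the assumed correctness of $\labelGraph_K$), while in the surrounding global argument this liveness is ultimately discharged at the sink states through the base case and the DAG-likeness of $K$. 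The $R$ clause requires symmetric care to ensure that $\phi_2$ actually holds at $q$ whenever $\phi_1\,R\,\phi_2 \in M$.
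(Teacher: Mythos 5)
Your proposal is correct and follows essentially the same route as the paper's proof: the sink case via the unique constant trace and structural induction matching Figure~\ref{holds_0}, and the non-sink case by using the correct-labeling assumption to extract (for $\Rightarrow$) or certify (for $\Leftarrow$) a witness trace on a successor, together with the observation that a maximally-consistent set satisfied by $t$ must equal $\{\psi \in \ecl(\varphi) : t \models \psi\}$ and a clause-by-clause check of $\follows$ against the one-step unfoldings. Your explicit treatment of the $U$-liveness point (that genuine semantic satisfaction by $t'$, not mere local consistency, discharges eventuality) is a detail the paper leaves implicit, but it is the same argument.
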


Finally, we define $\labelNode_K(\varphi,q,\labelGraph_K)$, which
computes a label $L$ for $q$ such that $M \in L$ if and only if there
exists a trace $t \in \traces_K(q)$ such that $t \models M$ for all $M
\subset \ecl(\varphi)$. We assume that $\labelGraph_K$ is a correct
labeling of $K$ with respect to $\varphi$ and $\succs(q)$. For sink
states, $\labelNode_K(\varphi,q,\labelGraph_K)$ returns $\{ M \mid M
\in \ecl(\varphi) \wedge \HoldsSink(q,M) \}$, while for non-sink
states it returns $\{M \mid M \in \ecl(\varphi) \wedge
\Holds_K(q,M,\labelGraph_K)\}$.

\subsection{Incremental algorithm} 
To incrementally model check a modified Kripke structure, we must
re-label its states with the formulas that hold after the update. 

Consider two Kripke structures 
$K=(Q,Q_0,\delta,\lambda)$ and $K'=(Q',Q'_0,\delta',\lambda')$, such 
that $Q_0=Q'_0$. Furthermore, assume that $Q=Q'$, and there is a set
$U \subseteq Q$ such that $\delta$ and $\delta'$ differ only on nodes
in $U$. We call such a triple $(K,K',U)$ an {\em update} of $K$.  

An update $(K,K',U)$ might add or remove edges connected to a (small)
set of nodes, 
corresponding to a change in the rules on a switch. Suppose that
$\labelGraph_K$ is a correct labeling of $K$ with respect to $\varphi$
and $Q$. The incremental model checking problem is defined as follows:
we are given an update $(K,K',U)$, and $\labelGraph_K$, and we want to
know whether $K'$ satisfies $\varphi$. The
\naive~approach is to model check $K'$ without using the labeling
$\labelGraph_K$. We call this the {\em monolithic} approach.  In
contrast, the {\em incremental} approach uses $\labelGraph_K$ (and
thus intuitively re-uses the results of model checking $K$ to
efficiently verify $K'$). 

\paragraph{Example.}
Consider the left side of
Figure~\ref{fig:incremental}, with $H$ the only initial state. Suppose
that the update modifies $J$, and the $\delta'$ relation applied to $J$ only contains
the pair $(J,N)$, and
consider labeling the structure with
formulas $F\ a$, $F\ b$, and $F\ a \vee F\ b$. To simplify the
example, we
label a node by all those formulas which hold for at least one path
starting from the node (note that in the algorithm, a node is labeled by
a set of sets of formulas, rather than a set of formulas). 
We will have that all the nodes are labeled by $F\ a \vee F\ b$, and
in addition the 
nodes $K, I, H, M, J$ contain label $F\ a$, and the 
nodes $L, I, H, N$ contain $F\ b$. Now we want to relabel
the structure after the update (right-hand side).  Given
that the update changes only node $J$, the labeling can only
change for $J$ and its ancestors.  We therefore start labeling node
$J$, and find that it will now be labeled
with $F\ b$ instead of $F\ a$.  Labeling
proceeds to $H$, whose label does not change
(still labeled by all of $F\ a$, $F\ b$, $F\ a \vee F\ b$).  The
labeling process could then stop, even if $H$ has ancestors.

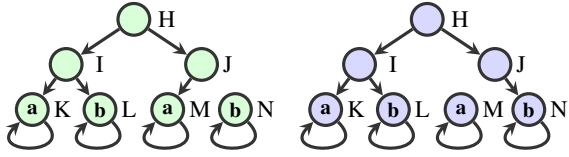
\begin{figure}[t]
\centering
\tikzwrapper{incremental}{}{
\begin{tikzpicture}[scale=.60]
\coordinate (C-H1) at (2,0);
\coordinate (C-I1) at (0.5,-1);
\coordinate (C-J1) at (3.5,-1);
\coordinate (C-K1) at (-0.25,-2);
\coordinate (C-L1) at (1.25,-2);
\coordinate (C-M1) at (2.75,-2);
\coordinate (C-N1) at (4.25,-2);
\node[kripke,myempty,label={[label distance=-1]0:H}] (S-H1) at (C-H1) {{\textbf{\footnotesize }}};
\node[kripke,myempty,label={[label distance=1.0]0:I}] (S-I1) at (C-I1) {{\textbf{\footnotesize }}};
\node[kripke,myempty,label={[label distance=-2]0:J}] (S-J1) at (C-J1) {{\textbf{\footnotesize }}};
\node[kripke,myempty,label={[label distance=-2]0:K}] (S-K1) at (C-K1) {{\textbf{\footnotesize a}}};
\node[kripke,myempty,label={[label distance=-2]0:L}] (S-L1) at (C-L1) {{\textbf{\footnotesize b}}};
\node[kripke,myempty,label={[label distance=-2]0:M}] (S-M1) at (C-M1) {{\textbf{\footnotesize a}}};
\node[kripke,myempty,label={[label distance=-2]0:N}] (S-N1) at (C-N1) {{\textbf{\footnotesize b}}};
\draw[black!80] (S-H1) edge[->] (S-I1);
\draw[black!80] (S-H1) edge[->] (S-J1);
\draw[black!80] (S-I1) edge[->] (S-K1);
\draw[black!80] (S-I1) edge[->] (S-L1);
\draw[black!80] (S-J1) edge[->] (S-M1);
\draw[black!80] (S-K1) edge[->,loop below] (S-K1);
\draw[black!80] (S-L1) edge[->,loop below] (S-L1);
\draw[black!80] (S-M1) edge[->,loop below] (S-M1);
\draw[black!80] (S-N1) edge[->,loop below] (S-N1);

\coordinate (C-H2) at (8.5,0);
\coordinate (C-I2) at (7.0,-1);
\coordinate (C-J2) at (10.0,-1);
\coordinate (C-K2) at (6.25,-2);
\coordinate (C-L2) at (7.75,-2);
\coordinate (C-M2) at (9.25,-2);
\coordinate (C-N2) at (10.75,-2);
\node[kripke,myempty2,label={[label distance=-1]0:H}] (S-H2) at (C-H2) {{\textbf{\footnotesize }}};
\node[kripke,myempty2,label={[label distance=1.0]0:I}] (S-I2) at (C-I2) {{\textbf{\footnotesize }}};
\node[kripke,myempty2,label={[label distance=-2]0:J}] (S-J2) at (C-J2) {{\textbf{\footnotesize }}};
\node[kripke,myempty2,label={[label distance=-2]0:K}] (S-K2) at (C-K2) {{\textbf{\footnotesize a}}};
\node[kripke,myempty2,label={[label distance=-2]0:L}] (S-L2) at (C-L2) {{\textbf{\footnotesize b}}};
\node[kripke,myempty2,label={[label distance=-2]0:M}] (S-M2) at (C-M2) {{\textbf{\footnotesize a}}};
\node[kripke,myempty2,label={[label distance=-2]0:N}] (S-N2) at (C-N2) {{\textbf{\footnotesize b}}};
\draw[black!80] (S-H2) edge[->] (S-I2);
\draw[black!80] (S-H2) edge[->] (S-J2);
\draw[black!80] (S-I2) edge[->] (S-K2);
\draw[black!80] (S-I2) edge[->] (S-L2);
\draw[black!80] (S-J2) edge[->] (S-N2);
\draw[black!80] (S-K2) edge[->,loop below] (S-K2);
\draw[black!80] (S-L2) edge[->,loop below] (S-L2);
\draw[black!80] (S-M2) edge[->,loop below] (S-M2);
\draw[black!80] (S-N2) edge[->,loop below] (S-N2);

\end{tikzpicture}}
\spacehack{-.5em}
\caption{Incremental labeling---Initial (left), Final (right)}
\spacehack{-.5em}
\label{fig:incremental}
\end{figure}

\paragraph*{Re-labeling states.}
Let $\ancestors_K(V)$ be the ancestors of $V$ in $K$---i.e., a
set of vertices s.t. $\ancestors_K(V) \subseteq Q$ and
$q \in \ancestors_K(V)$, if some node $v \in V$ is reachable from
$q$. To define incremental model checking for $\varphi$, we 
need a function accepting a property $\varphi$, set
of vertices $V$, labeling $\labelGraph_K$ that is correct for $K$
with respect to $\varphi$ and 
$Q \setminus \ancestors_K(V)$, and returns a correct labeling of $K$
with respect to $\varphi$ and $Q$. This function is:
\[
\begin{array}{l}
\hspace*{-0.1cm}\relabel_K(\varphi,\labelGraph_K,V) =
\begin{cases}
\labelGraph_K & \text{if}~V = \emptyset \\
\relabel_K(\varphi,\labelGraph'_K,V') & \text{otherwise}
\end{cases}
\end{array}
\]
where $\labelGraph'_K(v)$ is $\labelNode_K(\varphi,v,\labelGraph_K)$
if  $v \in V$, and it is $\labelGraph_K(v)$ if $v \not\in V$. The set
$V'$ is $\{q \mid \exists v \in V: v \in \succs_K(q)\}$.

\begin{theorem}\label{thm:relabel}
Let $V \subseteq Q$ be a set of vertices and $\labelGraph_K$ a correct
labeling with respect to $\varphi$ and $Q \setminus \ancestors_K(V)$.
Then $\relabel_K(\varphi,\labelGraph_K,V)$ is a correct labeling w.r.t.
$\varphi$ and $Q$.
\end{theorem}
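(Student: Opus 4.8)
The plan is to prove the statement by well-founded induction on the recursion of $\relabel_K$, taking as loop invariant exactly the theorem's hypothesis: at every recursive call $\relabel_K(\varphi,\labelGraph_K,V)$, the labeling $\labelGraph_K$ is correct with respect to $\varphi$ and $Q \setminus \ancestors_K(V)$. The base case $V=\emptyset$ is immediate, since $\ancestors_K(\emptyset)=\emptyset$ forces $\labelGraph_K$ to be correct on all of $Q$, which is precisely what $\relabel_K$ returns. For the inductive step I must show that the labeling $\labelGraph'_K$ built before the recursive call re-establishes the invariant for the smaller frontier $V'$, i.e. that $\labelGraph'_K$ is correct with respect to $\varphi$ and $Q \setminus \ancestors_K(V')$; the induction hypothesis then returns a labeling correct for all of $Q$, which is the value returned by the current call.

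Two graph-theoretic facts drive the inductive step. First I would establish the decomposition $\ancestors_K(V) = V \cup \ancestors_K(V')$: the inclusion $\supseteq$ is clear because $V' \subseteq \ancestors_K(V')$ and each node of $V'$ reaches $V$ in one step, while for $\subseteq$ any $q \in \ancestors_K(V)\setminus V$ reaches some $v \in V$ by a path of length $\geq 1$, whose penultimate node lies in $V'$, so $q \in \ancestors_K(V')$. Second, and this is the crux, I would show the \emph{successor-containment} property: for every frontier node $v \in V \setminus \ancestors_K(V')$ we have $\succs_K(v) \subseteq Q \setminus \ancestors_K(V)$. Indeed, if some $w \in \succs_K(v)$ lay in $\ancestors_K(V) = V \cup \ancestors_K(V')$, then either $w \in V$, making $v$ a predecessor of a node of $V$ and hence $v \in V' \subseteq \ancestors_K(V')$, or $w \in \ancestors_K(V')$, so that $v$ reaches $V'$ through $w$ and again $v \in \ancestors_K(V')$; either case contradicts $v \notin \ancestors_K(V')$.

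With these in hand the step is routine. On nodes of $(Q \setminus \ancestors_K(V'))\setminus V$ the labeling is unchanged ($\labelGraph'_K=\labelGraph_K$) and, by the decomposition, these nodes lie in $Q \setminus \ancestors_K(V)$ where $\labelGraph_K$ is correct by the invariant. On the remaining nodes $v \in V \setminus \ancestors_K(V')$ the label is recomputed as $\labelNode_K(\varphi,v,\labelGraph_K)$. Such a $v$ is necessarily a non-sink, since a self-looping sink would lie in $V'$; so by successor-containment $\labelGraph_K$ is correct on $\succs_K(v)$, and the non-sink case of Lemma~\ref{prop:label1state} gives $\Holds_K(v,M,\labelGraph_K) \Leftrightarrow \exists t \in \traces_K(v): t \models M$, making $\labelNode_K(\varphi,v,\labelGraph_K)$ the correct label $\{M \subset \ecl(\varphi) : \exists t \in \traces_K(v),\ t \models M\}$. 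Nodes $v \in V \cap \ancestors_K(V')$ may receive a provisionally wrong label here, but they stay inside $\ancestors_K(V')$ and are re-labeled by a later call, so the invariant rightly claims nothing about them yet.

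I expect the main obstacle to be precisely this successor-containment argument together with its bookkeeping: ensuring that whenever a node is \emph{finalized} (relabeled as a member of $V \setminus \ancestors_K(V')$) all of its successors already carry their final, correct labels, while nodes with unsettled successors are deferred to a later layer. This is what justifies sweeping backward over $\ancestors_K(V)$ a single time. A secondary point needing care is well-foundedness of the recursion: since $K$ is DAG-like, the measure $|\ancestors_K(V)|$ strictly decreases because a maximal-depth non-sink node of $V$ always lies in $V \setminus \ancestors_K(V')$; the only delicate case is a self-looping sink in the frontier, which either should be read out of $V'$ by taking $\succs_K$ modulo self-loops, or is harmless because $\labelNode_K$ on a sink is independent of $\labelGraph_K$ and hence already correct after one application.
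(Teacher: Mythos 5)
Your proof is correct, but it takes a genuinely different route from the paper's. The paper stratifies $\ancestors_K(V)$ into levels---$H_k$ being the nodes whose longest simple path into $V$ has length $k$---and argues by induction on $k$ that after the $k$-th iteration the labeling is correct on $(Q \setminus \ancestors_K(V))$ together with the levels settled so far, citing Lemma~\ref{prop:label1state} for each relabeled node. You instead take the theorem's own hypothesis as the invariant of the recursion (correctness on exactly $Q \setminus \ancestors_K(V)$ at every call), and sustain it with two set-theoretic lemmas that have no counterpart in the paper: the decomposition $\ancestors_K(V) = V \cup \ancestors_K(V')$ and successor-containment for nodes in $V \setminus \ancestors_K(V')$. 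The two invariants describe the same sets (in a DAG-like structure, the complement of the ancestors of the $k$-th frontier is the union of the already-settled levels), so the mathematical content agrees, but the packaging differs: the paper's level machinery makes the ``layers settle in order'' intuition explicit and keeps the proof very short, while your version needs no auxiliary definitions, makes the theorem self-inductive, and---importantly---confronts two issues the paper's terse proof glosses over. First, termination: under the literal definition of $V'$ a self-looping sink is its own predecessor and never leaves the frontier (and the main use, $\relabel_K(\varphi,\labelGraph_K^0,Q_f)$, seeds $V$ with \emph{all} sinks), so the recursion as written does not terminate; your reading of $V'$ modulo self-loops plus the measure $|\ancestors_K(V)|$ repairs this. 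Second, the sink case: once self-loops are excluded from $V'$, sinks in $V$ \emph{do} lie in $V \setminus \ancestors_K(V')$ and violate successor-containment, so your frontier case must split into sinks (handled by $\HoldsSink$ and the first half of Lemma~\ref{prop:label1state}, independently of $\labelGraph_K$) and non-sinks (your argument as given). You sketch exactly this repair in your closing remark; in a final write-up it should be folded into the inductive step rather than left as an afterthought, since as literally written your step asserts every frontier node is a non-sink, which is only true under the non-terminating reading of $V'$.
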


Given a labeling that is correct with respect to $\varphi$ and $Q$, it
is easy to check whether $\varphi$ is true for all the traces starting
in the initial states: the predicate
$\checkInitStates_K(\labelGraph_K,\varphi)$ is defined as $\forall
q_0 \in Q_0, M \in \labelGraph_K(q_0): \varphi \in M$. Next, let $Q_f$
be the set of all sink states of $K$. Then $\ancestors_K(Q_f)$ is the
set $Q$ of all states $K$. Therefore, for any initial labeling $\labelGraph_K^0$,
$\relabel(\varphi,\labelGraph_K^0,Q_f)$ is a correct labeling with respect to
$\varphi$ and $Q$. The function $\modelCheck_K(\varphi)$ is defined to
be equal to 
$\checkInitStates_K(\relabel_K(\varphi,\labelGraph^0_K,Q_f),\varphi)$, where we can
set $\labelGraph^0_K$ to be the empty labeling $\lambda v. \emptyset$.

We now define our incremental model checking function. Let 
$(K,K',U)$ be an update, and $\labelGraph_K$ a previously-computed
correct labeling of $K$ with respect to $\varphi$ and $Q$, where $Q$
is the set of states of $K$. The function
$\incrModelCheck(K,\varphi,U,\labelGraph_K)$ is defined
as $\checkInitStates_{K'}(\relabel_{K'}(\varphi,\labelGraph_K,U),\varphi)$. 
The following shows the correctness of our model checking functions
(proof of this and the previous theorem are in Appendix \ref{app:incrmodcheck}). 
\begin{corollary}\label{cor:modcheck}
First, $\modelCheck_K(\varphi)=true \iff K \models \varphi$. Second,
for $(K,K',U)$ and $\labelGraph_K$ as above, we have 
$\incrModelCheck(K,\varphi,U,\labelGraph_K)=\mathit{true} \iff
K \models \varphi$.   
\end{corollary}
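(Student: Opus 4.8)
The plan is to reduce both parts to a single auxiliary fact about $\checkInitStates$ and then supply the appropriate correct labeling in each case. The auxiliary fact is: whenever $\labelGraph$ is a correct labeling of a complete, DAG-like Kripke structure with respect to $\varphi$ and its \emph{entire} state set $Q$, the test $\checkInitStates(\labelGraph,\varphi)$ returns $\mathit{true}$ precisely when that structure satisfies $\varphi$. I would prove this first. For the ($\Leftarrow$) direction, take any initial $q_0$ and any $M \in \labelGraph(q_0)$; by correctness there is a trace $t \in \traces(q_0)$ with $t \models M$, and since the structure satisfies $\varphi$ we have $t \models \varphi$, so maximal consistency of $M$ forces $\varphi \in M$ (otherwise $\neg\varphi \in M$ would give $t \models \neg\varphi$). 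For the ($\Rightarrow$) direction, fix an initial $q_0$ and an arbitrary trace $t$ from $q_0$ and set $M_t = \{\psi \in \ecl(\varphi) : t \models \psi\}$; closure of $\ecl(\varphi)$ under negation and the boolean connectives makes $M_t$ a maximally-consistent subset of $\ecl(\varphi)$ witnessed by $t$, hence $M_t \in \labelGraph(q_0)$, so $\checkInitStates$ gives $\varphi \in M_t$, i.e.\ $t \models \varphi$; as $t$ was arbitrary the structure satisfies $\varphi$. This lemma is the workhorse; the two parts differ only in how a labeling correct on all of $Q$ is produced.

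For the first part I would unfold $\modelCheck_K(\varphi) = \checkInitStates_K(\relabel_K(\varphi,\labelGraph^0_K,Q_f),\varphi)$. Because $K$ is complete and DAG-like, every state reaches a sink, so $\ancestors_K(Q_f)=Q$ and the empty labeling $\labelGraph^0_K = \lambda v.\emptyset$ is vacuously correct with respect to $\varphi$ and $Q \setminus \ancestors_K(Q_f)=\emptyset$. Theorem~\ref{thm:relabel} then yields that $\relabel_K(\varphi,\labelGraph^0_K,Q_f)$ is a correct labeling with respect to $\varphi$ and $Q$, and the auxiliary fact gives $\modelCheck_K(\varphi)=\mathit{true} \iff K \models \varphi$.

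For the second part I would unfold $\incrModelCheck(K,\varphi,U,\labelGraph_K) = \checkInitStates_{K'}(\relabel_{K'}(\varphi,\labelGraph_K,U),\varphi)$, so that the goal reduces, via the auxiliary fact applied to the structure that $\checkInitStates_{K'}$ inspects, to showing that the inner $\relabel_{K'}$ call returns a labeling of that structure correct with respect to $\varphi$ and its full state set. To meet the hypothesis of Theorem~\ref{thm:relabel} for this call, I must verify that the inherited labeling $\labelGraph_K$, known to be correct for $K$ on all of $Q$, is still correct for $K'$ on $Q \setminus \ancestors_{K'}(U)$. The crux is the identity $\traces_{K'}(q)=\traces_K(q)$ for every $q \notin \ancestors_{K'}(U)$: such a $q$ cannot reach $U$ in $K'$, so every $K'$-trace from $q$ stays off $U$ and uses only edges where $\delta'=\delta$, making it a $K$-trace; conversely any $K$-trace from $q$ that first entered $U$ would, since its prefix lies off $U$ and hence is common to $K$ and $K'$, exhibit $q$ as an ancestor of $U$ in $K'$, a contradiction, so it too stays off $U$ and is a $K'$-trace. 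With the trace sets equal, $\labelGraph_K(q)$ witnesses the same sets in $K'$ as in $K$, giving the required correctness on $Q \setminus \ancestors_{K'}(U)$; Theorem~\ref{thm:relabel} then promotes it to correctness on all of $Q$, and the auxiliary fact closes the case, yielding the stated equivalence $\incrModelCheck(K,\varphi,U,\labelGraph_K)=\mathit{true} \iff K \models \varphi$.

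I expect the trace-set identity to be the main obstacle. The subtlety is that $\ancestors_{K'}(U)$ is computed in the \emph{updated} structure, so the argument must transfer only the common prefix of a path before its first visit to $U$ and invoke the localization assumption ``$\delta$ and $\delta'$ differ only on $U$'' exactly where it applies; the DAG-like shape is what guarantees these first-visit prefixes are finite and well behaved. Everything else---the completeness observation $\ancestors_K(Q_f)=Q$, the vacuous correctness of the empty labeling, and the two implications of the auxiliary fact---is routine once this localization step is in place.
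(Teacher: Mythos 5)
Your proposal is correct and follows essentially the same route as the paper's proof: Theorem~\ref{thm:relabel} applied after observing $\ancestors_K(Q_f)=Q$ and the vacuous correctness of the empty labeling (resp.\ applied to $K'$ with the inherited labeling and $U$), followed by reasoning about $\checkInitStates$ through maximally-consistent sets, with both directions argued via the set $M_t$ of formulas satisfied by a trace. Your additions are faithful elaborations rather than a different method: the trace-set identity $\traces_{K'}(q)=\traces_K(q)$ for $q \notin \ancestors_{K'}(U)$ makes precise exactly the step the paper compresses into ``the proof for incremental model checking is similar,'' and note that what this actually establishes is $\incrModelCheck(K,\varphi,U,\labelGraph_K)=\mathit{true} \iff K' \models \varphi$---the $K$ on the right of the corollary's second equivalence is evidently a typo for $K'$, which your argument silently corrects even though your closing sentence restates the misprinted form.
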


The runtime complexity of the $\modelCheck_K$ function is 
$O(|K| \times 2^{|\varphi|})$.  The runtime complexity of the $\incrModelCheck$
function is
$O(|\ancestors_K(U)| \times 2^{|\varphi|})$, where $U$ is the set of
nodes being updated.

\paragraph*{Counterexamples.}
This incremental algorithm can generate
counterexamples in cases where the formula does not hold.  A formula
$\neg \varphi$ does not hold if an initial state is labeled by $L$,
such that there exists a set $M \in L$, such that $\neg\varphi \in
M$. Examining the definition of $\labelNode_K$,
we find that in order
to add a set $M$ to the label $L$ of a node $q$, there is a set $M'$
in the label of one a child $q'$ of $q$ that explains why $M$ is in
$L$. The first node of the counterexample trace starting from $q$ is
one such child $q'$.

\optnewpage
\section{Implementation and Experiments \budget{3}}
\label{evaluation}

We have built a prototype tool that implements the algorithms
described in this paper. It consists of 7K lines of OCaml code. 
The system works by building a Kripke structure (\S \ref{model}) and
then repeatedly interacting with a model checker to synthesize an
update. We currently provide four checker backends: {\it Incremental} uses
incremental relabeling to check and recheck formulas, {\it Batch}
re-labels the entire graph on each call, {\it NuSMV} queries a
state-of-the-art symbolic model checker in batch mode, and {\it
NetPlumber} queries an incremental network model
checker~\cite{Kazemian}. All tools except NetPlumber provide
counterexample traces, so our system learns from counterexamples whenever
possible (\S \ref{synthesis}).

\paragraph*{Experiments.}
To evaluate performance, we generated configurations for a variety of
real-world topologies and ran experiments in which we measured the
amount of time needed to synthesize an update (or discover that no
order update exists). These experiments were designed to answer two
key questions: (1) how the performance of our Incremental checker
compares to state-of-the-art tools (NuSMV and NetPlumber), and (2)
whether our synthesizer scales to large topologies. We used the {\it
Topology Zoo} \cite{topozoo} dataset, which consists of $261$ actual
wide-area topologies, as well as synthetically constructed {\it
Small-World}~\cite{newman2001random} and {\it
FatTree}~\cite{alfares08} topologies. We ran the experiments on a
64-bit Ubuntu machine with 20GB RAM and a quad-core Intel i5-4570 CPU
(3.2 GHz) and imposed a 10-minute timeout for each run. We ignored
runs in which the solver died due to an out-of-memory error or
timeout---these are infrequent (less than 8\% of the 996 runs for
Figure~\ref{fig_nusmv}), and our Incremental solver only died in
instances where other solvers did too.

\paragraph*{Configurations and properties.}
A recent paper \cite{liu2013zupdate}
surveyed data-center operators to discover 
common update scenarios, which mostly involve taking
switches on/off-line and migrating traffic between 
switches/hosts. We designed experiments around a similar
scenario. To create configurations, we connected random
pairs of nodes $(s,d)$ via disjoint initial/final paths $W_i,W_f$,
forming a ``diamond'', and asserted one of the following properties for each pair:

\begin{compactitem}
\item {\it Reachability:} traffic from a given source must reach a
  certain
  destination: $(\mathrm{port}=s) \Rightarrow \mathrm{F\,}(\mathrm{port}=d)$ 
\item {\it Waypointing:} traffic must traverse a waypoint $w$: \\
$(\mathrm{port}{=}s) \Rightarrow \big((\mathrm{port}{\not=}d) \mathrm{\,U\,} ((\mathrm{port}{=}w) \land \mathrm{F\,}(port{=}d))\big)$
\item {\it Service chaining:} traffic must waypoint through
  several intermediate nodes: $(\mathrm{port}=s) \Rightarrow way(W,d)$, where \\
\vspace{-1em}
\begin{displaymath}
\begin{array}{rcl}
 way([\,],d) & \equiv & \mathrm{F\,}(\mathrm{port}=d)\\
 way(w_i::W,d) & \equiv & \big((\bigwedge_{w_k \in W}\mathrm{port}{\not=}w_k \,\land\, \mathrm{port}{\not=}d) \\
               &        & \ \mathrm{\,U\,} ((\mathrm{port}=w_i) \land way(W,d))\big).
\end{array}
\end{displaymath}
\vspace{-1em}
\end{compactitem}

\paragraph*{Incremental vs. NuSMV/Batch.} 
Figure~\ref{fig_nusmv}~(a-c) compares the performance of Incremental
and NuSMV backends for the reachability property. Of the 247 Topology
Zoo inputs that completed successfully, our tool solved all of them
faster. The measured speedups were large, with a geometric mean of
$447.23$x. For the 24 FatTree examples, the mean speedup was $465.03$x,
and for the 25 Small-World examples, the mean speedup was $4484.73$x.
We also compared the Incremental and Batch solvers on the same inputs.
Incremental performs better on almost all examples, with mean speedup of
$4.26$x, $5.27$x, $11.74$x on the datasets shown in Figure
\ref{fig_monolithic}(a-c) and maximum runtimes of
$0.36$s, $2.80$s, and $0.92$s respectively. The maximum runtimes for
Batch were $6.71$s, $39.75$s, and $12.50$s.

\newcommand{\myfigure}[9]{
  \begin{axis}[%
    name=#2,
    xlabel style={align=center,yshift=1.0ex%
    }, 
    ylabel style={align=center}, 
    scaled x ticks=false,
    title style={yshift=-1.5ex},
    legend pos=north west,
    width=2.5in, 
    height=1.2in,
    legend style={%
      at={(0.85,0.97)},anchor=north east,
      fill=none,
      font=\tiny,line width=0.5pt,row sep=-1ex},
    xmax=#5, ymax=#6,
    legend cell align=left,
    restrict y to domain=-1000:1000,
    #9 
        ]
\ifx&#3&%
    \def\myfile{../../experiments/data/#1.csv}
    \addplot [no markers, \plotcolor,opacity=0.5,line width=1pt, line cap=round, dash pattern=on 0pt off 2\pgflinewidth,forget plot] gnuplot [raw gnuplot] { 
            f(x) = a*(x**b)+c; 
            a=1; b=#8; c=1; 
            fit f(x) "<awk -F ',' '$14!=0.000 { print $0 }' \myfile" u 10:15 via a,b; 
            plot [x=0:\maxtrend] f(x); 
    };
    \addplot [mark=o,only marks, \plotcolor,mark options={fill=white},mark size=1pt] table[x index=9,y index=14,header=true,col sep=comma,%
    restrict expr to domain={\thisrow{suc_two}}{1:2},%
    ] {\myfile};
    \def\myfile{../../experiments/data/#4.csv}
    \addplot [no markers, blue,opacity=0.5,line width=1pt, line cap=round, dash pattern=on 0pt off 2\pgflinewidth,forget plot] gnuplot [raw gnuplot] { 
            f(x) = a*(x**2)+c; 
            a=1; b=#8; c=0.01; 
            fit f(x) "<awk -F ',' '$14!=0.000 { print $0 }' \myfile" u 10:15 via a,c; 
            plot [x=0:\maxtrendx] f(x); 
    };
    \addplot [mark=square,blue,only marks,mark options={fill=white},mark size=1pt] table[x index=9,y index=14,header=true,col sep=comma,%
    restrict expr to domain={\thisrow{suc_two}}{1:2},%
    ] {\myfile};
    \addplot [no markers, violet,opacity=0.5,line width=1pt, line cap=round, dash pattern=on 0pt off 2\pgflinewidth,forget plot] gnuplot [raw gnuplot] { 
            f(x) = a*(x**2)+c; 
            a=1; b=#8; c=0.01; 
            fit f(x) "<awk -F ',' '$2!=0.000 { print $0 }' \myfile" u 10:4 via a,c; 
            plot [x=0:\maxtrendx] f(x); 
    };
    \addplot [mark=+,only marks, violet,mark options={fill=white},opacity=0.75] table[x index=9,y index=3,header=true,col sep=comma,%
    restrict expr to domain={\thisrow{suc_one}}{1:2},%
    ] {\myfile};
\else
    \def\myfile{../../experiments/data/#3.csv}
    \addplot [no markers, orange,opacity=0.5,line width=1pt, line cap=round, dash pattern=on 0pt off 2\pgflinewidth,forget plot] gnuplot [raw gnuplot] { 
            g(x) = d*(x**2)+e; 
            d=1; b=#8; e=0.1; 
            fit g(x) "<awk -F ',' '$14!=0.000 { print $0 }' \myfile" u 10:15 via d,e; 
            plot [x=0:\maxtrend] g(x); 
    };
    \addplot [mark=triangle,orange,only marks,mark options={fill=white},mark size=1.5pt] table[x index=9,y index=14,header=true,col sep=comma,%
    restrict expr to domain={\thisrow{suc_two}}{1:2},%
    ] {\myfile};
    \addplot [no markers, darkgreen,opacity=0.5,line width=1pt, line cap=round, dash pattern=on 0pt off 2\pgflinewidth,forget plot] gnuplot [raw gnuplot] { 
            f(x) = a*(x**2)+c; 
            a=1; b=#8; c=0.1; 
            fit f(x) "<awk -F ',' '$2!=0.000 { print $0 }' \myfile" u 10:4 via a,c; 
            plot [x=0:\maxtrendx] f(x); 
    };
    \addplot [mark=x,only marks, darkgreen,mark options={fill=white},opacity=0.75] table[x index=9,y index=3,header=true,col sep=comma,%
    restrict expr to domain={\thisrow{suc_one}}{1:2},%
    ] {\myfile};
\fi
\node [below left] at (axis cs:  \pgfkeysvalueof{/pgfplots/xmax},  \pgfkeysvalueof{/pgfplots/ymax}) {{\bf(#2)}};

    \typeout{-----------------}
    \typeout{PROCESSING: \myfile}
    \legend{#7}
  \end{axis}
}

\newcommand{\hpgfspace}{3em}
\newcommand{\vpgfspace}{1.5em}
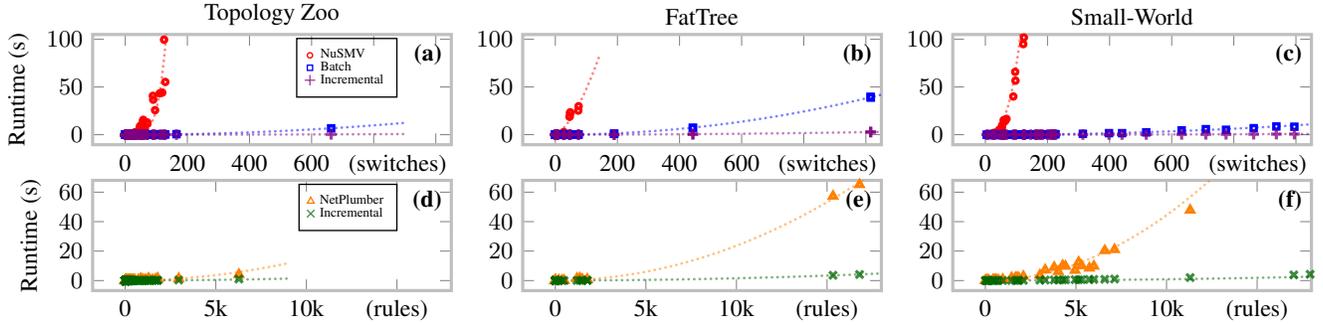
\begin{figure*}
\centering
\tikzwrapper{groupplot}{}{
\begin{tikzpicture}
\def\plotcolor{red}
\def\maxtrend{1050} \def\maxtrendx{900}
\myfigure{result_reach_topozoo_nusmv_all}{a}{}{result_way_topozoo_batch_big}{1050}{105}{NuSMV \\ Batch \\ Incremental \\}{2}{title={Topology Zoo},xtick={0,200,400,600,800,1000},xticklabels={0,200,400,600,\hspace{2em}(switches)},xlabel={},ylabel=Runtime (s),at={($(\hpgfspace,0)$)},anchor=west}
\def\maxtrend{140} \def\maxtrendx{1050}
\myfigure{result_reach_fattree_nusmv}{b}{}{result_way_fattree_batch}{1050}{105}{}{2}{title={FatTree},xlabel={},xtick={0,200,400,600,800,1000},xticklabels={0,200,400,600,\hspace{2em}(switches)},at={($(a.east)+(\hpgfspace,0)$)},anchor=west}
\def\maxtrend{1050} \def\maxtrendx{1050}
\myfigure{result_reach_smallworld_nusmv}{c}{}{result_way_smallworld_batch}{1050}{105}{}{2}{title={Small-World},xlabel={},xtick={0,200,400,600,800,1000},xticklabels={0,200,400,600,\hspace{2em}(switches)},at={($(b.east)+(\hpgfspace,0)$)},anchor=west}

\def\plotcolor{orange}
\def\maxtrend{9000} \def\maxtrendx{9000}
\myfigure{}{d}{result_reach_topozoo_netplumber_all}{}{18000}{68}{NetPlumber \\ Incremental \\}{3}{xlabel={},ylabel=Runtime (s),xticklabels={0,0,5k,10k,(rules)},at={($(a.south)-(0,\vpgfspace)$)},anchor=north}
\def\maxtrend{18000} \def\maxtrendx{18000}
\myfigure{}{e}{result_reach_fattree_netplumber}{}{18000}{68}{}{3}{xlabel={},xticklabels={0,0,5k,10k,(rules)},at={($(d.east)+(\hpgfspace,0)$)},anchor=west}
\def\maxtrend{18000} \def\maxtrendx{18000}
\myfigure{}{f}{result_reach_smallworld_netplumber}{}{18000}{68}{}{3}{xlabel={},xticklabels={0,0,5k,10k,(rules)},at={($(e.east)+(\hpgfspace,0)$)},anchor=west}

\end{tikzpicture}}
\vspace{-0.5em}
\caption{Relative performance results: 
{\bf(a-c)} Performance of Incremental vs. NuSMV, Batch, NetPlumber solvers on Topology Zoo, FatTree, Small-World topologies (columns);
{\bf(d-f)} Performance of Incremental vs. NetPlumber (rule-granularity).
}
\label{fig_nusmv}
\label{fig_netplumber}
\label{fig_monolithic}
\end{figure*}

\renewcommand{\vpgfspace}{1.5em}
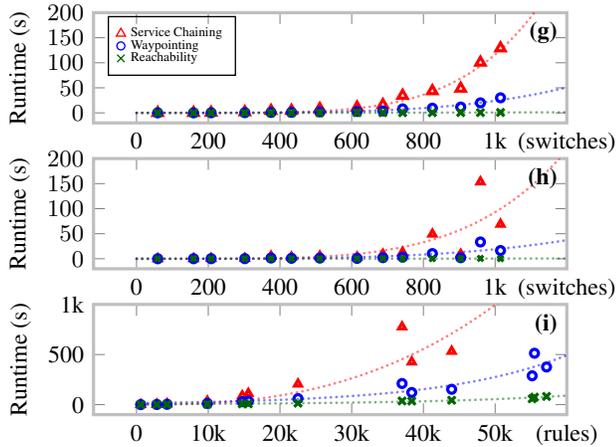
\begin{figure}
\centering
\tikzwrapper{groupplot2}{}{
\begin{tikzpicture}
  \begin{axis}[%
    name=result_way3_smallworld_incr,
    xlabel style={align=center,yshift=1.0ex}, %
    ylabel={Runtime (s)},
    ylabel style={align=center},
    title style={yshift=-1.5ex},
    legend pos = north west,%
    scaled x ticks=false,
    xticklabels={X,0,200,400,600,800,1k,(switches)},
    width=3.1in, height=1.2in,
    ymax=200,xmax=1200,
    legend style={fill=none,font=\tiny,line width=0.5pt,row sep=-1ex},
    legend cell align=left,
        ]
    \def\myfile{../../experiments/data/result_way3_smallworld_incr.csv}
    \addplot [mark=triangle,only marks, red,mark options={fill=white},mark size=2pt] table[x index=9,y index=14,header=true,col sep=comma,%
    restrict expr to domain={\thisrow{suc_two}}{1:2},%
    restrict expr to domain={\thisrow{suc_one}}{1:2},%
    ] {\myfile};
    \typeout{-----------------}
    \typeout{PROCESSING: \myfile}
    \addplot [no markers, red,opacity=0.5,line width=1pt, line cap=round, dash pattern=on 0pt off 2\pgflinewidth,forget plot] gnuplot [raw gnuplot] { 
            f(x) = a*(x**b); 
            a=1; b=3; 
            fit f(x) '\myfile' u 10:15 via a,b; 
            plot [x=0:1200] f(x); 
    };
    \def\myfile{../../experiments/data/result_way_smallworld_incr.csv}
    \addplot [mark=o,only marks, blue,mark options={fill=white},mark size=1.5pt] table[x index=9,y index=14,header=true,col sep=comma,%
    restrict expr to domain={\thisrow{suc_two}}{1:2},%
    restrict expr to domain={\thisrow{suc_one}}{1:2},%
    ] {\myfile};
    \typeout{-----------------}
    \typeout{PROCESSING: \myfile}
    \addplot [no markers, blue,opacity=0.5,line width=1pt, line cap=round, dash pattern=on 0pt off 2\pgflinewidth,forget plot] gnuplot [raw gnuplot] { 
            f(x) = a*(x**b); 
            a=1; b=3; 
            fit f(x) '\myfile' u 10:15 via a,b; 
            plot [x=0:1200] f(x); 
    };
    \def\myfile{../../experiments/data/result_reach_smallworld_incr.csv}
    \addplot [mark=x,only marks, darkgreen,mark options={fill=white}] table[x index=9,y index=14,header=true,col sep=comma,%
    restrict expr to domain={\thisrow{suc_two}}{1:2},%
    restrict expr to domain={\thisrow{suc_one}}{1:2},%
    ] {\myfile};
    \typeout{-----------------}
    \typeout{PROCESSING: \myfile}
    \addplot [no markers, darkgreen,opacity=0.5,line width=1pt, line cap=round, dash pattern=on 0pt off 2\pgflinewidth,forget plot] gnuplot [raw gnuplot] { 
            f(x) = a*(x**b); 
            a=1; b=3; 
            fit f(x) '\myfile' u 10:15 via a,b; 
            plot [x=0:1200] f(x); 
    };
\node [below left] at (axis cs:  \pgfkeysvalueof{/pgfplots/xmax},  \pgfkeysvalueof{/pgfplots/ymax}) {{\bf(g)}};
    \legend{Service Chaining \\ Waypointing \\ Reachability \\}
  \end{axis}
  \begin{axis}[%
    name=result_way3_smallworld_incr_imp,
    at={($(result_way3_smallworld_incr.south)-(0,\vpgfspace)$)},anchor=north,
    xlabel style={align=center,yshift=1.0ex}, %
    ylabel={Runtime (s)},
    ylabel style={align=center},
    title style={yshift=-1.5ex},
    legend pos = north west,%
    scaled x ticks=false,
    xticklabels={X,0,200,400,600,800,1k,(switches)},
    width=3.1in, height=1.2in,
    ymax=200,xmax=1200,
    legend style={fill=none, font=\tiny},
    legend cell align=left,
        ]
    \def\myfile{../../experiments/data/result_way3_smallworld_incr_imp.csv}
    \addplot [mark=triangle,only marks,red,mark options={fill=white},mark size=1.5pt] table[x index=9,y index=14,header=true,col sep=comma,%
    restrict expr to domain={\thisrow{suc_two}}{1:2},%
    restrict expr to domain={\thisrow{suc_one}}{1:2},%
    ] {\myfile};
    \typeout{-----------------}
    \typeout{PROCESSING: \myfile}
    \addplot [no markers, red,opacity=0.5,line width=1pt, line cap=round, dash pattern=on 0pt off 2\pgflinewidth,forget plot] gnuplot [raw gnuplot] { 
            f(x) = a*(x**b); 
            a=1; b=3; 
            fit f(x) '\myfile' u 10:15 via a,b; 
            plot [x=0:2000] f(x); 
    };
    \def\myfile{../../experiments/data/result_way_smallworld_incr_imp.csv}
    \addplot [mark=o,only marks, blue,mark options={fill=white},mark size=1.5pt] table[x index=9,y index=14,header=true,col sep=comma,%
    restrict expr to domain={\thisrow{suc_two}}{1:2},%
    restrict expr to domain={\thisrow{suc_one}}{1:2},%
    ] {\myfile};
    \typeout{-----------------}
    \typeout{PROCESSING: \myfile}
    \addplot [no markers,blue,opacity=0.5,line width=1pt, line cap=round, dash pattern=on 0pt off 2\pgflinewidth,forget plot] gnuplot [raw gnuplot] { 
            f(x) = a*(x**b); 
            a=1; b=3; 
            fit f(x) '\myfile' u 10:15 via a,b; 
            plot [x=0:2000] f(x); 
    };
    \def\myfile{../../experiments/data/result_reach_smallworld_incr_imp.csv}
    \addplot [mark=x,only marks,darkgreen,mark options={fill=white},mark size=1.5pt] table[x index=9,y index=14,header=true,col sep=comma,%
    restrict expr to domain={\thisrow{suc_two}}{1:2},%
    restrict expr to domain={\thisrow{suc_one}}{1:2},%
    ] {\myfile};
    \typeout{-----------------}
    \typeout{PROCESSING: \myfile}
    \addplot [no markers,darkgreen,opacity=0.5,line width=1pt, line cap=round, dash pattern=on 0pt off 2\pgflinewidth,forget plot] gnuplot [raw gnuplot] { 
            f(x) = a*(x**b); 
            a=1; b=3; 
            fit f(x) '\myfile' u 10:15 via a,b; 
            plot [x=0:2000] f(x); 
    };
\node [below left] at (axis cs:  \pgfkeysvalueof{/pgfplots/xmax},  \pgfkeysvalueof{/pgfplots/ymax}) {{\bf(h)}};
  \end{axis}
  \begin{axis}[%
    at={($(result_way3_smallworld_incr_imp.south)-(0,\vpgfspace)$)},anchor=north,
    xlabel style={align=center,yshift=1.0ex}, 
    ylabel style={align=center},
    ylabel={Runtime (s)},
    title style={yshift=-1.5ex},
    legend pos = north west,%
    scaled x ticks=false,
    xticklabels={X,0,10k,20k,30k,40k,50k,(rules)},
    yticklabels={x,0,500,1k},
    width=3.1in, height=1.2in,
    ymax=1000,xmax=60000,
    legend style={fill=none, font=\tiny},
    legend cell align=left,
        ]
    \def\myfile{../../experiments/data/result_way3_smallworld_incr_rule.csv}
    \addplot [mark=triangle,only marks, red,mark options={fill=white},mark size=1.5pt] table[x index=9,y index=14,header=true,col sep=comma,%
    restrict expr to domain={\thisrow{suc_two}}{1:2},%
    restrict expr to domain={\thisrow{suc_one}}{1:2},%
    ] {\myfile};
    \typeout{-----------------}
    \typeout{PROCESSING: \myfile}
    \addplot [no markers, red,opacity=0.5,line width=1pt, line cap=round, dash pattern=on 0pt off 2\pgflinewidth,forget plot] gnuplot [raw gnuplot] { 
            f(x) = a*(x**b); 
            a=1; b=3; 
            fit f(x) '\myfile' u 10:15 via a,b; 
            plot [x=0:60000] f(x); 
    };
    \def\myfile{../../experiments/data/result_way_smallworld_incr_rule.csv}
    \addplot [mark=o,only marks, blue,mark options={fill=white},mark size=1.5pt] table[x index=9,y index=14,header=true,col sep=comma,%
    restrict expr to domain={\thisrow{suc_two}}{1:2},%
    restrict expr to domain={\thisrow{suc_one}}{1:2},%
    ] {\myfile};
    \typeout{-----------------}
    \typeout{PROCESSING: \myfile}
    \addplot [no markers, blue,opacity=0.5,line width=1pt, line cap=round, dash pattern=on 0pt off 2\pgflinewidth,forget plot] gnuplot [raw gnuplot] { 
            f(x) = a*exp(x*b); 
            a=4; b=0.0001; 
            fit f(x) '\myfile' u 10:15 via a,b; 
            plot [x=0:60000] f(x); 
    };
    \def\myfile{../../experiments/data/result_reach_smallworld_incr_rule.csv}
    \addplot [mark=x,only marks, darkgreen,mark options={fill=white}] table[x index=9,y index=14,header=true,col sep=comma,%
    restrict expr to domain={\thisrow{suc_two}}{1:2},%
    restrict expr to domain={\thisrow{suc_one}}{1:2},%
    ] {\myfile};
    \typeout{-----------------}
    \typeout{PROCESSING: \myfile}
    \addplot [no markers, darkgreen,opacity=0.5,line width=1pt, line cap=round, dash pattern=on 0pt off 2\pgflinewidth,forget plot] gnuplot [raw gnuplot] { 
            f(x) = a*exp(x*b); 
            a=4; b=0.0001; 
            fit f(x) '\myfile' u 10:15 via a,b; 
            plot [x=0:60000] f(x); 
    };
\node [below left] at (axis cs:  \pgfkeysvalueof{/pgfplots/xmax},  \pgfkeysvalueof{/pgfplots/ymax}) {{\bf(i)}};
  \end{axis}
\end{tikzpicture}}
\spacehack{-1.0em}
\vspace{-1.0em}
\caption{%
{\bf(g)} Scalability of Incremental on Small-World topologies of increasing size;
{\bf(h)} Scalability when no correct switch-granularity update exists (i.e. algorithm reports ``impossible"), and
{\bf(i)} Scalability of fine-grained (rule-granularity) approach for solving switch-impossible examples in (h). 
}
\spacehack{-.75em}
\label{fig_scalability}
\vspace{-0.35em}
\end{figure}

\paragraph*{Incremental vs. NetPlumber.} 
We also measured the performance of Incremental versus the network
property checker NetPlumber (Figure \ref{fig_netplumber}(d-f)). 
Note that NetPlumber uses rule-granularity 
for updates, so we enabled this mode in our tool
for these experiments. For the three datasets, our checker is faster
on all experiments, with mean speedups of ($6.41$x, $4.90$x,
$17.19$x). NetPlumber does not report counterexamples, putting
it at a disadvantage in this end-to-end comparison, so we also
measured total Incremental versus NetPlumber runtime on the same set
of model-checking questions posed by Incremental for the Small-World
example. Our tool is still faster on all instances,
with a mean speedup of $2.74$x.

\paragraph*{Scalability.}
To quantify our tool's scalability, we constructed Small World
topologies with up to 1500 switches, and ran experiments with large
diamond updates---the largest has $1015$
switches updating. The results appear in
Figure \ref{fig_scalability}(g).
The maximum synthesis times for the three properties were $129.04$s, $30.11$s, and $0.85$s, which shows that our tool
scales to problems of realistic size.

\paragraph*{Infeasible Updates.}
We also considered examples for which there is no 
switch-granular update.  Figure~\ref{fig_scalability}(h) shows the
results of experiments where we generated a second diamond atop the
first one, requiring it to route traffic in the opposite
direction. Using switch-granularity, the inputs are reported as
unsolvable in maximum time $153.48$s, $33.48$s, and $0.69$s. Using
rule-granularity, these inputs are solved successfully for up to 1000
switches with maximum times of $776.13$s, $512.84$s, and $82.00$s (see
Figure~\ref{fig_scalability}(i)).

\paragraph*{Waits.}
We also separately measured the time needed to
run the wait-removal heuristic
for the Figure~\ref{fig_scalability} experiments.
For (g), the maximum
wait-removal runtime was $0.89$s, resulting in $2$ needed waits for
each instance. For (i), the maximum wait-removal runtime was
$103.87$s, resulting in about $2.6$ waits on average (with a maximum of
$4$). For the largest problems in (g) and (i), this corresponds to
removal of $1397/1399$ and $55823/55826$ waits (about 99.9\%).

\optnewpage
\section{Related Work \budget{1}}
\label{related}

This paper extends preliminary work 
reported in a workshop
paper~\cite{noyestoward}. We present a more precise and
realistic network model, and replace
expensive calls to an external model checker with calls to a
new built-in {\em incremental} network model checker.
We extend the DFS search
procedure with optimizations and heuristics that improve performance
dramatically. Finally, we evaluate our tool on 
a comprehensive set of benchmarks with real-world topologies. %

\paragraph*{Synthesis of concurrent programs.}
There is much previous work on synthesis for concurrent
programs~%
\cite{vechev2010abstraction,solar2008sketching,hawkins2012concurrent}.
In particular, work by Solar-Lezama et al.~\cite{solar2008sketching}
and Vechev et al.~\cite{vechev2010abstraction} synthesizes sequences of
instructions. However, traditional
synthesis and synthesis for networking are quite
different. First, traditional 
synthesis is a game against the environment which (in the concurrent
programming case) provides inputs and schedules threads; in contrast,
our synthesis problem involves reachability on the space of
configurations. Second, our space of configurations is very
rich, meaning that checking configurations is itself a model checking problem.

\paragraph*{Network updates.} There are many protocol- and
property-specific algorithms for implementing network updates,
e.g. avoiding packet/bandwidth loss during
planned maintenance to
BGP~\cite{francois2007avoidingDisrupt,raza2011graceful}. 
Other work avoids routing loops and blackholes during IGP
migration \cite{vanbever2011seamless}. 
Work on network updates in SDN proposed
the notion of {\it consistent updates} and several
implementation mechanisms, including two-phase
updates~\cite{reitblatt2012abstractions}.
Other work explores propagating updates incrementally, 
reducing the space overhead on switches~\cite{katta2013incremental}.
As mentioned in Section~\ref{overview}, recent work proposes ordering
updates for specific 
properties~\cite{dionysus14}, whereas we can handle
combinations and variants of these properties. 
Furthermore, SWAN and
zUpdate add support for bandwidth
guarantees~\cite{hong2012swan,liu2013zupdate}.   
Zhou et al.~\cite{ZJCCG15} consider customizable trace properties, and
propose a dynamic algorithm to find order updates. This solution can
take into account unpredictable delays caused by switch
updates. However, it may not always find a solution, even if one
exists. In contrast, we obtain a completeness guarantee for our static
algorithm. Ludwig et al.~\cite{LRFS14} consider ordering updates for 
waypointing properties. 

\paragraph*{Model checking.} Model checking has been used for network
verification~\cite{al2010flowchecker,mai2011debugging,kazemian2012header,khurshid2012veriflow,MTW14}. 
The closest to our work is the incremental checker NetPlumber~\cite{Kazemian}.
Surface-level differences include the specification languages (LTL
vs. regular expressions), and NetPlumber's lack of counterexample
output.
The main difference is incrementality: 
Netplumber restricts checking to ``probe nodes,"
keeping track of ``header-space" reachability information for those
nodes, and then performing property queries based on this.
In contrast, we look at the {\em property}, keeping
track of {\em portions of the property} holding at each node,
which keeps incremental rechecking times low.
The empirical comparison (Section~\ref{evaluation}) showed better
performance of our tool as a back-end for
synthesis. %

Incremental model checking has been studied
previously, with
\cite{SS94} presenting the first incremental
model checking algorithm, for alternation-free $\mu$-calculus. 
We consider LTL properties and specialize our algorithm to exploit the
no-forwarding-loops assumption. 
The paper \cite{CIMMN11} introduced an incremental algorithm, but
it is specific to the type of partial results produced by
IC3~\cite{B11}.  

\section{Conclusion}
\label{sec:conclusion}

We present a practical tool for automatically synthesizing
correct network update sequences from formal specifications. We discuss 
an efficient 
incremental model checker that performs orders of magnitude better
than state-of-the-art monolithic tools. Experiments on real-world
topologies demonstrate the effectiveness of our approach for
synthesis. In future work, we plan to explore both extensions to deal
with network failures and bandwidth constraints,
and deeper foundations of techniques for network updates.  

\ifanon\else

\acks

The authors would like to thank the PLDI reviewers and AEC members for their
insightful feedback on the paper and artifact, as well as Xin Jin, Dexter Kozen, 
Mark Reitblatt, and Jennifer Rexford for helpful 
comments. Andrew Noyes and Todd Warszawski contributed a number of
early ideas through an undergraduate research project. 
This work was supported by the NSF under awards CCF-1421752,
CCF-1422046, CCF-1253165, CNS-1413972, CCF-1444781, and CNS-1111698;
the ONR under Award N00014-12-1-0757; and gifts from Fujitsu Labs and
Intel.

\fi

\bibliographystyle{abbrvnat}
\bibliography{paper}


\optappendix {
\appendix
\section{Network Model Auxiliary Definitions}
\label{appnet}

We
first define what it means for a table to be
active, i.e. the controller contains an update that
will eventually produce that table.

\begin{definition}[Active Forwarding Table] 
Let $N$ be a network. The forwarding table $\tbl$ is \emph{active 
  in the epoch $\Epoch$} for the switch $\sw$ if
\begin{compactenum}
\item $\Epoch = 0$ and $\tbl$ is the initial table of $\sw$ in $N$, or
\item $\Epoch > 0$ and either (a) if there exists a command $(\sw',\tbl') \in C.\cmds$ such
that $\sw = \sw'$ and the number of 
$\wait$ commands preceding $(\sw,\tbl)$ in $C.\cmds$ is
$\Epoch$, then $\tbl=\tbl'$, or (b) if there does not exist such a
command, then $\tbl$ is the table active for the switch $\sw$ in epoch
$\Epoch-1$.  
\end{compactenum}
\end{definition}

\noindent Next we define what it means for an observation $o'$ to
succeed $o$.

\begin{definition}[Successor Observation] 
\label{def:succ}
Let $N$ be a network and let $o = (\sw,\pt,\pkt)$ and $o'
=(\sw',\pt',\pkt')$ be observations.  The observation $o'$ is a
\emph{successor of $o$ in $\Epoch$}, written $o
\overset{\Epoch}{\sqsubseteq} o'$, if either: 
\begin{compactitem}
\item there exists a switch $S_i$ and link $L_j$ such that $S_i.\sw
=  \sw$ and $S_i.\tbl$ is active in $\Epoch$ and $L_j.\loc = 
  (\sw,\pt_j)$ and $L_j.\loc' = (\sw',\pt')$ and $(\pt_j,\pkt') \in
  \denot{S_i.\tbl}(\pt,\pkt)$, or
\item there exists a switch $S_i$, a link $L_j$, and a host $\host$
  such that $S_i.\sw = \sw$ and $S_i.\tbl$ is active in $\Epoch$
  and $L_j.\loc = (\sw,\pt')$ and $L_j.\loc' = \host$ and
  $(\pt',\pkt') \in \denot{S_i.\tbl}(\pt,\pkt)$.
\end{compactitem}
\end{definition}

\noindent Intuitively $o \overset{\Epoch}{\sqsubseteq} o'$ if the
packet in $o$ could have directly produced the packet in $o'$ in
$\Epoch$ by being processed on some switch. The two cases correspond
to an internal and egress processing steps.

\begin{definition}[Unconstrained Single-Packet Trace]
\label{defsingle_all}
Let $N$ be a network. The sequence $(o_1 \cdots o_l)$ is a \emph{unconstrained single-packet
  trace of $N$} if \( N \steps{o_1'} \dots \steps{o_k'} N_k \) such
that $(o_1 \cdots o_l)$ is a subsequence of $(o_1' \cdots o_k')$ for which
\begin{compactitem}
\item every observation is a successor of the preceding observation in
monotonically increasing epochs, and
\item if $o_1=o_j'=(\sw,\pt,\pkt)$, i.e.
\( N \steps{o_1'} \dots \steps{o_j'=o_1} N_j \steps{o_{j+1}'} \dots \steps{o_k'} N_k \),
then no $o_i' \in \{o_1', \cdots, o_{j-1}'\}$ precedes $o_1$, and
\item the $o_l$ transition is an {\sc Out}
terminating at a host.
\end{compactitem}
\end{definition}

\noindent Unconstrained single-packet traces %
are not required to begin at a host.
We write $\TracesAll{N}$ for the set of unconstrained single-packet traces generated by $N$,
and note that $\Traces{N} \subseteq \TracesAll{N}$.

\begin{definition}[Network Kripke Structure] 
\label{def:netkripke}
Let $N$ be a static network. We define a Kripke structure $\Kripke{N} = (Q,Q_0,\delta,\lambda)$ as follows. 
The set of states $Q$ comprises tuples of the form $(\sw,\pt,T_k)$. 
The set $Q_0$ contains states $(\sw,\pt,T_k)$ where $\sw$ and $\pt$ are adjacent to
an ingress link---i.e., there exists a link $L_j$ and host $\host$
such that $L_j.\loc = \host$ and $L_j.\loc' = (\sw,\pt)$. 
Transition relation $\delta$ contains all pairs of states
$(\sw,\pt,T_k)$ and $(\sw',\pt',T_k')$ where there exists a switch $S$
and a link $L$ such that $S.\sw = \sw$ and either:
\begin{compactitem}
\item there exists a link $L_j$ and packets $\pkt \in T_k$ and
  $\pkt'\in T_k'$ such that $L.\loc' = (\sw, pt)$ and $L_j.\loc =
  (\sw,\pt_j)$ and $L_j.\loc' = (\sw',\pt')$ and $(\pkt',\pt_j) \in
  \denot{S.\tbl}(\pkt,\pt)$.
\item there exists a link $L_j$, a host $\host$, and packets $\pkt \in
  T_k$ and $\pkt' \in T_k'$ such that $L.\loc' = (\sw, pt)$ and
  $L_j.\loc = (\sw,\pt')$ and $L_j.\loc' = \host$ and $(\pkt',\pt')
  \in \denot{S.\tbl}(\pkt,\pt)$.
\item $(\sw,\pt,T_k) = (\sw',\pt',T_k')$ and there exists a packet
  $\pkt \in T_k$ such that $L.\loc' = (\sw, pt)$ and $\denot{S.\tbl}
  (\pkt,\pt) = \{\}$.
\item $(\sw,\pt,T_k) = (\sw',\pt',T_k')$ and there exists a link $L_j$
  and host $\host$ such that $L_j.\loc = (\sw,\pt)$ and $L_j.\loc' =
  \host$.
\end{compactitem}
Finally, the labeling function $\lambda$ maps each state
$(\sw,\pt,T_k)$ to $T_k$, which captures the set of all possible
header values of packets located at switch $\sw$ and port $\pt$.
\end{definition}
\noindent The four cases of the $\delta$ relation correspond to
forwarding packets to an internal link, forwarding packets out an
egress, dropping packets on a switch, or
reaching an egress (inducing a self-loop).

We can relate the observations generated by a network $N$ and the
traces of the Kripke structure generated from it.
\begin{definition}[Trace Relation]
\label{def:traceequiv}
Let $N$ be a static network and $K$ a Kripke structure. Let $\lesssim$
be a relation on observations of $N$ and states of $K$ defined by
$(\sw,\pt,\pkt)~\lesssim~(\sw,\pt,T_k)$ if and only if $\pkt \in
T_k$. Lift $\lesssim$ to a relation on (finite) sequences of
observations and (infinite) traces by repeating the final observation
and requiring $\lesssim$ to hold pointwise: $o_1 \cdots
o_k~\lesssim~t$ if and only if $o_i~\lesssim~t_i$ for $i$ from $1$ to
$k$ and $o_k~\lesssim~t_j$ for all $j > k$.
\end{definition}

\begin{lemma}[Traces of a Stable Network]
Let $N$ be a stable network. Then for each trace $t \in \TracesAll{N}$, there exists a trace $t' \in \Traces{N}$
such that $t$ is a {\em suffix} of $t'$.
\end{lemma}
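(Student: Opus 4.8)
The plan is to extend any unconstrained trace \emph{backwards} to an ingress, turning it into a genuine single-packet trace of which it is a suffix. Fix $t = (o_1 \cdots o_l) \in \TracesAll{N}$ arising from an execution $N \steps{o_1'} \cdots \steps{o_k'} N_k$, and write $o_1 = (\sw,\pt,\pkt)$. By Definition~\ref{defsingle_all} no earlier observation in the execution precedes $o_1$, so the packet carrying $o_1$ is either injected by an \textsc{In} step or is already present in the initial state $N$. In the former case, since $o_1$ has no predecessor observation, $o_1$ must occur at the very ingress port, so $t$ already satisfies Definition~\ref{defsingle} and is a suffix of itself; the interesting case is when the packet is initially present. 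Here stability is the workhorse: every packet carries a single common epoch $\Epoch$, so the table active in $\Epoch$ on each switch is exactly that switch's fixed table in the static network $N$.

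Next I would walk backwards along incoming links from $(\sw,\pt)$. Just before $o_1$ the packet sits on the link $L$ with $L.\loc' = (\sw,\pt)$. If $L.\loc$ is a host, then $(\sw,\pt)$ is already an ingress and the extension is a single \textsc{In} step. Otherwise $L.\loc = (\sw_0,\pt_0)$ is a switch-port, and by inspection of the operational rules a packet can reside on $L$ only because $\sw_0$ emitted it via \textsc{Forward}, which requires a rule in $\sw_0$'s active table matching class $\pkt$ and forwarding it out $\pt_0$. This yields a valid reverse step $o_0 \overset{\Epoch}{\sqsubseteq} o_1$ with $o_0 = (\sw_0,\pt_0',\pkt)$, the header being unchanged since we do not model packet modification. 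Iterating produces a chain $p = (p_1 \cdots p_r)$ ending at $o_1$; because correct configurations are loop-free (equivalently $\Kripke{N}$ is DAG-like), no location repeats, so the chain is finite and terminates at a host, i.e. an ingress.

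I would then exhibit an execution of $N$ that (i) injects a fresh class-$\pkt$ packet via \textsc{In} at that ingress host, (ii) \textsc{Process}es and \textsc{Forward}s it along the reverse chain, realizing $p$ and delivering it to $(\sw,\pt)$, and (iii) drives it through exactly the steps witnessing $t$. Step (iii) is sound because the successor relation $\sqsubseteq$ depends only on the (fixed) active tables and on the unchanged header, so the fresh packet has the same successors along $t$; epochs remain constant at $\Epoch$, meeting the monotonicity requirement. The observation subsequence for the fresh packet is then $p \cdot t$: its first observation $p_1$ is immediately preceded by the injecting \textsc{In} with no intervening predecessor, and its last observation $o_l$ is the \textsc{Out} inherited from $t$. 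Hence $p \cdot t \in \Traces{N}$ by Definition~\ref{defsingle}, and $t$ is a suffix of it. If one prefers, Lemma~\ref{lem:kripke:sound} lets the same argument be recast as reachability of the state $(\sw,\pt,T)$ from $Q_0$ in $\Kripke{N}$.

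The main obstacle is the reverse step above: guaranteeing that an initially-present packet actually occupies an ingress-reachable location, so that a matching upstream rule exists at every backward step. This is precisely where stability is indispensable---it forces all buffered packets into the single consistent configuration, ruling out packets stranded by a half-finished update for which no coherent reverse path need exist. I would discharge it via the operational-semantics inspection (a packet on a switch-to-switch link must have been \textsc{Forward}ed under the active table), invoking loop-freedom only to terminate the backward walk.
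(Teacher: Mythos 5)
First, a point of reference: the paper states this lemma in Appendix~\ref{appnet} \emph{without} proof, so there is no official argument to compare yours against; your attempt has to stand on its own. Its overall shape---extend an unconstrained trace backwards along the configuration until an ingress is reached---is the natural one, and you correctly identify where the difficulty lies. But your discharge of that difficulty does not work, and in fact cannot work without strengthening the lemma's hypotheses.

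The gap is the backward step. You claim that ``by inspection of the operational rules a packet can reside on $L$ only because $\sw_0$ emitted it via \textsc{Forward},'' and later that a packet on a switch-to-switch link ``must have been \textsc{Forward}ed under the active table.'' That inference is only valid for packets that appear on a link \emph{during} an execution; it says nothing about packets present in the initial state $N$, which is exactly the case you need. The operational semantics constrains transitions, not initial states: a link element $L$ with a nonempty $\pkts$ queue is a perfectly legal initial network with no history whatsoever. (Even for executed steps the provenance chain is \textsc{Process}-then-\textsc{Forward}; \textsc{Forward} itself never consults the table, so the ``matching rule'' comes from an earlier \textsc{Process} that, for an initial packet, never happened.) Stability does not close this gap either---by definition it only says all packets carry the same epoch annotation, not that their locations are coherent with the active tables. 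Concretely: take hosts $H_1,H_2$, switches $S_1,S_2$, links $H_1\to(S_1,1)$, $(S_1,2)\to(S_2,1)$, $(S_2,2)\to H_2$; let $S_2$ forward port $1$ to port $2$, let $S_1$ drop everything, and place a single epoch-$0$ packet on the middle link. This network is static and stable; the packet's exit through $S_2$ yields a trace in $\TracesAll{N}$, yet $\Traces{N}$ is empty (every packet injected at $H_1$ dies at $S_1$), so no extension $t'$ exists. Hence the lemma as literally stated is false, and your proof implicitly supplies the missing hypothesis---that every initially-present packet (whether on a link or buffered in $S.\pairs$, a case you also omit) has a provenance consistent with the current configuration, e.g.\ that $N$ is reachable from a packet-free stable configuration. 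Under such an added assumption your backward walk would go through; without it, the key step you flag as ``the main obstacle'' cannot be discharged by any argument.
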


\begin{lemma}[Trace-Equivalence]
\label{lem:trace:eq}
Let $N_1, N_n$ be static networks where $N_1 \rightarrow \cdots \rightarrow N_n$
and no transition is an update command. For a single-packet trace $t$, we have $t \in \Trace(N_1) \iff t \in \Trace(N_n)$.
\end{lemma}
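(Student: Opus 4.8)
The plan is to reduce both directions of the biconditional to a single buffer-independent characterization of $\Traces{N}$ for static $N$, and then observe that the data shared by $N_1$ and $N_n$ fixes that characterization.

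First I would pin down the structure of the run $N_1 \to \cdots \to N_n$. Since $N_1$ is static, $C.\cmds$ is empty, and every control-plane rule (\textsc{Update}, \textsc{Incr}, \textsc{Flush}) requires a nonempty command list; hence the only rules that can fire are the data-plane rules \textsc{In}, \textsc{Out}, \textsc{Process}, \textsc{Forward}, none of which add commands. By induction every $N_i$ is static, and all of them share the same forwarding tables and the same controller epoch $e = C.\Epoch$; only the packet buffers ($\pairs$ on switches and $\pkts$ on links) change along the run. Because static networks have $C.\cmds$ empty, the Active Forwarding Table definition gives, for each switch, the same active table in every epoch, namely its current table; consequently the successor relation $\overset{e}{\sqsubseteq}$ of Definition~\ref{def:succ} and the sets of ingress/egress locations are literally identical for $N_1$ and $N_n$.

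The heart of the argument is the following characterization, which I would prove for an arbitrary static network $N$ with epoch $e$: a loop-free sequence $(o_1 \cdots o_l)$ lies in $\Traces{N}$ iff $o_1$ is an observation at an ingress location, $o_i \overset{e}{\sqsubseteq} o_{i+1}$ for every $i$, and $o_l$ is an \textsc{Out} observation terminating at a host. The forward (soundness) direction is immediate from Definition~\ref{defsingle}: consecutive observations are successors in a constant epoch (no \textsc{Incr} fires in a static network), the \textsc{In}-clause forces $o_1$ at an ingress, and the final clause forces the closing \textsc{Out}. The reverse (completeness) direction is what I expect to be the main obstacle, since a raw successor-chain must be realized by an actual run of the operational semantics starting from a state whose buffers may already be populated and whose links are FIFO. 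I would handle this by first \emph{draining}: using loop-freedom---the only cycles in the induced forwarding graph are self-loops at egress, so every buffered packet exits or is dropped after finitely many \textsc{Process}/\textsc{Forward}/\textsc{Out} steps---I reduce to a static network with empty buffers and the same tables and epoch. From that state I inject $\pkt$ with \textsc{In} (stamped with $e$) onto the ingress link feeding the port of $o_1$, and then alternately apply \textsc{Process} and \textsc{Forward} to push the single in-flight packet along the chain, each step licensed by $o_i \overset{e}{\sqsubseteq} o_{i+1}$, closing with \textsc{Out}. Since no other packets remain, FIFO ordering never blocks the packet, and the produced observation sequence contains $(o_1 \cdots o_l)$ as a subsequence meeting all three clauses of Definition~\ref{defsingle}.

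Finally I would close the argument: the characterization refers only to the forwarding tables, the topology (ingress/egress locations), and the epoch $e$---all of which coincide for $N_1$ and $N_n$ by the first step. Hence a single-packet trace $t$ satisfies the characterization for $N_1$ exactly when it does for $N_n$, giving $t \in \Traces{N_1} \iff t \in \Traces{N_n}$. The one bookkeeping point to verify carefully is that all freshly injected packets receive epoch $e$ and that the active-table computation really collapses to the current table for static networks, so that \emph{successor in epoch $e$} means the same relation on both sides.
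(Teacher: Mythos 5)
Your proposal is correct, but there is no paper proof to measure it against: Lemma~\ref{lem:trace:eq} is stated in Appendix~\ref{appnet} among the auxiliary definitions and is never proved---the authors evidently regard it as immediate from the definitions. Your argument supplies exactly the details that make it immediate, and the route is sound. The two load-bearing observations are right: (i) since $N_1$ is static, $C.\cmds$ is empty, so \textsc{Update}, \textsc{Incr}, and \textsc{Flush} can never fire, every intermediate network is static, and tables, epoch, and link endpoints are invariant along the run, with only buffer contents changing; and (ii) for a static network the Active Forwarding Table definition collapses to the current table in every epoch, so the successor relation and hence membership in $\Traces{N}$ can be characterized independently of buffers, which you then realize operationally by draining, injecting via \textsc{In}, and alternating \textsc{Process}/\textsc{Forward}. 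A few remarks on economy and on small soft spots. The inclusion $\Traces{N_n} \subseteq \Traces{N_1}$ has a one-line proof that avoids your realization machinery entirely: prepend the given run $N_1 \steps{} \cdots \steps{} N_n$ to any run of $N_n$ witnessing $t$; all three clauses of Definition~\ref{defsingle} are preserved because the window between the \textsc{In} transition and $o_1$ is unchanged. So only the direction $\Traces{N_1} \subseteq \Traces{N_n}$ genuinely needs the buffer-independent characterization. Second, your draining step silently invokes the standing loop-freedom assumption (the paper enforces loop-free configurations, but the lemma does not list it as a hypothesis), and strictly speaking you can only guarantee emptying the \emph{link} queues, not the switch buffers---a pair whose output port has no attached link can never be forwarded; this is harmless because switch buffers are multisets from which \textsc{Forward} may select any element, so they never block, and link queues are the only FIFO obstruction. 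Finally, draining is avoidable altogether: the subsequence form of Definition~\ref{defsingle} tolerates interleaved observations of other packets, and the only clause they could threaten---the no-predecessor window before $o_1$---is trivially satisfied because \textsc{In} prepends the fresh packet to the \emph{head} of the ingress queue, so the \textsc{Process} step producing $o_1$ can fire immediately after the \textsc{In}. None of these points is a gap; they are simplifications and clarifications of an argument that is already correct.
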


\begin{lemma}[Induced Sequence of Networks]
\label{lem:induce:seq}
Let $N_1$ be a static network, and let $N_1'$ be the network obtained by emptying all packets from $N_1$.
Let $\cmds$ be a sequence of commands, and let $c_1 \cdots c_{n-1}$ be the subsequence of update commands.
Construct the sequence $N_1' \rightarrow \cdots \rightarrow N_n'$ of empty networks by executing the update
commands in order. Now, given any sequence $N_1 \rightarrow \cdots \rightarrow N_n$ induced by $\cmds$,
we have $N_i \simeq N_i'$ for all $i$.
\end{lemma}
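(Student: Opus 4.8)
The plan is to reduce the claim to two facts: (i) for a \emph{static} network, the set of single-packet traces $\Traces{N}$ is a function of the forwarding tables and the (fixed) topology alone, and is independent of whatever packets happen to be buffered in $N$; and (ii) at every index $i$, the networks $N_i$ and $N_i'$ carry the \emph{same} forwarding tables --- namely, the tables of $N_1$ after applying the update commands $c_1 \cdots c_{i-1}$. Granting (i) and (ii), $N_i$ and $N_i'$ (both static) differ only in their buffered packets, so $\Traces{N_i} = \Traces{N_i'}$, i.e. $N_i \simeq N_i'$.

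To establish (i), I would observe that for a static network $C.\cmds$ is empty, so by the definition of an active table the table active in \emph{every} epoch is simply the switch's current table; hence the successor relation $\sqsubseteq$ (Definition~\ref{def:succ}) is fixed by the current tables. Moreover, Definition~\ref{defsingle} forces every single-packet trace to originate from an \textsc{In} transition at an ingress, so the packets already sitting in $N$ never begin a single-packet trace and cannot affect $\Traces{N}$. Thus two static networks over the same topology with identical tables are trace-equivalent.

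For (ii) I would argue by induction on $i$, tracking the actual tables along the underlying operational run of $\cmds$ (the $\to$ in the statement). The canonical side is immediate from the construction: $N_1'$ is $N_1$ with its packets removed, so it has $N_1$'s tables, and $N_{i+1}' = N_i'[\sw \gets \tbl]$ for $c_i = (\sw,\tbl)$, giving $N_i'$ the tables of $N_1$ updated by $c_1 \cdots c_{i-1}$. For the induced side, the key observation is that among the rules of Figure~\ref{fig:model} only \textsc{Update} modifies a forwarding table, and the update transitions occurring before the $i$-th static network are exactly $c_1, \dots, c_{i-1}$, applied in order, to $N_1$'s tables. Lemma~\ref{lem:trace:eq} is what lets me pass between the operational snapshots (which still carry pending controller commands and in-flight packets) and the static networks $N_i$ of the induced sequence: the non-update transitions separating consecutive updates leave every table untouched while preserving the trace set, which is exactly what the constraint $N_i[\sw \gets \tbl] \simeq N_{i+1}$ records. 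Hence $N_i$ and $N_i'$ carry identical tables, and (i) closes the induction.

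The main obstacle is carrying out (ii) \emph{without} the tempting (but false) shortcut of a one-line induction claiming that $N_i \simeq N_i'$ would give $N_{i+1} \simeq N_i[\sw\gets\tbl] \simeq N_i'[\sw\gets\tbl] = N_{i+1}'$ by ``congruence'' of $\simeq$ under switch update. Trace-equivalence does \emph{not} pin down the tables on switches that are currently unreachable, and a subsequent update can make such a switch reachable, exposing a forwarding difference between two previously trace-equivalent networks --- precisely the situation of the initially-unreachable switches A2 and A4 in Section~\ref{overview}. The argument must therefore descend below $\simeq$ and follow the concrete tables through the execution, the remaining work being the bookkeeping that matches each static $N_i$ to the corresponding stretch of the operational run and verifies that the list of table-replacements applied to reach index $i$ is literally $c_1 \cdots c_{i-1}$ on both the induced and the canonical side.
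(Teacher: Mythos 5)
The paper states Lemma~\ref{lem:induce:seq} in Appendix~\ref{appnet} \emph{without} proof, so your proposal has to stand on its own; and while your decomposition---(i) traces of a static network depend only on tables and topology, (ii) the induced and canonical sequences carry the same tables---is surely the intended argument, your execution of (ii) has a genuine gap, and it is exactly the one your own last paragraph warns about. The paper's Network Update definition constrains $N_{i+1}$ \emph{only} by $N_i[\sw \gets \tbl] \simeq N_{i+1}$; it attaches no operational run to the induced sequence (indeed it cannot: the $N_i$ are static, so no \textsc{Update} transition can even fire from them). Hence there is no ``underlying operational run of $\cmds$'' whose table-replacements you can follow, and your sentence claiming that table preservation ``is exactly what the constraint $N_i[\sw \gets \tbl] \simeq N_{i+1}$ records'' is precisely where the proof breaks: the constraint records strictly less, namely equality of trace sets, and---as you observe yourself---trace sets do not determine the tables of unreachable switches. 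So ``hence $N_i$ and $N_i'$ carry identical tables'' does not follow, and the ``remaining bookkeeping'' you defer cannot be carried out, because the definition supplies no execution to book-keep.

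The gap is not pedantic: under the written definition the lemma is \emph{false}, by the very scenario you flagged. Take hosts $H_1,H_2$ and switches $A,B,C$ on a directed line $H_1 \rightarrow A \rightarrow B \rightarrow C \rightarrow H_2$, let every table in $N_1$ drop, and let $\cmds$ contain $c_1 = (A,\tbl_A)$ (forward to $B$) and then $c_2 = (B,\tbl_B)$ (forward to $C$). Canonically, $N_3'$ still drops at $C$, so $\Traces{N_3'} = \emptyset$, since a single-packet trace must end with an \textsc{Out} at a host. For the induced sequence, choose $N_2$ to be $N_1[A \gets \tbl_A]$ with $C$'s table replaced by one forwarding to $H_2$: no packet reaches $C$ in either network, so both trace sets are empty, $N_1[A \gets \tbl_A] \simeq N_2$ holds, and this is a legal induced step. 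Then $N_3 = N_2[B \gets \tbl_B]$ has a complete $H_1$-to-$H_2$ trace, so $N_3 \not\simeq N_3'$. The only way to complete your argument is to make the operational reading an explicit hypothesis---strengthen ``induced'' so that $N_{i+1}$ agrees with $N_i[\sw \gets \tbl]$ on all forwarding tables and differs at most in buffered packets (clearly the authors' intent, and the reading under which Lemma~\ref{lem:trace:eq} plays its role). With that strengthening your steps (i) and (ii) go through immediately; without it, no bookkeeping closes the gap. In effect you have half-detected an imprecision in the paper's formalization; the repair belongs in the definition of induced sequences, not in the proof.
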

 
\noindent In other words, any induced sequence of static networks is pointwise trace-equivalent to the
unique sequence of {\em network configurations} generated by running the update commands in order.

\section{Synthesis Algorithm Correctness Proofs}
\label{app:dfs}

{
\renewcommand{\thelemma}{\ref{lem:kripke:sound}}
\begin{lemma}[Network Kripke Structure Soundness]
Let $N$ be a static network and $K = \Kripke{N}$ a network Kripke
structure. For every single-packet trace $t$ in $\Traces{N}$ there
exists a trace $t'$ of $K$ from a start state such that $t \lesssim
t'$, and vice versa.
\end{lemma}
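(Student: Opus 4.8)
The plan is to establish a step-by-step correspondence between the successor relation $\sqsubseteq$ on observations (Definition~\ref{def:succ}) and the transition relation $\delta$ of $\Kripke{N}$ (Definition~\ref{def:netkripke}), and then lift this local correspondence to whole traces by induction, checking the relation $\lesssim$ of Definition~\ref{def:traceequiv} at each step. The crucial simplification I would exploit is that $N$ is \emph{static}, so $C.\cmds$ is empty: no \textsc{Update}, \textsc{Incr}, or \textsc{Flush} step can fire, the forwarding tables are fixed, and every packet carries the same epoch throughout any execution. Consequently the ``monotonically increasing epochs'' and active-table conditions of Definition~\ref{defsingle} hold trivially, and a single-packet trace is simply the hop-by-hop journey of one packet from an ingress to an egress under the single static table.

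First I would prove the local correspondence. Comparing the two cases of Definition~\ref{def:succ} with the first two cases of $\delta$ in Definition~\ref{def:netkripke}, one sees they were written to match: $(\sw,\pt,\pkt)~\sqsubseteq~(\sw',\pt',\pkt')$ is realizable in $N$ if and only if $\delta\big((\sw,\pt,T),(\sw',\pt',T')\big)$ holds for traffic classes $T \ni \pkt$ and $T' \ni \pkt'$, where the internal-processing case lines up with $\delta$-case~1 and the egress case with $\delta$-case~2. Since the Kripke structure does not model packet modification, the traffic class is constant along any such chain, so $T = T'$ throughout. The remaining $\delta$-cases~3 and~4 (dropping, and reaching an egress) produce exactly the self-loops that make a state a sink.

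For the forward direction, given a single-packet trace $t = o_1 \cdots o_l$ with $o_i = (\sw_i,\pt_i,\pkt_i)$, I would map each $o_i$ to the state $q_i = (\sw_i,\pt_i,T)$ with $\pkt_i \in T$. The ingress condition of Definition~\ref{defsingle} places $o_1$ at a switch-port adjacent to a host link, which is precisely the membership condition for $Q_0$; the local correspondence yields $\delta(q_i,q_{i+1})$ for each consecutive $\sqsubseteq$-step; and the \textsc{Out} terminating $o_l$ makes $(\sw_l,\pt_l)$ adjacent to an egress link, so $\delta$-case~4 supplies a self-loop at $q_l$. Taking $t' = q_1 \cdots q_l\, q_l\, q_l \cdots$ then gives a trace of $K$ from a start state, and $t \lesssim t'$ holds because $o_i \lesssim q_i$ by construction and the repetition of the final observation demanded by Definition~\ref{def:traceequiv} matches the terminal self-loop.

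The backward (``vice versa'') direction is where the real work lies, and I expect it to be the main obstacle. Since $K$ is complete and DAG-like, any trace $t' = q_0 q_1 \cdots$ from $q_0 \in Q_0$ traverses finitely many distinct non-sink states and then self-loops on a sink $q_l$ forever. When $q_l$ is an egress sink ($\delta$-case~4) I must \emph{realize} the abstract path as an honest execution of $N$: choose packets $\pkt_i \in T_i$ consistently (the local correspondence guarantees each $\delta$-step is witnessed by a genuine $\sqsubseteq$-step), inject $\pkt_0$ at the ingress with an \textsc{In} step, forward it hop-by-hop with \textsc{Process}/\textsc{Forward}, and finish with an \textsc{Out} at the egress host. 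The delicate part is checking that the resulting observation sequence meets the precise subsequence requirements of Definition~\ref{defsingle}---in particular the ingress-origin and no-predecessor conditions---which I would handle by running the whole journey in a single epoch (possible since $N$ is static) and introducing no other packets, so that the chosen packet's observations form exactly the desired subsequence. A separate point that must be addressed is that a sink arising from $\delta$-case~3 (a drop) has no single-packet-trace counterpart, since a dropped packet never produces a terminating \textsc{Out}; accordingly the correspondence in this direction is understood for the egress-terminating $K$-traces, which are the ones relevant to the reachability-style properties being checked.
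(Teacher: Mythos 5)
Your proposal is correct, and its core is the same as the paper's: the successor relation of Definition~\ref{def:succ} and the transition relation of Definition~\ref{def:netkripke} were written to mirror one another, and the proof lifts that local correspondence to whole traces, checking $\lesssim$ pointwise. But the packaging genuinely differs, and yours is the more careful of the two. The paper argues by induction on the length of the finite prefix, handling both directions inside the inductive step; that induction is formally shaky because the induction hypothesis is invoked on objects outside its domain --- the length-$k$ prefix of a single-packet trace is not itself a single-packet trace (it does not end with an \textsc{Out} at a host), and the truncation $s_1,\ldots,s_{k-1},s_k,s_k,\ldots$ of a Kripke trace is not a trace of a DAG-like structure unless $s_k$ happens to be a sink. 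Your direct constructions (map observations to states going forward; realize the finite path as an honest execution of $N$ via \textsc{In}, \textsc{Process}/\textsc{Forward}, \textsc{Out} going backward, using staticness to trivialize the epoch and active-table side conditions) sidestep both defects, at the price of having to verify the subsequence conditions of Definition~\ref{defsingle} explicitly, which you correctly identify as the real work. Your proposal also diverges from the paper on the drop case, and here you are right and the lemma as literally stated is not: a $K$-trace ending in a case-3 (drop) self-loop has no $\lesssim$-related element of $\Traces{N}$, since Definition~\ref{defsingle} forces single-packet traces to end with an \textsc{Out} and the lifting of $\lesssim$ pins the final observation to the repeated sink state. The paper's proof never confronts this (its base case even cites case 3 of Definition~\ref{def:netkripke} where the egress self-loop, case 4, is meant); your explicit restriction of the ``vice versa'' direction to egress-terminating traces is the honest repair, though you should note it proves a correspondingly weakened statement --- a defect of the lemma itself rather than of your argument.
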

\addtocounter{lemma}{-1}
}
\begin{proof}
We proceed by induction over $k$, the length of the (finite prefix of the) trace.
The base case $k=1$ is easy to see, since the lone observation in $t$ must
be on an ingress link, meaning the corresponding state in $K$ will be an
initial state with a self-loop (case 3 of Definition \ref{def:netkripke}),
and these are equivalent via Definition \ref{def:traceequiv}.

For the inductive step ($k > 1$), we wish to show both directions of subtrace relation $\lesssim$
to conclude equivalence. First, let $t=o_1,\cdots,o_{k+1}$ be a single-packet trace of length $k+1$
in $\Traces{N}$, and we must show that  
$\exists t' \in \Kripke{N}$ such that $t \lesssim t'$. Let $t^k$ be the prefix of $t$
having length $k$. By our induction hypothesis, there exists $t'^k = s_1,\cdots,s_{k-1},s_k,s_k,\cdots \in \Kripke{N}$
such that $t^k \lesssim t'^k$. We have the successor relation $o_{k} \sqsubseteq o_{k+1}$,
so Definition \ref{def:succ} and \ref{def:netkripke} tells us that we have a transition $s_k \rightarrow s'$
for some $s' \in K$. We see that this $s'$ is exactly what we need to construct $t'=s_1,\cdots,s_k,s',s',\cdots$
which satisfies the relation $t \lesssim t'$.

Now, let $t' = s_1,\cdots,s_k,s_{k+1},s_{k+1},\cdots$ be a trace in $\Kripke{N}$ for which the finite prefix has length $k+1$.
We must show that $\exists t \in \Traces{N}$ such that $t \lesssim t'$.
Let $t'^k = s_1,\cdots,s_{k-1},s_k,s_k,\cdots$, and by our induction hypothesis,
and there exists $t^k = o_1,\cdots,o_k$ such that $t^k \lesssim t'^k$.
Consider transition $s_k \rightarrow s_{k+1}$. If $s_k = s_{k+1}$, then $t' = t'^k$, so we can 
let $t = t^k$, and conclude that $t \lesssim t'$. Otherwise, if $s_k \not= s_{k+1}$, then we have one of
the first two cases in Definition \ref{def:netkripke}, which correspond to the cases in Definition \ref{def:succ},
allowing us to construct an $o_{k+1}$ such that $o_{k} \sqsubseteq o_{k+1}$. We let $t = o_1,\cdots,o_k,o_{k+1}$,
and conclude that $t \lesssim t'$.
\end{proof}


\noindent We want to develop a lemma showing that the correctness of careful command sequences can be
reduced to the correctness of each induced $N_i$, so we start with the following
auxiliary lemma:

\begin{lemma}[Traces of a Careful Network]
\label{lem:careful:trace}
Let $N$ be a stable network with $C.\cmds$ careful, and consider a sequence of
static networks induced by $C.\cmds$. For every trace $t \in
\Traces{N}$ there exists a stable static network $N_i$ in the sequence
s.t. $t \in \Traces{N_i}$.
\end{lemma}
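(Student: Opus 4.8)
The plan is to exploit carefulness to show that, although $\Traces{N}$ ranges over \emph{all} interleavings of the commands in $C.\cmds$ with packet processing, any single packet is processed under a configuration that changes in at most one switch during its lifetime, so its trace is in fact realizable in a single static configuration of the induced sequence.

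First I would fix a trace $t = o_1 \cdots o_l \in \Traces{N}$ together with a witnessing run $N \steps{o_1'} \cdots \steps{o_k'} N_k$ as in Definition~\ref{defsingle}. Since packets are never modified and the {\sc In} rule stamps the packet with the current epoch, the single packet underlying $t$ carries one fixed epoch annotation, say $j$, throughout its journey. I would then read off the timeline of the controller: because $C.\cmds$ is careful, every two switch updates are separated by a $\wait = \epoch;\flush$, so the epoch counter advances by exactly one between consecutive updates. Tracking the \emph{active forwarding table} against this timeline shows that, while the epoch equals $j$, exactly one update command $u$ is applied, moving the global configuration from some $N_i$ to its successor $N_{i+1}$ in the induced sequence.

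The key step is the flush argument. A packet stamped with epoch $j$ is injected after the $\flush$ that precedes $u$ and must exit before the next $\flush$ (the one that blocks until every packet annotated with $\le j$ has left the network), and that next $\flush$ strictly precedes the following switch update. Hence, during the lifetime of our packet, no switch other than the one touched by $u$ can change its table, and that single switch $s$ differs only between $N_i$ and $N_{i+1}$. By loop-freedom of $t$ (and since packets are unmodified) the packet is processed on $s$ at most once, so it reads a single table there: either the old one, in which case every processing step of $t$ agrees with $N_i$, or the new one, in which case every step agrees with $N_{i+1}$ (and if $s$ is never visited, $t$ agrees with both). Reassembling the successor relation $\sqsubseteq$ and the conditions of Definition~\ref{defsingle}, this exhibits $t$ as a single-packet trace of $N_i$ or of $N_{i+1}$.

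Finally, I would discharge the stability requirement: by Lemma~\ref{lem:induce:seq} each member of the induced sequence is trace-equivalent ($\simeq$) to the corresponding network configuration, which contains no packets and is therefore vacuously stable, and $\Traces{\cdot}$ is invariant under $\simeq$; so the stable configuration trace-equivalent to $N_i$ (resp.\ $N_{i+1}$) is the desired stable static network in the sequence containing $t$. The main obstacle I anticipate is the bookkeeping in the flush argument --- precisely correlating a packet's epoch stamp with the window of updates that may be applied during its lifetime, and ruling out the packet being processed by the updated switch under two different tables; loop-freedom of $t$ together with the atomicity of a single switch update is what closes this gap.
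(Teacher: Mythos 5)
Your proposal is correct and follows essentially the same route as the paper's proof: first using carefulness together with the flush/epoch semantics to show that at most one switch update can occur during a single packet's lifetime (you argue this directly from the packet's epoch stamp, the paper by contradiction, but the underlying facts are identical), and then splitting on whether the packet traverses the updated switch before or after the update to place its trace in $N_i$ or $N_{i+1}$. Your explicit appeals to loop-freedom and to Lemma~\ref{lem:induce:seq} for stability only make precise details the paper leaves implicit.
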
 

\begin{proof}
I. First, we show that at most one update transition can be involved in the
trace. In other words, if \( N \steps{o_1'} \dots \steps{o_k'} N_k \)
where $t = o_1 \cdots o_n$ is a subsequence of $o_1' \cdots o_k'$, and if
$f : \mathbb{N} \rightarrow \mathbb{N}$ is a bijection between $o_i$
indices and $o_i'$ indices, then at most one of
the transitions $o_{f(1)}',\cdots,o_{f(n)}'$ is an {\sc Update} transition.

Assume to the contrary that there are more than one such transitions,
and consider two of them, $o_i', o_j'$ where $i,j\in\{f(1),\cdots,f(n)\}$,
assuming without loss of generality that $i < j$.
Now, since the sequence $C.cmds$ is careful, we must have
both an {\sc Incr} and {\sc Flush} transition between $o_i'$ and $o_j'$.
This means that the second update $o_j'$ cannot happen while the
trace's packet is still in the network, i.e. $j > f(n)$,
and we have reached a contradiction.

II. Now, if there are zero update transitions, we are done, since the trace
is contained in the first static $N$. If there is one update
transition $N_{k+1} = N_k[\sw \gets \tbl]$, and this update occurs
before the packet reaches $\sw$ in the trace, then the trace is fully
contained in $N_{k+1}$. Otherwise, the trace is fully contained in $N_k$.
\end{proof}

\balance %

{
\renewcommand{\thelemma}{\ref{lem:careful:corr}}
\begin{lemma}[Careful Correctness]
Let $N$ be a stable network with $C.\cmds$ careful and let $\varphi$ be an LTL
formula. If $\cmds$ is careful
and $N_i
\models \phi$ for each static network in any sequence induced by
$\cmds$,
then $\cmds$ is correct with respect to $\varphi$.
\end{lemma}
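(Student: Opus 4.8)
The plan is to obtain this lemma as a short corollary of the \emph{Traces of a Careful Network} lemma (Lemma~\ref{lem:careful:trace}), which already carries the genuine content. First I would unfold the definition of correctness (Update Correctness): writing $\hat{N}$ for the network carrying the careful command sequence $\cmds$, the claim that $\cmds$ is correct with respect to $\varphi$ is by definition $\hat{N} \models \varphi$, and since $N \models \varphi$ abbreviates $\Traces{N} \models \varphi$, this unfolds to the statement that $t \models \varphi$ for every single-packet trace $t \in \Traces{\hat{N}}$.

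I would then fix an arbitrary $t \in \Traces{\hat{N}}$ and show $t \models \varphi$. Since $\hat{N}$ is stable and its command sequence is careful, Lemma~\ref{lem:careful:trace} applies and produces a stable static network $N_i$ occurring in a sequence induced by $\cmds$ with $t \in \Traces{N_i}$. The hypothesis gives $N_i \models \varphi$, i.e.\ every trace of $N_i$ satisfies $\varphi$; instantiating this at $t$ yields $t \models \varphi$. As $t$ was arbitrary, $\hat{N} \models \varphi$, which is precisely correctness of $\cmds$.

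The main difficulty does not live in this step but in the auxiliary lemma it invokes: the delicate fact is that carefulness forces each packet to encounter at most one updated switch, so that its entire end-to-end trace is realized within a single induced static configuration. That is exactly what Lemma~\ref{lem:careful:trace} supplies (its proof relying on the {\sc Incr}/{\sc Flush} pair that a careful sequence inserts between any two switch updates, preventing an in-flight packet from straddling two updates). Given that lemma, the present proof is pure quantifier bookkeeping---passing from a per-configuration guarantee to a per-trace guarantee via trace containment. The one point to check is that the hypothesis applies to whichever induced sequence witnesses $t$; this holds because the hypothesis is stated for \emph{any} sequence induced by $\cmds$.
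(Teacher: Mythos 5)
Your proposal is correct and follows essentially the same route as the paper: both reduce the claim to Lemma~\ref{lem:careful:trace} (Traces of a Careful Network), which localizes each single-packet trace of the careful network within a single induced static configuration, and then conclude by instantiating the per-configuration hypothesis $N_i \models \varphi$ at that trace. Your version is slightly more explicit about unfolding the definition of Update Correctness via $\hat{N}$ and about why the hypothesis covers whichever induced sequence witnesses the trace, but the substance is identical.
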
 
\addtocounter{lemma}{-1}
}

\begin{proof}
Consider a trace $t \in \Trace(N)$. From Lemma \ref{lem:careful:trace},
we have $t \in \Trace(N_i)$ for some $N_i$ in the induced sequence.
Thus $t \models \varphi$, since our hypothesis tells us that
$N_i \models \varphi$. Since this is true for an arbitrary trace,
we have shown that $\Trace(N) \models \varphi$, i.e. $N \models \varphi$,
meaning that $\cmds$ is correct with respect to $\varphi$.
\end{proof}

{
\renewcommand{\thetheorem}{\ref{thm:sound}}
\begin{theorem}[Soundness]
Given initial network $\NetPolicy_i$, final configuration
$\NetPolicy_f$, and LTL formula $\varphi$, if \textsc{OrderUpdate}
returns a command sequence $\cmds$, then
$\Step{\NetPolicy_i}{\cmds}{\NetPolicy'}$ s.t. $\NetPolicy' \simeq \NetPolicy_f$, and $\cmds$ is correct
with respect to $\varphi$ and $\NetPolicy_i$.
\end{theorem}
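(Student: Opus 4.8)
\noindent The plan is to peel the returned sequence apart along the unique spine of successful recursive calls and then discharge the two obligations using the correctness lemmas already in hand. \textsc{OrderUpdate} emits $\cmds$ only when the top-level \textsc{DFSforOrder} returns $(\mathit{true},\cmds)$. Inspecting the recursion, every switch-update command in the returned list is produced by the line returning $(\mathit{true},(\upd\ s')::\wait::L)$, so each $\upd\ s'$ is immediately followed by a $\wait$. Hence between any two switch updates there is a $\wait$, and the first thing I would record is that $\cmds$ is \emph{careful}; together with the problem's hypothesis that $\NetPolicy_i$ is a stable static network, this makes $\hat{\NetPolicy_i}$ (the network $\NetPolicy_i$ with commands $\cmds$) stable and careful, so the hypotheses of Lemma \ref{lem:careful:corr} will be available.

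Next I would establish property (i). Let $s_1,\dots,s_m$ be the switches updated along the successful spine, in order, and let $\NetPolicy^{(0)}=\NetPolicy_i,\dots,\NetPolicy^{(m)}$ be the configurations with $\NetPolicy^{(k)}$ obtained from $\swUpdate(\NetPolicy^{(k-1)},s_k)$; these are exactly the configurations the algorithm checks, and $\mathit{possibleUpdates}$ guarantees each $s_k$ is set to its $\NetPolicy_f$-table. The base case $\NetPolicy=\NetPolicy_f$ that terminates the spine gives $\NetPolicy^{(m)}=\NetPolicy_f$. By the definition of Network Update together with Lemma \ref{lem:induce:seq}, the update commands $\upd\ s_1,\dots,\upd\ s_m$ of $\cmds$ induce a sequence of static networks that is pointwise trace-equivalent to $\NetPolicy^{(0)},\dots,\NetPolicy^{(m)}$, whose last element $\NetPolicy'$ therefore satisfies $\NetPolicy'\simeq\NetPolicy^{(m)}=\NetPolicy_f$. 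This yields $\Step{\NetPolicy_i}{\cmds}{\NetPolicy'}$ with $\NetPolicy'\simeq\NetPolicy_f$.

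For property (ii) I would first show that each $\NetPolicy^{(k)}$ satisfies $\varphi$. This relies on the invariant that the labeling $\labeling$ received by each call on the spine is a correct labeling of its Kripke structure over all states: the base case is the top-level full $\modelCheck$, which recomputes from the empty labeling, and the inductive step is Theorem \ref{thm:relabel}, by which $\relabel$ turns the parent's correct labeling into one correct for the updated structure $\Kripke{\NetPolicy^{(k)}}$. Since each spine call passes the $\neg ok$ guard, its model-check returned $ok=\mathit{true}$; by Corollary \ref{cor:modcheck} this means $\Kripke{\NetPolicy^{(k)}}\models\varphi$, and by Lemma \ref{lem:kripke:sound} equivalently $\NetPolicy^{(k)}\models\varphi$. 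Finally, every static network $N_i$ in a sequence induced by $\cmds$ is trace-equivalent to some $\NetPolicy^{(k)}$ by Step (i), and $\simeq$ preserves LTL satisfaction, so $N_i\models\varphi$. As $\cmds$ is careful, Lemma \ref{lem:careful:corr} then gives that $\cmds$ is correct with respect to $\varphi$ (and $\NetPolicy_i$), which is property (ii).

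The main obstacle is the labeling invariant in the third paragraph: to invoke the incremental half of Corollary \ref{cor:modcheck} at a spine node I must know that the threaded-in $\labeling$ is a correct labeling of the \emph{pre-update} structure over all of $Q$, and to maintain the invariant I must know that after the update it stays correct on the untouched region $Q\setminus\ancestors_{\Kripke{\NetPolicy^{(k)}}}(S)$, which is precisely the hypothesis required by Theorem \ref{thm:relabel}. Verifying that $\swUpdate$ reports exactly the set $S$ of nodes whose outgoing transitions change, and that labels of non-ancestors of $S$ are unaffected, is the crux. Once that precondition is discharged along the spine, the remaining work is bookkeeping connecting the algorithm's configuration-level operations to the trace-level definitions through the bridging lemmas; the backtracking branches require no attention, since they do not contribute to the returned $\cmds$.
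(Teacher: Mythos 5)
Your proof is correct and follows essentially the same route as the paper's: part (i) from the structure of the successful recursion (each update sets a switch to its $\NetPolicy_f$-table, and the search terminates exactly when $\NetPolicy=\NetPolicy_f$), and part (ii) by observing that the returned sequence is careful, that every configuration on the successful spine passes the model check, and then invoking Lemma~\ref{lem:careful:corr}. The paper's own proof is much terser---it simply asserts that the model-checker calls ensure $\NetPolicy_j\models\varphi$---whereas you additionally make explicit (and correctly flag as the crux) the threaded labeling invariant behind Theorem~\ref{thm:relabel} and Corollary~\ref{cor:modcheck}, together with the bridging uses of Lemma~\ref{lem:kripke:sound} and Lemma~\ref{lem:induce:seq}, all of which the paper leaves implicit.
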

\addtocounter{theorem}{-1}
}

\begin{proof}
It is easy to show that if \textsc{OrderUpdate} returns $\cmds$, then
$\Step{\NetPolicy_i}{\cmds}{\NetPolicy'}$ where $\NetPolicy' \simeq \NetPolicy_f$. 
Each update in the returned sequence changes a switch configuration of one switch $\Switch$ to
the configuration $\NetPolicy_f(\Switch)$, and the algorithm terminates when
all (and only) switches $\Switch$ such that $\NetPolicy_i(\Switch)
\neq \NetPolicy_f(\Switch)$ have been updated.

Observe that if \textsc{OrderUpdate} returns $\cmds$, the sequence can be
made careful by choosing an adequate time delay between each update command,
and for all $j \in \{0,\cdots,n\}$, $\NetPolicy_j \models \varphi$. 
This is ensured by the call to a model checker (Line~\ref{line:modelcheck}).
We use Lemma~\ref{lem:careful:corr} to conclude that $\cmds$ is
correct with respect to $\varphi$ and $\NetPolicy_i$.
\end{proof}

To show that \textsc{OrderUpdate} is complete with respect to simple
and careful command sequences, we observe that \textsc{OrderUpdate}
searches through all simple and careful sequences.

{
\renewcommand{\thetheorem}{\ref{prop:complete}}
\begin{theorem}[Completeness]
Given initial network $\NetPolicy_i$, final configuration
$\NetPolicy_f$, and specification $\varphi$, if there exists a
simple, careful sequence $\cmds$ with
$\Step{\NetPolicy_i}{\cmds}{\NetPolicy'}$ s.t. $\NetPolicy' \simeq \NetPolicy_f$, then
\textsc{OrderUpdate} returns one such sequence.
\end{theorem}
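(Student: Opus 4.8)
The plan is to argue that \textsc{DFSforOrder} performs an exhaustive search over all simple, careful update sequences, modulo pruning that provably discards only dead ends. I would first characterize the search space: a simple careful sequence is determined by the \emph{set} of switches updated so far (its induced configuration), since each switch is updated at most once and every update moves that switch directly to its table in $\NetPolicy_f$, with a $\mathit{wait}$ inserted between updates. Thus the reachable configurations form the subset lattice between $\NetPolicy_i$ (nothing updated) and $\NetPolicy_f$ (all differing switches updated), and $\mathit{possibleUpdates}$ generates exactly the single-switch successors in this lattice. Ignoring pruning, the recursion visits every such configuration and hence realizes every simple careful sequence.

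Second, I would call a configuration $M$ \emph{good} if it admits a simple, careful completion to $\NetPolicy_f$ in which every induced configuration (including $M$ itself) satisfies $\varphi$. By the equivalence between correctness of a careful sequence and $\varphi$ holding at each induced configuration (Careful Correctness, Lemma~\ref{lem:careful:corr}, together with Lemmas~\ref{lem:trace:eq} and~\ref{lem:induce:seq}), the hypothesis that a correct simple careful sequence exists is precisely the statement that $\NetPolicy_i$ is good. The goal then reduces to: if $\NetPolicy_i$ is good, \textsc{OrderUpdate} returns some sequence.

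Third, I would verify that neither pruning mechanism discards a good configuration. For $W$: each disjunct added to $W$ comes from a genuine counterexample trace produced by the (sound, by Corollary~\ref{cor:modcheck}) model checker, and $\mathit{makeFormula}(\mathit{cex})$ generalizes it to exactly those configurations agreeing with $\mathit{cex}$ on the update-status of the switches it mentions---every such configuration still realizes $\mathit{cex}$ and hence violates $\varphi$, so $W$ characterizes only configurations failing $\varphi$, and a good configuration (which satisfies $\varphi$) never matches $W$. For $V$: the key facts are that goodness is a property of the configuration alone, independent of the path taken to reach it, and that \textsc{OrderUpdate} returns immediately upon the first success. I would prove, by induction on the number $r(M)$ of switches remaining to update, that \emph{under the assumption that \textsc{OrderUpdate} never succeeds}, the first invocation of \textsc{DFSforOrder} on any good $M$ returns $\mathit{true}$: such an $M$ passes the entry check (it is not in $W$, and were it already in $V$ the earlier full invocation on $M$ would, by the same induction, have returned $\mathit{true}$ and halted the search); it passes the model check since $M \models \varphi$; and if $M \neq \NetPolicy_f$ it has a good child $M'$ with $r(M') = r(M)-1$ on its completion and in $\mathit{possibleUpdates}(M)$, whose recursive call returns $\mathit{true}$ either directly by the induction hypothesis or because $M'$ was fully processed earlier (again forcing an earlier halt), so the loop returns $\mathit{true}$.

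Finally, applying this claim to $M = \NetPolicy_i$---whose first invocation is the top-level call, reached with $V = W = \mathit{false}$ and hence not pruned---yields that the top-level call returns $\mathit{true}$ under the no-success assumption, a contradiction. Hence \textsc{OrderUpdate} returns some sequence $\cmds$, and by Soundness (Theorem~\ref{thm:sound}) that sequence is itself a correct simple careful update with $\Step{\NetPolicy_i}{\cmds}{\NetPolicy'}$ and $\NetPolicy' \simeq \NetPolicy_f$. I expect the main obstacle to be the soundness of the visited-set pruning: one must show that memoizing a configuration in $V$ \emph{before} its subtree is fully explored cannot hide a solution, which hinges on path-independence of goodness and on interleaving the induction with the early-return-on-success behavior so that no good configuration is ever skipped while the search is still running.
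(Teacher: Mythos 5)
Your proof is correct, and it supplies substantially more than the paper does: the paper's entire argument for this theorem is the single sentence that \textsc{OrderUpdate} ``searches through all simple and careful sequences,'' offered without any analysis of the $V$ and $W$ pruning---which is precisely what makes completeness non-obvious, since the algorithm as written does \emph{not} literally enumerate all such sequences. Your proposal follows the same high-level route (exhaustive DFS over the lattice of configurations between $\NetPolicy_i$ and $\NetPolicy_f$), but adds the three ingredients the paper leaves implicit: that admitting a correct simple careful completion (``goodness'') is a property of a configuration rather than of the path reaching it; that $W$-pruning is safe because any configuration matching $\mathit{makeFormula}(\mathit{cex})$ still realizes the counterexample trace and hence violates $\varphi$; and---the crux---that $V$-memoization, which records a configuration \emph{before} its subtree has been exhausted, cannot hide a solution, which you establish by interleaving the induction on the number of remaining switches with the early-return-on-success behavior of the search. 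Two loose ends, neither fatal: your reduction of the hypothesis to ``$\NetPolicy_i$ is good'' quietly uses the \emph{converse} of Lemma~\ref{lem:careful:corr} (a correct careful sequence has every induced configuration satisfying $\varphi$), which the paper never states but which does hold, since any trace of an induced configuration is realized by a schedule of $\hat{N}$ that delays all later commands until the packet exits; and, like the paper, you read the $V \vee W$ entry check as applying to the post-update configuration, which is the evident intent of Figure~\ref{algo:order} though not its literal text (as written, the check would be applied to the caller's configuration, which the caller has just added to $V$). With those readings made explicit, your argument stands, and it is the more valuable of the two: the paper's observation buys brevity, while your proof actually discharges the obligation that pruning discards only dead ends.
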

\addtocounter{theorem}{-1}
}


\section{Incremental Checking Correctness Proofs}
\label{app:incrmodcheck}

{
\renewcommand{\thelemma}{\ref{prop:label1state}}
\begin{lemma}
First, $\HoldsSink(q,M) \Leftrightarrow \exists t \in \traces(q):
t \models M$ for sink states $q$. Second, if $\labelGraph_K$ is a
correct labeling with respect to $\varphi$ and $\succs_K(q)$,
then \( \Holds_K(q,M,\labelGraph_K) \Longleftrightarrow \exists
t \in \traces_K(q): t \models M \). 
\end{lemma}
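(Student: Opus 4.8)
The plan is to treat the two claims separately, since they concern sink and non-sink states, and to isolate one auxiliary fact about maximally-consistent sets that carries the weight of the harder second part. For the first claim I would fix a sink state $q$ and observe that its only trace is the self-loop $t = q\,q\,q\cdots$, so every suffix $t^i$ equals $t$. Since $\HoldsSink(q,M)=(\forall \psi \in M:\HoldsS(q,\psi))$ and $t\models M$ means $(\forall \psi\in M: t\models\psi)$, it suffices to show $\HoldsS(q,\psi)\Leftrightarrow t\models\psi$ for every $\psi\in\ecl(\varphi)$, which I would do by structural induction on $\psi$. The boolean cases are immediate from the induction hypothesis and the equations of Figure~\ref{holds_0}. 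The temporal cases are where the self-loop matters: because $t^i=t$ for all $i$, $t\models X\phi$ collapses to $t\models\phi$, and $t\models\phi_1\,U\,\phi_2$ collapses to $t\models\phi_2$; for release I would check that the collapse matches the $\HoldsS$ clause $\HoldsS(q,\phi_1)\lor\HoldsS(q,\phi_2)$ dictated by the release convention adopted in Figure~\ref{holds_0}. These three collapses are exactly the corresponding clauses of $\HoldsS$, closing the induction.

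For the second claim I would first establish a \emph{theory lemma}: if $M$ is maximally consistent and $t\models M$, then in fact $\psi\in M\Leftrightarrow t\models\psi$ for every $\psi\in\ecl(\varphi)$. This needs no induction: the forward direction is the hypothesis, and for the converse, if $t\models\psi$ but $\psi\notin M$ then $\neg\psi\in M$ (by maximal consistency, since $\ecl(\varphi)$ is negation-closed), so $t\models\neg\psi$, a contradiction. This fact is the crux of the argument, because it lets me discharge the self-referential terms $\phi_1\,U\,\phi_2\in M'$ and $\phi_1\,R\,\phi_2\in M'$ occurring in $\follows$ (which are not subformulas of anything smaller) without circular induction.

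I would then prove the equivalence in both directions. For ($\Leftarrow$), given $t\in\traces_K(q)$ with $t\models M$, I set $q'=t_1\in\succs_K(q)$ and let $M'$ be the set of formulas in $\ecl(\varphi)$ satisfied by the suffix $t^1$; this $M'$ is maximally consistent with $t^1\models M'$, hence lies in $\labelGraph_K(q')$ by correctness of the labeling. I verify $\lambda(q)=\AP\cap M$ from $t_0=q$ together with maximal consistency, and verify $\follows(M,M')$ clause by clause using the one-step unfolding identities of $X$, $U$, and $R$ (with $t^1$ the relevant suffix), which yields $\Holds_K(q,M,\labelGraph_K)$. For ($\Rightarrow$), given $\Holds_K(q,M,\labelGraph_K)$, I extract $q'$ and $M'$ with $\follows(M,M')$, use correctness of the labeling to obtain $t'\in\traces_K(q')$ with $t'\models M'$, and form $t=q\cdot t'\in\traces_K(q)$. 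I then show $t\models M$ by proving $\psi\in M\Leftrightarrow t\models\psi$ for all $\psi\in\ecl(\varphi)$ by structural induction: the boolean cases use $\lambda(q)=\AP\cap M$ and the induction hypothesis, while the temporal cases combine the $\follows$ clauses (pushing the obligation to $M'$), the theory lemma applied to $(M',t')$ (converting membership in $M'$ into satisfaction by $t^1=t'$), and the standard fixpoint unfolding of $X$, $U$, and $R$.

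The main obstacle I anticipate is exactly the temporal cases of the second part: $\follows$ encodes $U$ and $R$ by referring to the full formula in the successor label $M'$, so a naive structural induction cannot reach those terms. The theory lemma is what resolves this, turning ``$\phi_1\,U\,\phi_2\in M'$'' into the semantic statement ``$t^1\models\phi_1\,U\,\phi_2$'', after which the ordinary one-step unfolding of until and release finishes each case. I would also take care to read $\lambda(q)=\AP\cap M$ with $\AP$ ranging over the propositions occurring in $\varphi$, so that the atomic-proposition consistency check genuinely aligns the state label with $M$.
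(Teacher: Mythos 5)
Your proof is correct and follows essentially the same route as the paper's: the same sink/non-sink split with a structural induction for the sink case, and the same two constructions for the non-sink case (building $t = q \cdot t'$ from the successor's witness trace for one direction, and extracting the maximally-consistent set of formulas satisfied by $t^1$ for the other). Your ``theory lemma'' is exactly the paper's observation that a maximally-consistent $M$ with $t \models M$ equals $\{\psi \in \ecl(\varphi) \mid t \models \psi\}$; you simply isolate it and deploy it more explicitly to discharge the non-structural $U$/$R$ terms in $\follows$, a step the paper's inductive cases leave implicit.
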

\addtocounter{lemma}{-1}
}

\begin{proof}%
First, for sink states, observe that there is a unique trace $t$ in
$\traces(q)$, as $q$ is a 
sink state.  We first prove that $t \models \varphi$ iff
$\HoldsS(q,\varphi)$. We prove this by induction on the structure of
the LTL formula.  Then we observe that there is a unique
maximally-consistent set $M$ such that $t \models M$.  This is the set
$\{ \psi \mid t \models \psi \wedge \psi \in \ecl(\varphi)\}$.  We
then use the definition of $\HoldsSink(q, M)$ for sink states to
conclude the proof.

Now consider non-sink states: we first prove soundness, i.e., if
$\Holds_K(q,M,\labelGraph_K)$, 
then there exists $t \in traces(q)$ such that $t \models M$. 
We have $\Holds_K(q,M,\labelGraph_K)$ iff $(\lambda(q) = (\AP \cap
M))$ and there exists $q' \in \succs_K(M)$, and 
$M' \in \labelGraph_K(q')$ such that $\follows(M,M')$. By assumption
of the theorem, we have that if $M' \in 
\labelGraph_K(q')$, then there exists a trace $t'$ in $\traces(q')$
such that $t' \models M'$. 
Consider a trace $t$ such that $t_0 = q$ and
$t^1 = t'$. For each $\psi \in M$, we can prove that $t \models \psi$
as follows.  
The base case of the proof by induction is implied by the fact that
$q \models (\AP \cap M)$. The inductive cases are proven using the 
definitions of maximally-consistent set and the function $\follows$.  
We now prove completeness, i.e., that if there exists a trace $t$ in
$\traces_K(q)$ such that $t \models M$, then
$\Holds_K(q,M,\labelGraph_K)$ is true. Let $t$ be the trace $q q_1 q_2
\ldots$. 
It is easy to see that if $M$ is a maximally-consistent set,
and $t \models M$, then $M = \{ \psi \mid \psi \in \ecl(\varphi)
\wedge t \models \psi\}$.  
Let us consider the set of formulas $S = \{\psi \mid \psi \in
\ecl(\varphi) \wedge t^1 \models \psi\}$. 
Observe that $S$ is a maximally-consistent set.  
By assumption of the theorem, we have that $S$ is in
$\labelGraph_K(q_1)$. 
It is easy to verify that $\follows(M,S)$.
\end{proof}

{
\renewcommand{\thetheorem}{\ref{thm:relabel}}
\begin{theorem}
Let $V \subseteq Q$ be a set of vertices and $\labelGraph_K$ a correct
labeling with respect to $\varphi$ and $Q \setminus \ancestors_K(V)$.
Then $\relabel_K(\varphi,\labelGraph_K,V)$ is a correct labeling w.r.t.
$\varphi$ and $Q$.
\end{theorem}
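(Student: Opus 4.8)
The plan is to prove the statement by well-founded induction on $|\ancestors_K(V)|$, using the single-state correctness result (Lemma~\ref{prop:label1state}) as the atomic building block. Recall that $\labelNode_K(\varphi,v,\labelGraph_K)$ returns the correct label for $v$ whenever $\labelGraph_K$ is already correct on $\succs_K(v)$, and that for a sink $v$ this holds unconditionally, since $\labelNode_K$ then evaluates $\HoldsSink$, which consults no successor. Because the structure is DAG-like, I would disregard the sink self-loops throughout (a sink's label never depends on its successors), so that $\ancestors_K$ and the one-step predecessor operation act on a genuine acyclic graph. The base case is immediate: if $V=\emptyset$ then $\relabel_K(\varphi,\labelGraph_K,\emptyset)=\labelGraph_K$ and $Q\setminus\ancestors_K(\emptyset)=Q$, so the hypothesis already supplies correctness on all of $Q$.

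For the inductive step, write $\labelGraph'_K$ and $V'=\{q\mid\exists v\in V\colon v\in\succs_K(q)\}$ for the one-step unfolding, so that $\relabel_K(\varphi,\labelGraph_K,V)=\relabel_K(\varphi,\labelGraph'_K,V')$. The central structural observation I would establish is which states pass from the ``possibly-stale'' region into the ``correct'' region in this single step, namely $\ancestors_K(V)\setminus\ancestors_K(V')$, and that this set is exactly the minimal elements $V_{\min}$ of $V$ --- those $v\in V$ admitting no path of length $\geq 1$ back to $V$. Indeed, any state reaching $V'$ reaches $V$, giving $\ancestors_K(V')\subseteq\ancestors_K(V)$; conversely, a state reaching $V$ by a nontrivial path reaches the penultimate node, which lies in $V'$, hence is in $\ancestors_K(V')$. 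So the states in $\ancestors_K(V)$ but not in $\ancestors_K(V')$ are precisely those whose only route to $V$ is trivial, i.e. the members of $V$ with no nontrivial path to $V$; in a DAG these are the height-minimal elements of $V$, and $V_{\min}\neq\emptyset$ whenever $V\neq\emptyset$.

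Next I would verify that $\labelGraph'_K$ is correct on $Q\setminus\ancestors_K(V')=(Q\setminus\ancestors_K(V))\cup V_{\min}$. On $Q\setminus\ancestors_K(V)$ the relabeling leaves $\labelGraph_K$ unchanged (these states are not in $V$, since $V\subseteq\ancestors_K(V)$), so correctness is inherited from the hypothesis. For $v\in V_{\min}$, every proper successor $q'\in\succs_K(v)$ satisfies $q'\notin\ancestors_K(V)$: if $q'$ could reach $V$, then $v\to q'\to\cdots$ would be a nontrivial path from $v$ to $V$, contradicting minimality. Hence $\labelGraph_K$ is correct on all of $\succs_K(v)$, and Lemma~\ref{prop:label1state} (with the sink case handled directly by $\HoldsSink$) shows $\labelGraph'_K(v)=\labelNode_K(\varphi,v,\labelGraph_K)$ is correct. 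Since $\ancestors_K(V')=\ancestors_K(V)\setminus V_{\min}$ and $V_{\min}\neq\emptyset$, we have $|\ancestors_K(V')|<|\ancestors_K(V)|$, so the induction hypothesis applied to $(\labelGraph'_K,V')$ yields correctness of $\relabel_K(\varphi,\labelGraph'_K,V')$ on $Q$, which is exactly the conclusion.

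The hard part will be resisting the natural but false claim that \emph{all} of $V$ becomes correctly labeled in one unfolding step. The non-minimal members of $V$ are in general relabeled against stale successor labels --- a successor may itself still be an uncorrected ancestor of $V$ --- so their freshly computed labels can be wrong. The argument survives only because those nodes stay inside $\ancestors_K(V')$ and are therefore \emph{not} yet asserted correct by the invariant; they are revisited as predecessors in later recursive calls, after their successors have been finalized. Thus the crux is to pin down that the newly-correct frontier is precisely $V_{\min}$ and that exactly those nodes enjoy already-correct successors, which is what makes the invariant ``correct on $Q\setminus\ancestors_K(V)$'' genuinely inductive and simultaneously furnishes the strictly decreasing measure $|\ancestors_K(V)|$ for termination.
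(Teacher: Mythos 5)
Your proof is correct, and it rests on the same two pillars as the paper's own argument: Lemma~\ref{prop:label1state} as the atomic step, and an induction that tracks a wave of correctly-labeled states propagating backwards from $V$. The difference is in how the induction is organized. The paper stratifies $\ancestors_K(V)$ by \emph{levels} (a node is at level $k$ if its longest simple path into $V$ has length $k$) and proves, by induction on the iteration count, the invariant that after $k$ iterations the labeling is correct on $(Q \setminus \ancestors_K(V)) \cup H_k$. You instead do well-founded induction on $|\ancestors_K(V)|$, showing that one unfolding of $\relabel_K$ re-establishes the hypothesis of the theorem itself, with $\ancestors_K(V)$ shrunk by exactly the minimal set $V_{\min}$ --- which is precisely the paper's level-$0$ stratum, so the two proofs advance the same frontier. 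What yours buys is a cleaner induction hypothesis (the theorem statement itself, no auxiliary level machinery) and an explicit proof of the facts the paper's one-line appeal to Lemma~\ref{prop:label1state} leaves implicit: that $\ancestors_K(V) \setminus \ancestors_K(V') = V_{\min}$, that successors of $V_{\min}$ nodes lie outside $\ancestors_K(V)$ so the lemma genuinely applies, and that the stale (possibly wrong) labels written onto non-minimal members of $V$ are harmless because those nodes remain inside $\ancestors_K(V')$, where the invariant asserts nothing. You are also right to discard sink self-loops: under the paper's literal definition of $V'$ the recursion would never terminate (e.g.\ $\modelCheck_K$ starts from $V = Q_f$ and every sink is its own predecessor), a point the paper's proof dodges silently via the word ``simple'' in its path-length definition, whereas your justification --- that a sink's label never consults its successors --- makes the repair principled. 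In short: same approach, different parameterization, with your version supplying the rigor the paper's sketch omits.
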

\addtocounter{theorem}{-1}
}

\begin{proof}%
We first note that only ancestors of nodes in $V$ are re-labeled---%
all the other nodes are correctly labeled by assumption on
$\labelGraph$. 
We say that a node $q$ is at level $k$ w.r.t. a set of vertices
$T$ iff the longest simple path from $q$ to a node in $T$ is $k$.    
Let $H_k$ be the set of nodes at level $k$ from $V$. 
We prove by induction on $k$ that at $k$-th iteration,
we have a correct labeling of $K$ w.r.t. $\varphi$ and
$(S \setminus \ancestors_K(V)) \cup H_k$, where $S$ is the set of
states of $K$. 
We can prove the inductive claim using Lemma~\ref{prop:label1state}. 
\end{proof}

{
\renewcommand{\thecorollary}{\ref{cor:modcheck}}
\begin{corollary}
First, $\modelCheck_K(\varphi)=true \iff K \models \varphi$. Second,
for $(K,K',U)$ and $\labelGraph_K$ as above, we have 
$\incrModelCheck(K,\varphi,U,\labelGraph_K)=\mathit{true} \iff
K \models \varphi$.   
\end{corollary}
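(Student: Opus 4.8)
The plan is to reduce both claims to Theorem~\ref{thm:relabel}, together with a single ``readout'' step connecting $\checkInitStates$ to trace satisfaction, and to handle the incremental case by a locality argument showing that the supplied pre-update labeling is already correct away from the ancestors of the updated nodes.

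For the monolithic claim, I would first verify the two facts the surrounding text hints at. The empty labeling $\lambda v.\emptyset$ is vacuously correct with respect to $\varphi$ and $\emptyset$, and since $K$ is complete and DAG-like, following successors from any state cannot cycle among non-sinks, so every state reaches a sink; hence $\ancestors_K(Q_f)=Q$ and $Q\setminus\ancestors_K(Q_f)=\emptyset$. Applying Theorem~\ref{thm:relabel} with $V=Q_f$ then yields that $\relabel_K(\varphi,\labelGraph^0_K,Q_f)$ is a correct labeling of $K$ with respect to $\varphi$ and all of $Q$.

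Next I would isolate the readout step: for any labeling $\labelGraph_K$ correct with respect to $\varphi$ and $Q$, the predicate $\checkInitStates_K(\labelGraph_K,\varphi)$ holds iff $K\models\varphi$. For the forward direction, each trace $t\in\traces(q_0)$ determines a unique maximally-consistent set $M_t=\{\psi\in\ecl(\varphi):t\models\psi\}$ (the uniqueness noted in the proof of Lemma~\ref{prop:label1state}), which by correctness of the labeling lies in $\labelGraph_K(q_0)$; since $\varphi\in\ecl(\varphi)$, forcing $\varphi\in M_t$ is exactly $t\models\varphi$. For the reverse direction, every $M\in\labelGraph_K(q_0)$ is witnessed by some $t$ with $t\models M$, whence $M=M_t$ by maximal consistency, so if all traces satisfy $\varphi$ then $\varphi\in M_t=M$. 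Combined with the previous paragraph this gives $\modelCheck_K(\varphi)=\mathit{true}\iff K\models\varphi$.

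For the incremental case (whose conclusion I read as $K'\models\varphi$), the crux is to show that $\labelGraph_K$, correct for $K$ on all of $Q$, remains correct for $K'$ on $Q\setminus\ancestors_{K'}(U)$, so that Theorem~\ref{thm:relabel} applies to $K'$ with $V=U$. The key observation is that if $q\notin\ancestors_{K'}(U)$ then no node of $U$ is reachable from $q$ in $K'$, so every transition along any trace from $q$ lies outside $U$, where $\delta$ and $\delta'$ agree; hence the subgraph reachable from $q$---and therefore $\traces(q)$---is identical in $K$ and $K'$, and the label carries over unchanged. Granting this, $\relabel_{K'}(\varphi,\labelGraph_K,U)$ is correct for $K'$ with respect to $Q'=Q$, and re-applying the readout step to $K'$ yields $\incrModelCheck(K,\varphi,U,\labelGraph_K)=\mathit{true}\iff K'\models\varphi$. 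The main obstacle I anticipate is precisely this locality argument: making rigorous that ``$\delta,\delta'$ differ only on $U$'' implies ``labels agree off $\ancestors_{K'}(U)$'', which requires arguing that a non-ancestor of $U$ in $K'$ cannot reach $U$ in $K$ either, so that its entire reachable subgraph coincides in both structures. The DAG-like complete shape and the fact that a label depends only on $\traces(q)$ via Lemma~\ref{prop:label1state} are what make this go through; everything else is bookkeeping over the definitions of $\relabel$, $\checkInitStates$, and $\ecl(\varphi)$.
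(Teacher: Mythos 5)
Your proof is correct and follows essentially the same route as the paper's: apply Theorem~\ref{thm:relabel} with $V=Q_f$ (using $\ancestors_K(Q_f)=Q$ for complete DAG-like structures), then read off satisfaction at the initial states via the definition of $\checkInitStates$. Where you go beyond the paper is the incremental claim, which the paper dispatches with ``the proof for incremental model checking is similar'': your locality argument---that $\labelGraph_K$ remains a correct labeling of $K'$ on $Q\setminus\ancestors_{K'}(U)$ because a non-ancestor of $U$ cannot reach $U$ in either structure and hence has an identical reachable subgraph in $K$ and $K'$---is exactly the hypothesis needed to invoke Theorem~\ref{thm:relabel} for $K'$, and your reading of the stated conclusion as $K'\models\varphi$ (rather than the statement's $K\models\varphi$, evidently a typo, since $\incrModelCheck$ is defined via $\checkInitStates_{K'}$) is the intended one.
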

\addtocounter{corollary}{-1}
}

\begin{proof}%
Using Theorem~\ref{thm:relabel}, and the fact that the set
$\ancestors_K(S_f)$ is the set $S$ of all states $K$, we obtain that
$\labelGraph_K = 
\relabel_K(\varphi,\labelGraph_K^0,S_f)$ is a correct labeling of $K$ with
respect to $\varphi$ and $S$. 
In particular, for all initial states $q_0$, we have that for all $M \subset
\ecl(\varphi)$, $m \in \labelGraph_K(q_0)$ iff there exists a trace $t \in
\traces_K(q_0)$ such that $t \models M$.  
We now use the definition
of $\checkInitStates$ to show that if $\checkInitStates$ returns
true, then there is no initial state $q_0$ such that there exists $M
\in \labelGraph_K(q_0)$ such that $\neg\varphi \in M$. 
Thus for all initial
states $q_0$, for all traces $t$ in
$\traces(t_0)$, we have that $t \models \varphi$. 

The proof for incremental model checking is similar. 
\end{proof}

\newcommand{\exprtable}[3]{
\begin{table}[h]
\centering
\pgfplotstabletypeset[%
font=\scriptsize,
fixed,precision=2,fixed zerofill,
columns/filename/.style={column name=File Name,verb string type},%
columns/rules/.style={column name=Rl. (\#),precision=0},%
columns/nt_rules/.style={column name=Nontriv. Rl. (\#),precision=0},%
columns/switches/.style={column name=Sw. (\#),precision=0},%
columns/nontriv/.style={column name=Nontriv. Sw. (\#),precision=0},%
columns/suc_one/.style={column name=Answer,precision=0},%
columns/tim_grph/.style={column name=Graph Gen. (s)},%
columns/time_one/.style={column name=Synth. Tot. (s)},%
columns/chk_time/.style={column name=Model Check. (s)},%
columns/wrm_time/.style={column name=Wt. Rem. (s)},%
columns/num_wts/.style={column name=Wt. (\#),precision=0},%
every head row/.style={after row=\midrule},
col sep=comma,
     columns={filename,suc_one,switches,nontriv,rules,nt_rules,tim_grph,chk_time,time_one,wrm_time,num_wts},
    ]{../../experiments/data/#1.csv}
\caption{#2}
\end{table}
}

}


\end{document}